\theoremstyle{plain}
\newtheorem*{namedthm}{\namedthmname}
\newcounter{namedthm}
\newtheorem{result}{Result}
\newtheorem{conjecture}{Conjecture}
\newtheorem{lemma}{Lemma}
\newtheorem{theorem}{Theorem}
\newtheorem{proposition}{Proposition}
\newtheorem{remark}{Remark}
\newcommand{\betti}[1] {{\beta}_{#1}}
\newcommand{\C}{\mathbb{C}}
\newcommand{\R}{\mathbb{R}}
\newcommand{\Sp}{\mathbb{S}}
\newcommand{\Z}{\mathbb{Z}}
\newcommand{\N}{\mathbb{N}}
\newcommand{\ii}{\mathbf{i}}
\newcommand{\norm}[1] {\|{#1}\|}
\newcommand{\dist}[2] {\|{#1}-{#2}\|}
\newcommand{\diff}[1] {\,{\rm d}{#1}}
\newcommand{\ee}            {\varepsilon}
\newcommand{\Skip}[1]       {}
\title{Counting Equilibria of the Electrostatic Potential}
\author{Herbert Edelsbrunner}
\address{ISTA (Institute of Science and Technology Austria), Kloster\-neu\-burg, Austria}
\email{Herbert.Edelsbrunner@ista.ac.at}
\author{Christopher Fillmore}
\address{ISTA (Institute of Science and Technology Austria), Kloster\-neu\-burg, Austria}
\email{cdfillmore@gmail.com}
\author{Gonçalo Oliveira}
\address{IST (Instituto Superior T\'ecnico), Universidade de Lisboa, Portugal}
\email{galato97@gmail.com}
\keywords{Electric field, electrostatic potential, Morse theory, equilibria, Voronoi tessellations, counting.}
\begin{document}

	\begin{abstract}
		In 1873, James C. Maxwell \cite{Max54} conjectured that the electric field generated by $n$ point charges in generic position has at most $(n-1)^2$ isolated zeroes.
		
		The first (non-optimal) upper bound was only obtained in 2007 by Gabrielov, Novikov and Shapiro, who also posed two additional interesting conjectures \cite{GNS07}. 
		
		In this article we give the best upper bound known to date on the number of zeroes of the electric field, and construct a counterexample to Conjecture~1.8 in \cite{GNS07} that the number of equilibria cannot exceed those of the distance function defined by the unit point charges.
		
		Finally, we note that it is quite possible that Maxwell's quadratic upper bound is not tight, so it is prudent to find lower bounds. Hence, we also explore examples and construct configurations of charges achieving the highest ratios of the number of electric field zeroes by point charges found to this day.
	\end{abstract}

	\maketitle

	\section{Introduction}
	\label{sec:1}
	
	\noindent \textbf{Context.}
	The \emph{electrostatic potential} generated by $n$ point charges located at $A_1, A_2, \ldots, A_n \in \R^3$ of magnitudes $\zeta_1, \zeta_2, \ldots, \zeta_n \in \R$ is the function $V \colon \R^3 \setminus \{A_1,A_2,\ldots,A_n\} \to \R$ defined by
	\begin{align}
		V (x) &= \sum\nolimits_{i=1}^n \frac{\zeta_i}{\dist{x}{A_i}},
		\label{eqn:V}
	\end{align}
	in which $\dist{x}{A_i}$ is the Euclidean distance between the two points.
	The \emph{electric field} is the gradient of this potential.
	In this article, we shall focus in the case in which all the charges $\zeta_i$ are positive.
	The zeros of the electric field are the \emph{equilibria} or the \emph{critical points} of the potential.
	They are also sometimes referred to as \emph{electrostatic points} as they are the points at which the electric field vanishes, thus corresponding to equilibrium positions for test charges.
	
	\smallskip
	A well-known Morse theoretic argument reviewed later in Section~\ref{sec:2}, shows that, for the generic location of $n$ point charges, the number of equilibria is at least $n-1$.
	For example, this is achieved by placing the point charges along a line. 
	One may wonder which configurations of charges generate the maximum number of equilibria. 
	This is at the core of the 1873 conjecture posed by Maxwell in his famous ``A Treatise on Electricity and Magnetism''.
	\begin{conjecture}[Maxwell in \cite{Max54}]
		\label{conj:Maxwell}
		The electrostatic potential defined by $n$ point charges in $\R^3$ has at most $(n-1)^2$ equilibria.
	\end{conjecture}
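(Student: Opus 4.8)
The plan is to recast the equilibrium condition as a real polynomial system and then to bound its number of genuine real solutions by $(n-1)^2$ through a careful analysis of the spurious solutions that any algebraic compactification introduces. Writing $r_i = \dist{x}{A_i}$, the equilibria are the zeros of
\begin{align}
	\nabla V (x) &= -\sum_{i=1}^n \zeta_i \frac{x-A_i}{r_i^3} .
	\label{eqn:field}
\end{align}
Since the odd powers $r_i^3$ are not polynomial in $x$, I would promote the distances to auxiliary variables, imposing $r_i^2 = \dist{x}{A_i}^2$ for $i=1,\ldots,n$, and clear denominators in \eqref{eqn:field} by multiplying with $\prod_{j} r_j^3$. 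This yields a square system of $n+3$ polynomial equations in the $n+3$ variables $(x,r_1,\ldots,r_n)$: the $n$ quadrics $r_i^2 = \dist{x}{A_i}^2$ together with the three components of $\sum_{i} \zeta_i (x-A_i)\prod_{j\neq i} r_j^3 = 0$. The equilibria of $V$ are exactly the real solutions with all $r_i>0$.

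Two structural facts constrain the picture from the outset. First, $V$ is harmonic on $\R^3\setminus\{A_1,\ldots,A_n\}$, so $\Delta V=0$ forces the Hessian at every nondegenerate critical point to have vanishing trace; hence there are no local maxima or minima, every equilibrium has Morse index $1$ or $2$, and the Morse/Poincaré–Hopf count reproduces the known lower bound of $n-1$ reviewed in Section~\ref{sec:2}. Second, this same alternating-sum identity gives no upper bound by itself, so the essential work is to control the total number of real saddles. For this I would pass to the toric (or projective) closure of the solution set whose fan is adapted to the Newton polytopes of the cleared system, compute the corresponding Bézout/Bernstein intersection number, and then subtract the contributions of every non-geometric component: solutions with some $r_i=0$ (i.e. $x=A_i$), solutions at infinity, solutions on the wrong sign branch $r_i<0$, and the strictly complex solutions.

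The difficulty is entirely in that last subtraction. The naive intersection number grows like $2^n (3n-2)^3$, astronomically larger than $(n-1)^2$, so proving the bound amounts to showing that all but $(n-1)^2$ of the intersection points are forced into the excluded loci or into strictly complex conjugate pairs. I expect no single cohomological computation to achieve this, and would instead attempt a deformation argument: starting from the collinear configuration, where exactly $n-1$ equilibria occur, track how equilibria are created and annihilated as the charges move, each such event occurring on the discriminant hypersurface where two equilibria of opposite index merge. Bounding the number of equilibria by $(n-1)^2$ would then follow from bounding the number of real branches meeting a generic configuration, i.e. from an estimate on the real geometry of this discriminant. Making that estimate rigorous, rather than heuristic, is exactly the step where all prior approaches have stalled; it is the genuine heart of Maxwell's conjecture, and I regard it — not the algebraic setup — as the true obstacle, consistent with the problem having resisted proof since 1873.
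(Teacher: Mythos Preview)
The statement you are trying to prove is presented in the paper as a \emph{conjecture}, not as a theorem; the paper does not prove it, and indeed emphasises that it has been open since 1873. What the paper does prove is the much weaker Result~\ref{thm:Main}, namely the upper bound $2^n(3n-2)^3$, and it does so by exactly the polynomial reformulation you describe in your first two paragraphs: introduce auxiliary variables $r_i$ subject to the quadrics $r_i^2=\dist{x}{A_i}^2$, clear denominators in $\nabla V=0$, and apply the affine B\'ezout theorem to the resulting square system of degrees $(3n-2,3n-2,3n-2,2,\ldots,2)$. So your algebraic setup is not new; it is the content of Section~\ref{sec:Upper bound}, and you yourself compute the same intersection number $2^n(3n-2)^3$.

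The genuine gap is precisely where you say it is, and you do not close it. Your proposal to pass from $2^n(3n-2)^3$ down to $(n-1)^2$ by ``subtracting the contributions of every non-geometric component'' and then by a deformation argument tracking equilibria from the collinear configuration across a discriminant hypersurface is a strategy, not a proof: you offer no mechanism to bound the number of real branches meeting a generic configuration, and you explicitly concede in your final sentence that this is ``exactly the step where all prior approaches have stalled.'' That concession is accurate and honest, but it means the proposal is an articulation of the difficulty rather than a resolution of it. In particular, nothing in your outline rules out configurations with more than $(n-1)^2$ real saddles, nor does it explain why the harmonic constraint~$\Delta V=0$---which as you note gives only the lower bound $n-1$---should force the vast majority of the B\'ezout intersection points off the real positive-$r_i$ sheet. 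Until that mechanism is supplied, Conjecture~\ref{conj:Maxwell} remains open.
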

	For $n=3$, this conjectured upper bound is attained by the vertices of an equilateral triangle in $\R^3$.
	Placing a unit charge at each of these three points, we get three $2$-saddles and one $1$-saddle, so a total of four equilibria of $V$. 
	
	\smallskip
	A general upper bound on the number of equilibria was only obtained in 2007 \cite{GNS07}, by using Kovanskii's theory of Fewnomials \cite{K}. However, it grows much faster than the quadratic upper bound conjectured by Maxwell. Indeed, it is given by $2^{2n^2}(3n)^{2n}$, which already for $n=3$ is $139314069504$. More recently, using the Thom-Milnor theorem, this was further improved in \cite{Z} to $5 \times 9^{3+n}$. As we shall see further below, one of our contributions in this article will be to improve this bound even further by a clever use of Bezout's theorem.
	
	Beyond the upper bound, in \cite{GNS07} Gabrielov, Novikov and Shapiro prove interesting results and pose very stimulating conjectures, which we will now explain and later address in this article.
	An important auxiliary tool in \cite{GNS07} is the family of potentials
	\begin{align}
		V_p(x) &= \sum_{i=1}^n \frac{\zeta_i^p}{\norm{x-A_i}^p},
		\hspace{0.1in} \mbox{\rm which satisfies} \hspace{0.1in}
		\lim_{p \to \infty} \frac{1}{\sqrt[p]{V_p (x)}}
		= \min_{1 \leq i \leq n} \frac{\norm{x-A_i}}{\zeta_i}.
		\label{eqn:Vp-intro}
	\end{align}
	The authors prove in Theorem 1.7 of that reference that for sufficiently large $p$, the number of equilibria of $V_p$ of any given index remains constant with increasing $p$.\footnote{In fact, said theorem proves that for sufficiently large $p$, the equilibria of $V_p$ are in correspondence with the so-called effective cells of positive codimension in the corresponding Voronoi diagram. Furthermore, the Morse index of such equilibria coincides with the dimension of the corresponding cell.} 
	Then, for $j \in \{0,1,2,3\}$, let $\#_j$ denote this limiting number of equilibria of $V_p$ for $p \gg 1$. 
	This result, plus the expectation that the number of equilibria of $V_p$ is non-decreasing with $p$, for all $p\geq 1$ leads to the following conjecture.
	\begin{conjecture}[Conjecture 1.8 (a) in \cite{GNS07}]
		\label{conj:1.8}
		For $n$ unit point charges in generic position, and all $p \geq 1$, the number of index $j$ equilibria of $V_p$ is at most $\#_j$.
	\end{conjecture}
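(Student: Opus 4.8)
\emph{Approach.} The abstract announces a counterexample to Conjecture~\ref{conj:1.8}, so the plan is not to prove it but to construct one, and a natural first target is the genuine Coulomb case $p=1$, where $V_1$ is the potential $V$ of \eqref{eqn:V} for $n$ unit charges; a counterexample there is also the most vivid, exhibiting an electrostatic potential with strictly more equilibria of some index than the nearest-site distance function $x\mapsto\min_i\norm{x-A_i}$ admits. Since $V_1$ is harmonic off the charges, the maximum principle for harmonic functions (Earnshaw's theorem) rules out index-$0$ and index-$3$ equilibria of $V_1$, so by the Morse-theoretic count recalled in Section~\ref{sec:2} its equilibrium numbers $c_j$ obey $c_2-c_1=n-1$ with $c_1,c_2\geq 0$. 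On the other side, Theorem~1.7 of \cite{GNS07} and the accompanying footnote identify the $\#_j$ with Voronoi--Delaunay data of the sites: $\#_0$ counts the Delaunay tetrahedra containing their circumcenter, $\#_1$ the Delaunay triangles whose circumcenter lies in the relative interior of the dual Voronoi edge, $\#_2$ the Gabriel edges of $\{A_1,\dots,A_n\}$, and $\#_3=0$. Since the minimum spanning tree is a subgraph of the Gabriel graph one always has $\#_2\geq n-1$, so to beat $\#_2$ the potential $V_1$ must carry a $1$-saddle; alternatively $\#_1$ may vanish while $c_1$ does not. (For $p>1$ one could instead aim at an index-$0$ equilibrium, a local minimum of $V_p$, which $p=1$ configurations never have but which may appear for intermediate $p$ when $\#_0$ is small.)

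To count the equilibria of $V_1$ for a candidate configuration we would start from a highly symmetric arrangement --- a regular ring of charges, a ring with an axial or a central charge, or a pair of parallel regular polygons --- chosen so that a symmetry group $G$ collapses $\nabla V_1=0$ onto low-dimensional fixed-point loci. On each mirror plane or rotation axis the equation $\nabla V_1=0$ becomes a transcendental system in one or two variables whose partial derivatives are rational functions, and a sign and monotonicity analysis can then enumerate its solutions completely; equilibria off the fixed loci occur in full $G$-orbits and are located by a $G$-equivariant elimination. The Morse index of each equilibrium is read off from the Hessian of $V_1$, and the resulting counts are checked against $c_2-c_1=n-1$, which simultaneously protects against an overlooked solution.

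In parallel we would compute the Voronoi tessellation of the sites, list its faces of positive codimension, and decide which are ``effective'' in the sense of the footnote: for a facet this is exactly the Gabriel condition on the dual Delaunay edge, while for an edge or a vertex it is the condition that the circumcenter of the dual Delaunay simplex lie in the relative interior of the cell. This yields the quadruple $(\#_0,\#_1,\#_2,\#_3)$, and comparison with the $c_j$ found above decides whether the configuration refutes the conjecture. Because symmetric configurations are typically non-generic (they contain cospherical subsets, in violation of the ``generic position'' hypothesis), the last step is to perturb to a nearby generic configuration and invoke continuity of the nondegenerate equilibria, stability of the strict inequalities $c_j>\#_j$, and stability of the Voronoi combinatorics to conclude that genericity is not the culprit.

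The main obstacle is the search itself: Conjecture~\ref{conj:1.8} holds for every obvious symmetric family --- collinear charges, the equilateral triangle, the regular simplex --- so one must engineer geometry in which the smooth superposition $V_1$ is forced to carry an equilibrium with no counterpart among the effective Voronoi cells, i.e. a saddle that the piecewise-smooth distance function cannot see. The secondary obstacle is rigor: clearing denominators turns $\nabla V_1=0$ into a polynomial system of large degree, so to certify both the exact equilibrium count and the nondegeneracy of each equilibrium one needs either a complete reduction to a single-variable problem with a provable root count or a computer-assisted certification --- interval arithmetic, or a Bezout bound matched by a resultant computation in the spirit of our later sharpening of the bound towards Conjecture~\ref{conj:Maxwell}.
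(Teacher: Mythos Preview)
Your methodology matches the paper's: survey highly symmetric configurations, exploit the symmetry group to locate and classify the equilibria of $V=V_1$ on fixed-point loci, and compare with the effective-cell count of the Voronoi tessellation. Where you leave the search open, the paper identifies the specific counterexample: the $24$ vertices of the \emph{truncated octahedron}. For this configuration the distance function $E$ has $\#_1=14$ (one index-$1$ point per face: $6$ squares and $8$ hexagons) and $\#_2=36$ (one per edge), while $V_1$ has $c_1=18$ and $c_2=36$; hence $c_1>\#_1$ and even the total count is larger.

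The mechanism is worth recording, since your listed candidates (rings, rings with an axial charge, pairs of parallel polygons) are unlikely to produce it. The twelve edges of the truncated octahedron shared by two hexagons each carry \emph{two} equilibria of $V_1$ on the symmetry line through the edge midpoint and the center---a $1$-saddle and a $2$-saddle---while the eight hexagonal faces carry no $1$-saddle of $V_1$ at all. Thus $V_1$ trades the eight hexagon $1$-saddles that $E$ sees for twelve new edge-associated $1$-saddles, a net gain of four. This requires two distinct face types meeting in the right way, which is why the paper's broader survey of Archimedean solids (all of which, unlike the Platonic and Catalan solids, produce genuine $1$-saddles) succeeds where prisms and antiprisms do not.

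Your point about genericity is well taken and is in fact a step the paper glosses over: the truncated-octahedron configuration is non-generic (its center is a degenerate equilibrium, and the vertices are cospherical), so strictly speaking the counterexample to Conjecture~\ref{conj:1.8} as stated needs exactly the perturbation-and-stability argument you describe.
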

	As we shall see later in this introduction, we will provide a counterexample to this conjecture. On the other hand, for even $p$, one can easily obtain a linear upper bound on the number of equilibria of the restriction of $V_p$ to any line in Euclidean space. A stronger version of this statement was also conjectured in \cite{GNS07} as we shall now state. 
	
	\begin{conjecture}[Conjecture 1.9 in \cite{GNS07}]
		\label{conj:1.9}
		For all $p \geq 1$, the number of equilibria of the restriction of $V_p$ to any straight line in $\R^3$ is at most $2n-1$.
	\end{conjecture}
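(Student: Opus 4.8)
The plan is to recast the conjecture as a statement about the critical points of a function of one real variable and then to bound its number of local maxima. Fix a line $\ell \subset \R^3$, parametrize it by arc length $t$, let $t_i$ be the parameter of the foot of the perpendicular from $A_i$ to $\ell$, and let $d_i \ge 0$ be the distance from $A_i$ to $\ell$. Since $\norm{\ell(t)-A_i}^2 = (t-t_i)^2 + d_i^2$, the restriction of $V_p$ to $\ell$ is
\[
	f(t) \;=\; \sum_{i=1}^n \frac{\zeta_i^p}{\bigl((t-t_i)^2 + d_i^2\bigr)^{p/2}} \;=\; \sum_{i=1}^n c_i\, q_i(t)^{-p/2},
	\qquad c_i := \zeta_i^p > 0, \quad q_i(t) := (t-t_i)^2 + d_i^2 .
\]
For charges in generic position all $d_i > 0$, so $f$ is smooth and strictly positive on $\R$ with $f(t) \to 0$ as $t \to \pm\infty$; hence the critical points of $f$ alternate between local maxima and local minima, the first and last being maxima, and if $f$ has $M$ local maxima it has exactly $2M-1$ critical points. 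Thus Conjecture~\ref{conj:1.9} is equivalent to the assertion that, for every line, every generic configuration, and every $p \ge 1$, the function $f$ has at most $n$ local maxima --- which is what the remaining steps would establish.

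First I would record the two ends of the parameter range, both to fix intuition and because they furnish the constant $n$. As $p \to \infty$ one has $f(t)^{-2/p} \to \min_i q_i(t)$, the lower envelope of $n$ congruent upward parabolas; such an envelope attains a local minimum only at the vertex $t = t_i$ of a parabola active there, so it has at most $n$ local minima --- equivalently, $\ell$ meets each convex Voronoi cell of $\{A_1,\dots,A_n\}$ in at most one segment --- and $f$ has at most $n$ local maxima in the limit, recovering the effective-cell count of Theorem~1.7 in \cite{GNS07}. At the other end, as $p \to 0^+$ the relevant object is $\frac{d}{dt}\sum_i c_i \log q_i(t) = \sum_i c_i\, q_i'(t)/q_i(t)$, whose numerator over the common denominator $\prod_j q_j$ is a polynomial of degree exactly $2n-1$, again giving at most $2n-1$ critical points and at most $n$ maxima. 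The task is to interpolate the bound ``$n$ maxima'' uniformly for all $p$ in between.

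The principal tool I would use is the subordination identity $q_i(t)^{-p/2} = \Gamma(p/2)^{-1}\int_0^\infty s^{p/2-1} e^{-s d_i^2} e^{-s(t-t_i)^2}\diff{s}$, valid for $p > 0$, which realizes $f = \Gamma(p/2)^{-1}\int_0^\infty s^{p/2-1} h_s\diff{s}$ with $h_s(t) = \sum_i c_i e^{-sd_i^2} e^{-s(t-t_i)^2}$ a mixture of $n$ Gaussians of common precision $s$ with positive weights. For each fixed $s > 0$ one has $h_s'(t) = -2s\,e^{-st^2}\sum_i \gamma_i(s)(t-t_i)e^{2st_i t}$ with $\gamma_i(s) = c_i e^{-s(d_i^2+t_i^2)} > 0$, so the critical points of $h_s$ are the real zeros of an exponential-polynomial sum with $n$ distinct frequencies, each carrying a degree-one coefficient; by the classical bound on real zeros of such sums there are at most $2n-1$ of them, hence $h_s$ has at most $n$ local maxima --- the familiar mode bound for equal-variance Gaussian mixtures, itself a consequence of the variation-diminishing property of the Gaussian kernel. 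I would then try to promote this fibrewise bound to a bound on $f$ by viewing $f'$ as a single integral transform against the totally positive Gaussian kernel of a nonnegative measure supported on the $n$ vertical lines $\{\tau = t_i\}$ in the $(s,\tau)$ half-plane, whose sign-change structure along any monotone curve is controlled by $n$.

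The hard part will be precisely this last step: integrating over the auxiliary width $s$ does not in general preserve the number of sign changes of a family of functions, so ``$h_s$ has at most $n$ modes for every $s$'' does not by itself imply the same for $f$. I see two possible ways around this. The first is to establish the correct two-parameter total positivity for the kernel $K(t;(s,\tau)) = s(t-\tau)e^{-s(t-\tau)^2}$ restricted to the support of the relevant measure, so that the bound $2n-1$ survives the $s$-integration. The second is to prove monotonicity in $p$ of the number of critical points of $f$ for a fixed line --- the line-restricted analogue of Conjecture~\ref{conj:1.8}(a) --- and then conclude from the value at $p = \infty$; this could be attempted by differentiating in $p$ (one finds $\partial_p f = \tfrac12\sum_i \zeta_i^p q_i(t)^{-p/2}\log(\zeta_i^2/q_i(t))$) and tracking the creation and annihilation of critical points, although the failure of the ambient three-dimensional monotonicity that we establish elsewhere in this article is a warning that even the one-dimensional version may be delicate. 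In either approach the case of even $p$ is the most accessible, since there $f$ is rational and the degree bounds above already yield a linear (though non-optimal) count; I expect a complete proof of the uniform bound to rest on the total positivity input just described.
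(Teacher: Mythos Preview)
The statement you are attempting to prove is listed in the paper as a \emph{conjecture} and is not proved there. The paper's only contribution in this direction is much weaker: for even $p$, clearing denominators in $f'(t)$ yields a polynomial of degree $p(n-1)+2n-1$, and an appeal to the fundamental theorem of algebra gives that many real zeros (see Section~\ref{sec:4.3}). This bound is linear in both $n$ and $p$, not the $p$-independent bound $2n-1$ of Conjecture~\ref{conj:1.9}; the conjecture itself remains open in the paper.

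Your proposal is candid about its own status: it is a research programme, not a proof, and you have correctly located the gap. The reduction to ``$f$ has at most $n$ local maxima'' via the alternation of maxima and minima is clean and correct for $d_i>0$. The subordination identity $q_i^{-p/2}=\Gamma(p/2)^{-1}\int_0^\infty s^{p/2-1}e^{-sq_i}\diff s$ is valid, and for each fixed $s$ the equal-variance Gaussian mixture $h_s$ genuinely has at most $n$ modes. But, as you say yourself, passing from ``every $h_s$ has at most $n$ modes'' to ``the $s$-average $f$ has at most $n$ modes'' is the whole difficulty: positive superpositions can create modes, and nothing you have written controls this. Neither of your two proposed escape routes is carried out --- the two-parameter total-positivity claim for the kernel $s(t-\tau)e^{-s(t-\tau)^2}$ is only asserted, and the monotonicity-in-$p$ route is rendered doubtful by the very counterexample to Conjecture~\ref{conj:1.8} that the paper constructs. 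What you have, then, is a thoughtful line of attack on an open problem that goes well beyond the elementary degree bound the paper records for even $p$, but it is not yet a proof, and the paper offers nothing to compare it against.
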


	\medskip \noindent \textbf{Main Results.}
	As stated in the abstract, the key contributions of this article are:
	
	\begin{itemize}
		\item an improved upper bound on the number of isolated equilibria for the electrostatic potential;
		\item a new record for the ratio of the number of equilibria over the number of point charges;
		\item a counterexample to Conjecture~\ref{conj:1.8}; and
		\item a number of other interesting results and explorations.
	\end{itemize}
	
	We shall now start by stating our first main result, namely the new upper bound on the number of isolated equilibria for the electrostatic potential.
	
	\begin{result}\label{thm:Main}
		Let $n \in \mathbb{N}$, $A_2, \ldots , A_n \in \mathbb{R}^3$, and $\zeta_1, \ldots ,\zeta_n \in \mathbb{R} \backslash \{0\}$. Then, for the generic position of $A_1 \in \mathbb{R}^3$, the number of isolated critical points of $V$ is at most
		$$2^n \times (3n-2)^3 .$$
	\end{result}
	
	The proof of this result appears in Section \ref{sec:Upper bound}, together with an analogous but stronger result for the potentials $V_p$ for even $p$. For these, an easier application of the same method yields a polynomial upper bound.
	
	In Sections~\ref{sec:2}, we give a general discussion and exploration of examples obtained by placing unit charges at the vertices of several polyhedra. 
	It is here that we arrive at our first main result, which is a family of examples constructed in Sections \ref{sec:2.3} and \ref{sec:2.4}. 
	The conclusion of that construction is the following.
	\begin{result}[New record for number of equilibria]
		\label{result:Antiprism}  
		For any $\ee>0$, there are a sufficiently large integer, $n$, and point charges, $\{A_1, A_2, \ldots, A_n\} \subseteq \R^3$, such that
		\begin{align}
			V(x) &= \sum\nolimits_{i=1}^n \frac{1}{\norm{x-A_i}},
		\end{align}
		has $k$ equilibria with $\frac{k}{n} > \frac{25}{7}-\ee$.
	\end{result}
	Our construction gives a precise family of examples which, for each $\ell \in \N$ constructs a configuration of $n_\ell = 8^\ell$ points with unit charges in $\R^3$ that define an electrostatic potential with $\frac{25}{7} (n_\ell-1)$ equilibria. Recently, there has been interest in constructing closed geodesics on Calabi-Yau manifolds \cite{Douglas}, and this result may find applications to that purpose \cite{Oli}.
	
	\smallskip
	In Section~\ref{sec:3}, we study the potentials $V_p$ defined in \eqref{eqn:Vp-intro} and, after deducing some results, culminate with our second main result which is a counterexample to \cite[Conjecture~1.8(a)]{GNS07}.
	The conclusion of the discussion in Section~\ref{sec:3.2} can be summarised as follows.
	
	\begin{result}[Counterexample to Conjecture \ref{conj:1.8}]
		\label{result:Counterexample}
		Let $A_1$ to $A_{24}$ be the unit point charges at the vertices of the truncated octahedron---the Voronoi domain in the body centered cubic lattice---and consider the resulting $1$-parameter family of potentials
		\begin{align}
			V_p(x) &= \sum\nolimits_{i=1}^n \frac{1}{\norm{x-A_i}^p}.
			\label{eqn:result2}
		\end{align}
		Then, there is $p_*$ such the number of index-$1$ equilibria of $V=V_{1}$ is greater than that of $V_p$ for all $p>p_*$; that is: greater than $\#_1$.
	\end{result}
	In the above counterexample, the number of index-$2$ equilibria is equal to $\#_2$, so even the total number of equilibria of $V$ is larger than that of the limiting Euclidean distance function.
	
	\smallskip
	In addition to the main results, this article contains a number of other contributions, including a thorough exploration of the electrostatic potential generated by unit point charges positioned at the vertices of several interesting convex polyhedra.
	For example, we observe the following curious phenomenon: the configurations obtained by placing the unit charges at the vertices of Platonic and Catalan polyhedra never generate electric potentials with $1$-saddles, while for Archimedean polyhedra---which are dual to the Catalan ones---such $1$-saddles are always present. We also observe, for even $p$, the upper bound of $p(n-1)+ 2n-1$ on the number of equilibria of the restriction of $V_p$ defined by $n$ unit point charges to any line in $\R^3$.

	
	\section{The upper bound}\label{sec:Upper bound}
	
	This section contains a proof of the upper bound stated in our first main result \ref{thm:Main} is as follows. We start by computing the critical point equation and clearing out denominators in Lemma \ref{lem:Clearing out denominators}. The resulting equations can be turned into polynomial ones by a trick that formally introduces extra variables to encode the quantities given by square roots. This is the content of section \ref{sec:Polynomial}, where the key polynomial maps are defined. 
	
	To give an upper bound on the number of isolated solutions to these polynomial equations, we use Bezout's theorem, which gives such a bound from the degrees of the relevant polynomial equations. However, there is a subtlety involved in using this upper bound. It requires showing that non-degenerate critical points of $V$ correspond to non-isolated zeroes of the polynomial system. This is proven in \ref{ss:Isolated zeroes} and these preliminary arguments are then all put together in \ref{ss:proof} to finalize the proof of Theorem \ref{thm:Main}.

	\subsection{Towards polynomial equations}\label{sec:Polynomial}
	
	The electric potential generated by charges of magnitude $\zeta_i$ located at the points $p_i$, for $i=1, \ldots, n$, is
	$$V(x)=\sum_{i=1}^{n} \frac{\zeta_i}{r_i},$$
	where $r_i=|x-p_i|$. We shall write $p_i=(p_{i1},p_{i2},p_{i3})$ and $x=(x_1,x_2,x_3)$.
	
	\begin{lemma}\label{lem:Clearing out denominators}
		For $j=1,2,3$, let
		$$R_j(x)=\sum_{i=1}^{n} \zeta_i (x_j-p_{ij}) \prod_{\ell \neq i} r_\ell^3.$$
		A point $x\in \mathbb{R}^3$ is a critical point of $V$  if and only if it is a common zero of $R_1$, $R_2$, and $R_3$.
	\end{lemma}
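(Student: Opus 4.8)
The plan is to compute the gradient of $V$ directly and then clear denominators. Since $r_i = \norm{x - p_i} = \bigl((x_1-p_{i1})^2 + (x_2-p_{i2})^2 + (x_3-p_{i3})^2\bigr)^{1/2}$, we have $\partial r_i / \partial x_j = (x_j - p_{ij})/r_i$, and hence $\partial (1/r_i)/\partial x_j = -(x_j - p_{ij})/r_i^3$. Therefore
\begin{align*}
\frac{\partial V}{\partial x_j}(x) &= -\sum_{i=1}^n \frac{\zeta_i (x_j - p_{ij})}{r_i^3}.
\end{align*}
A point $x \in \R^3 \setminus \{p_1, \ldots, p_n\}$ is a critical point of $V$ exactly when all three of these partial derivatives vanish, i.e. when $\sum_{i=1}^n \zeta_i (x_j - p_{ij})/r_i^3 = 0$ for $j = 1,2,3$.

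Next I would multiply the $j$-th equation through by $\prod_{\ell=1}^n r_\ell^3$, which is a strictly positive quantity on the domain $\R^3 \setminus \{p_1,\ldots,p_n\}$. In the $i$-th summand the factor $r_i^3$ in the denominator cancels, leaving $\zeta_i (x_j - p_{ij}) \prod_{\ell \neq i} r_\ell^3$; summing over $i$ gives exactly $R_j(x)$. So on the domain of $V$, the vanishing of $\nabla V$ is equivalent to $R_1(x) = R_2(x) = R_3(x) = 0$ (up to the harmless overall sign $-1$).

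The one point that needs a little care — and the only real obstacle — is the behaviour at the charge locations $p_1, \ldots, p_n$, since there $V$ is undefined while the $R_j$ are perfectly good (they are even continuous, being built from $r_\ell^3$ which are smooth functions). I would argue that the statement ``$x$ is a critical point of $V$'' implicitly requires $x \in \domain{V} = \R^3 \setminus \{p_1,\ldots,p_n\}$, so the equivalence is only asserted there; alternatively, one checks directly that no $p_i$ is a common zero of $R_1, R_2, R_3$ for generic configurations, since near $x = p_i$ the dominant term of $R_j$ is $\zeta_i (x_j - p_{ij}) \prod_{\ell \neq i} r_\ell^3$, which vanishes at $p_i$, so this needs the subleading terms — but in any case this subtlety does not affect the counting argument, because the multiplicities that matter are those of the isolated zeroes of the polynomial system, treated in the later subsections. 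I would simply note that the equivalence holds on $\domain{V}$, which is all that is needed, and move on. The verification is then just the chain-rule computation above together with the positivity of $\prod_\ell r_\ell^3$.
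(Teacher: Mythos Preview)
Your proof is correct and follows essentially the same approach as the paper: compute $\partial V/\partial x_j$ via the chain rule and clear denominators by multiplying through by the strictly positive factor $\prod_\ell r_\ell^3$. Your additional discussion about the charge locations $p_i$ is more than the paper provides (and more than is strictly needed, since the lemma is only asserting the equivalence on $\domain{V}$), but it does no harm.
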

	\begin{proof}
		We can compute that $\partial_{x_j} r_i = \frac{x_j-p_{ij}}{r_i}$, then $\partial_{x_j} r_i^{-1} = -  \frac{x_j-p_{ij}}{r_i^3}$ and therefore
		$$\partial_{x_j} V(x) = - \sum_{i=1}^{n} \frac{\zeta_i}{r_i^3} (x_j-p_{ij}) . $$
		We can clear out denominators as follows
		\begin{align*}
			\partial_{x_j} V(x) & = - \sum_{i=1}^{n} \zeta_i (x_j-p_{ij}) \times \frac{\prod_{\ell \neq i} r_\ell^3}{\prod_{m=1}^n r_m^3}  \\
			& = - \frac{ \sum_{i=1}^{n} \zeta_i (x_j-p_{ij}) \prod_{\ell \neq i} r_\ell^3}{\prod_{m=1}^n r_m^3}.
		\end{align*}
	\end{proof}
	
	At this point we introduce the variables $u_1 , \ldots , u_n$ and rewrite the system from Lemma \ref{lem:Clearing out denominators} as the zeros of the following polynomial functions
	\begin{align*}
		\mathcal{R}_j(x,u) & = \sum_{i=1}^{n} \zeta_i (x_j-p_{ij}) \prod_{\ell \neq i} u_\ell^3 , \ \ \text{for} \ j \in \{1,2,3\}, \\
		\mathcal{Q}_m(x,u) & = u_m^2 - |x-p_m|^2 , \ \ \text{for} \ m \in \{1, \ldots ,n\}.
	\end{align*}
	Hence, $x$ is a critical point of $V$ if and only if there is $u \in \mathbb{R}^n$ such that $(x,u)$ is a zero of the polynomial map
	$$\mathcal{P}: \mathbb{R}^{3} \times \mathbb{R}^n \to \mathbb{R}^{3} \times \mathbb{R}^n$$ 
	given by 
	\begin{equation}\label{eq:Polynomial Map}
		\mathcal{P}(x,u) = \left( \mathcal{R}(x,u) , \mathcal{Q}(x,u) \right) .
	\end{equation}
	As polynomials in the variables $(x,u) \in \mathbb{R}^3 \times \mathbb{R}^n$, its components have degrees 
	$$\deg(\mathcal{R}_j)=1+3(n-1)=3n-2,$$
	for $j \in \{1,2,3\}$, and
	$$\deg(\mathcal{Q}_m)=2,$$
	for $m \in \{1, \ldots , n\}$.

	\subsection{Bezout's theorem on affine hypersurfaces}\label{sec:Bezout}
	
	The main tool we shall use is the following theorem.
	
	\begin{theorem}[Affine version of Bezout's theorem]\label{thm:Bezout}
		Let $P_1, \ldots , P_m \in \mathbb{R}[x_1, \ldots , x_m]$ be polynomials of degree $d_1 , \ldots , d_d$ respectively. Then, the number of non-degenerate zeroes of the system $P_1=0, \ldots , P_m=0$ is at most $d_1 \ldots d_m$.
	\end{theorem}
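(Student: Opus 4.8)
The plan is to derive this affine statement from the classical projective Bezout theorem, the only subtlety being the control of ``excess'' intersection. First I would complexify. By definition a non-degenerate zero $x^\ast$ of the system is a common zero of $P_1,\dots,P_m$ at which the Jacobian $\det\big(\partial P_i/\partial x_j\big)$ is nonzero, so by the holomorphic inverse function theorem $x^\ast$ is an isolated point of the complex variety $Z=\{z\in\C^m : P_1(z)=\dots=P_m(z)=0\}$, and the local intersection multiplicity of the $P_i$ there equals $1$. Since the isolated points of an algebraic set are exactly its zero-dimensional irreducible components, there are only finitely many such $x^\ast$; write $N$ for their number, so the claim is $N\le d_1\cdots d_m$.

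Next I would homogenize: let $\widehat P_i\in\C[x_0,\dots,x_m]$ be the degree-$d_i$ homogenization of $P_i$ and $\widehat Z=\{[x_0:\dots:x_m]\in\mathbb{P}^m(\C) : \widehat P_1=\dots=\widehat P_m=0\}$, which on the chart $\{x_0\ne 0\}$ restricts to $Z$. If $\widehat Z$ is finite, then the $m$ hypersurfaces $\{\widehat P_i=0\}$ in $\mathbb{P}^m$ meet in the expected dimension $0$, the intersection is proper, and projective Bezout gives $\sum_{q\in\widehat Z}\mu_q=d_1\cdots d_m$ with every multiplicity $\mu_q\ge 1$; hence $N\le|\widehat Z|\le d_1\cdots d_m$. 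The work is in the case where $\widehat Z$ has positive-dimensional components, and here I would use a small generic perturbation. Fix polynomials $g_1,\dots,g_m$ with $\deg g_i=d_i$ whose leading forms $g_i^{\mathrm{top}}$ are chosen generically, and set $P_i^\varepsilon=P_i+\varepsilon g_i$ for a small $\varepsilon\ne 0$. Two facts then hold. First, because the Jacobian of $(P_1,\dots,P_m)$ is invertible at each $x^\ast$, the implicit function theorem produces a zero of the perturbed system $P^\varepsilon=0$ in a small neighborhood of each $x^\ast$, so the complex zero set of $P^\varepsilon$ has at least $N$ points. Second, the leading form of $P_i^\varepsilon$ is $P_i^{\mathrm{top}}+\varepsilon\, g_i^{\mathrm{top}}$, and for generic $g_i^{\mathrm{top}}$ these $m$ forms of degrees $d_1,\dots,d_m$ in $m$ variables have no common zero other than the origin for all but finitely many $\varepsilon$; shrinking $\varepsilon$ we may assume this. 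Then the projective closure $\widehat Z^\varepsilon$ meets the hyperplane $\{x_0=0\}$ trivially, so $\widehat Z^\varepsilon\subseteq\{x_0\ne0\}$ is a complete variety sitting inside an affine chart, hence finite. Applying projective Bezout to this proper intersection bounds $|\widehat Z^\varepsilon|$ by $d_1\cdots d_m$, and therefore $N\le d_1\cdots d_m$.

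The main obstacle is exactly this passage: plain projective Bezout yields the clean product $d_1\cdots d_m$ only for a proper, dimension-$0$ intersection, so one must remove any positive-dimensional components, both at infinity and in the affine part, while certifying that the non-degenerate zeros survive the process. The perturbation above handles this cheaply, forcing emptiness at infinity and then invoking that a complete affine variety is finite; alternatively one could cite the Bezout inequality, which bounds the number of isolated points of $\{\widehat P_1=\dots=\widehat P_m=0\}$ by $d_1\cdots d_m$ directly, or run an induction on $m$ realizing $\{P_1=\dots=P_{m-1}=0\}$ as a curve and intersecting it with $\{P_m=0\}$. I expect the perturbation argument to be the most self-contained and would present it, taking care over the genericity statement for the leading forms.
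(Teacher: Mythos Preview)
Your argument is correct, but there is nothing to compare it against: the paper does not prove this theorem. It is quoted as a tool and justified only by citation---to \cite{RW} for the complex statement (together with the one-line remark that a non-degenerate real zero stays non-degenerate after complexification) and to Khovanskii \cite{K} as an alternative reference.

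Your self-contained derivation from projective B\'ezout is standard and sound. The complexification and homogenization are routine; the real content is the case where $\widehat Z$ has positive-dimensional components, and your perturbation handles it cleanly. The two ingredients---persistence of non-degenerate zeros under small perturbations (implicit function theorem) and emptiness at infinity for the perturbed system once the leading forms $P_i^{\mathrm{top}}+\varepsilon g_i^{\mathrm{top}}$ are chosen with no common projective zero---are correctly invoked, and the finiteness of a complete variety sitting inside an affine chart is a standard fact. One small point to make explicit if you write this up: you also need $P_i^{\mathrm{top}}+\varepsilon g_i^{\mathrm{top}}\neq 0$ as a form (so the degree does not drop), but this excludes at most one $\varepsilon$ per index and is absorbed into the finite bad set you already avoid. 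As you note, citing the B\'ezout inequality directly (isolated points of $\bigcap\{\widehat P_i=0\}$ number at most $d_1\cdots d_m$) is an equivalent shortcut, and that is essentially what the reference to \cite{K} amounts to.
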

	
	A proof of this result for complex polynomials is given in \cite{RW} which states the result for complex polynomials, though the upper bound can be obtained by extending the polynomials to become complex (a non-degenerate zero of a real polynomial equation is also non-degenerate as a zero of its complexification). Alternatively, this result, as stated, is a particular case of the Theorem stated in page 12 of \cite{K}. This is the result used in \cite{GNS07} to give the previously found bound. However, here we are able to substantially improve it by working with polynomial, rather quasi-polynomial, equations.

	\subsection{Isolated zeroes}\label{ss:Isolated zeroes}
	
	The affine Bezout's theorem will be used to give an upper bound on the number of isolated zeroes of the polynomial map $\mathcal{P}$ defined in \ref{eq:Polynomial Map}. However, for this to give an effective bound on the number of non-degenerate zeroes of the electric potential $V$ we must show that they correspond to isolated zeroes of this polynomial map $P$. Our next result shows this to be the case if one of the charges charges can be chosen in generic position..
	
	\begin{lemma}\label{lem:Isolated zeroes v0}
		For the generic choice of $p_1$, any critical point $x$ of $V$ is such that $(x,r)$ is an isolated zero of $\mathcal{P}$ for $r=(r_1, \ldots , r_n)$.
	\end{lemma}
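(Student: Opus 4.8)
\medskip\noindent\emph{Proof proposal.}
The plan is to first eliminate the auxiliary variables $u$ locally, reducing the claim to the statement that, for generic $p_1$, every critical point of $V$ is isolated, and then to prove the latter by a Sard‐type argument in which only the location $p_1$ of the first charge is perturbed. For \emph{Step 1}, fix a critical point $x$ of $V$; since $V$ is defined only away from the charges, $r_m=\norm{x-p_m}>0$ for all $m$, and we look at the point $(x,r)\in\R^3\times\R^n$. In a neighbourhood of $(x,r)$ each coordinate $u_m$ stays close to $r_m>0$, so the equation $\mathcal{Q}_m(x',u)=u_m^2-\norm{x'-p_m}^2=0$ is equivalent there to $u_m=\norm{x'-p_m}$. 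Hence near $(x,r)$ the common zero set of $\mathcal{Q}_1,\dots,\mathcal{Q}_n$ is the graph $\{(x',r(x')): x'\text{ near }x\}$ with $r(x')=(\norm{x'-p_1},\dots,\norm{x'-p_n})$, and so the zero set of $\mathcal{P}$ there is $\{(x',r(x')):\mathcal{R}(x',r(x'))=0\}$. By the computation in Lemma~\ref{lem:Clearing out denominators}, $\mathcal{R}_j(x',r(x'))=-\bigl(\prod_m\norm{x'-p_m}^3\bigr)\,\partial_{x_j}V(x')$, and the prefactor is strictly positive; thus this graph is exactly the set of critical points of $V$ near $x$, embedded via $x'\mapsto(x',r(x'))$. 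Consequently $(x,r)$ is an isolated zero of $\mathcal{P}$ if and only if $x$ is an isolated critical point of $V$, so it suffices to prove that for generic $p_1$ all critical points of $V$ are isolated.

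For \emph{Step 2}, keep $p_2,\dots,p_n$ and $\zeta_1,\dots,\zeta_n$ fixed and set $\Omega=\{(x,p_1)\in\R^3\times\R^3: x\neq p_i,\ i=1,\dots,n\}$, an open subset of $\R^6$. Define $F\colon\Omega\to\R^3$ by $F(x,p_1)=\nabla_x V(x;p_1)$. The key point is that the $3\times 3$ block of $dF$ in the $p_1$–variables is invertible at every point. Indeed, only the $i=1$ summand of $\partial_{x_j}V=-\sum_i\zeta_i(x_j-p_{ij})\norm{x-p_i}^{-3}$ involves $p_1$, and a direct differentiation gives, with $r_1=\norm{x-p_1}$ and $\hat v=(x-p_1)/r_1$,
\[
\partial_{p_{1k}}\bigl(\partial_{x_j}V\bigr)=\frac{\zeta_1}{r_1^{3}}\bigl(\delta_{jk}-3\,\hat v_j\hat v_k\bigr),\qquad j,k\in\{1,2,3\}.
\]
The matrix with entries $\delta_{jk}-3\hat v_j\hat v_k$ has eigenvalue $1$ on $\hat v^{\perp}$ and $-2$ along $\hat v$, hence is invertible, and since $\zeta_1\neq 0$ so is the whole block. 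In particular $F$ is a submersion, so $Z:=F^{-1}(0)$ is a $3$‑dimensional submanifold of $\Omega$.

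For \emph{Step 3}, apply Sard's theorem to the projection $\pi\colon Z\to\R^3$, $(x,p_1)\mapsto p_1$: its set of critical values is Lebesgue‑null, and we take $p_1$ outside it. Since $\dim Z=3=\dim\R^3$, at such a regular value $\pi$ is a local diffeomorphism at every point of $\pi^{-1}(p_1)$, so that fibre is a discrete subset of $\Omega$; its image under $(x,p_1)\mapsto x$ is exactly the set of critical points of $V(\cdot\,;p_1)$, which is therefore discrete, i.e.\ each critical point is isolated. (The same observation forces the Hessian of $V$ to be nonsingular there, so the critical points are in fact nondegenerate.) Combined with Step 1 this proves the lemma.

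The step needing the most care is Step 2 — the observation that perturbing a single charge already makes the full gradient map $F$ a submersion, which rests entirely on the eigenvalue computation above — together with the (routine) bookkeeping needed to run Sard on the non‑product domain $\Omega$ rather than on a genuine ``configuration $\times$ parameter'' product. Steps 1 and 3 are then standard.
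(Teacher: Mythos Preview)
Your proof is correct and takes a genuinely different route from the paper's. The paper works directly with the full $(n+3)\times(n+3)$ Jacobian of $\mathcal{P}$ at $(x,r)$, arguing it has full rank because the diagonal blocks $\partial\mathcal{Q}/\partial u=\mathrm{diag}(2r_m)$ and $\partial\mathcal{R}/\partial x=\bigl(\sum_i\zeta_i\prod_{\ell\ne i}r_\ell^3\bigr)I_3$ are separately invertible, and then shows the scalar $\sum_i\zeta_i\prod_{\ell\ne i}r_\ell^3$ is nonzero for generic $p_1$ by checking that the map $p_1\mapsto\sum_i\zeta_i\prod_{\ell\ne i}r_\ell^3$ is a submersion along its zero set. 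You instead use the invertibility of $\partial\mathcal{Q}/\partial u$ to eliminate $u$ via the implicit function theorem, reducing to the question of whether critical points of $V$ itself are isolated, and then settle that by parametric transversality: the $3\times3$ block $\partial(\nabla_xV)/\partial p_1=\zeta_1 r_1^{-3}(I-3\hat v\hat v^{T})$ is always invertible, so Sard applied to the projection $F^{-1}(0)\to\R^3_{p_1}$ does the job. Your approach is more conceptual and directly delivers the stronger conclusion that for generic $p_1$ the critical points of $V$ are \emph{nondegenerate} --- which is precisely what the subsequent appeal to B\'ezout's theorem (Theorem~\ref{thm:Bezout}) on $\mathcal{P}$ actually needs. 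The paper's approach is more computational and avoids Sard, though as written it leans on the heuristic that two invertible diagonal blocks force the whole Jacobian to be invertible, which is not automatic given the nonzero off-diagonal blocks $\partial\mathcal{R}/\partial u$ and $\partial\mathcal{Q}/\partial x$.
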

	\begin{proof}
		The result will be proven by showing that, for generic $p_1$ and $x$ a critical point of $V$, the derivative of the map $\mathcal{P}$ has full rank, i.e $n+3$. In the standard basis of $\mathbb{R}^{3+n}$, such derivative is represented by the map
		$$\begin{pmatrix}
			\left(\frac{\partial \mathcal{R}_j}{\partial x_s}\right)_{js} & | & \left(\frac{\partial \mathcal{R}_j}{\partial u_q} \right)_{jq} \\
			---- & | & ---- \\
			\left( \frac{\partial \mathcal{Q}_m}{\partial x_s} \right)_{ms} & | & \left(\frac{\partial \mathcal{Q}_m}{\partial u_q} \right)_{mq}
		\end{pmatrix}.$$ 
		We shall now compute the partial derivatives, at the point $(x,r)$, as follows
		\begin{align*}
			\frac{\partial \mathcal{R}_j}{\partial x_s}  = \delta_{js} \sum_{i=1}^{n} \zeta_i \prod_{\ell \neq i} r_\ell^3 , \ \ \ \
			\frac{\partial \mathcal{R}_j}{\partial u_q}  = 3 u_q^2 \sum_{i=1}^n \zeta_i  \prod_{\ell \neq i,q} r_\ell^3 ,
		\end{align*}
		and 
		\begin{align*}
			\frac{\partial \mathcal{Q}_m}{\partial x_s}  = -2(x_r-p_{ms}) , \ \ \ \
			\frac{\partial \mathcal{Q}_m}{\partial u_q}  = 2r_m \delta_{mq} .
		\end{align*}
		Now, we must have that $r_\ell \neq 0$ for all $\ell \in \{ 1, \ldots , n \}$ as at an equilibrium $x$ cannot coincide with one of the charges $p_1, \ldots , p_n$. Therefore $\left(\frac{\partial \mathcal{Q}_m}{\partial u_q} \right)_{mq}$ has rank $n$ along the zero set of $\mathcal{P}$. On the other hand, $\left(\frac{\partial \mathcal{R}_j}{\partial x_s}\right)_{js}$ has rank $3$ provided that
		$$\sum_{i=1}^{n} \zeta_i \prod_{\ell \neq i} r_\ell^3 \neq 0.$$
		
		We shall now show that this is the case for the generic choice of $p_1 \in \mathbb{R}^3$. For that, it is enough to show that the map $F : \mathbb{R}^3 \to \mathbb{R}$ given by
		$$F(p_1)= \sum_{i=1}^{n} \zeta_i \prod_{\ell \neq i} r_\ell^3 , $$
		is a submersion at $p_1 \in \mathbb{R}^3$ such that $F(p_1)=0$ and $x \in \mathbb{R}^3$ as above. If the map $F$ is not a local submersion at such a point, all its partial derivatives must vanish. Then, by computing
		\begin{align*}
			\frac{\partial F}{\partial p_{1s}} & = \sum_{i=2}^n \zeta_i 2r_{1}^2 \frac{\partial r_1}{\partial p_{1s}} \prod_{\ell \neq 1,i} r_\ell^3 \\
			& = -2 \sum_{i=2}^n \zeta_i r_{1} (x_s-p_{1s}) \prod_{\ell \neq 1,i} r_\ell^3 \\
			& = -2 r_{1} (x_s-p_{1s}) \sum_{i=2}^n \zeta_i \prod_{\ell \neq 1,i} r_\ell^3 .
		\end{align*}
		we find that to be the case if and only if either
		$p_1=x$ or $\sum_{i=2}^n \zeta_i \prod_{\ell \neq 1,i} r_\ell^3 =0$. As mentioned before, the first condition is impossible as an equilibrium point cannot coincide with one of the charges. As for the second, together with $F(p_1)=0$, it would imply that both
		\begin{align*}
			\sum_{i=1}^n \zeta_i \prod_{\ell \neq i} r_\ell^3 & = 0 \\
			\sum_{i=2}^n \zeta_i \prod_{\ell \neq 1,i} r_\ell^3 & =0.
		\end{align*}
		The first equation above can also be written as
		$$\zeta_1 \prod_{\ell \neq 1} r_\ell^3 + r_1^3 \sum_{i=2}^n \zeta_i \prod_{\ell \neq 1,i} r_\ell^3 =0.$$
		Then, subtracting to it $r_1^3$ times the second equation gives
		$$\zeta_1 \prod_{\ell \neq 1} r_\ell^3  =0,$$
		which again, is impossible because an equilibrium point $x$ can not coincide with the location of a charge.
	\end{proof}

	\subsection{Finalizing the proof of Result \ref{thm:Main}}\label{ss:proof}
	
	By construction, and due to Lemma \ref{lem:Isolated zeroes v0}, the zeros of $\nabla V$ are in correspondence to isolated zeroes of 
	$$\mathcal{P}: \mathbb{R}^{3+n} \to \mathbb{R}^{3+n},$$
	for the generic choice of the point $p_1$. This is at most the number of isolated zeroes of its analytic extension $P: \mathbb{C}^{3+n} \to \mathbb{C}^{3+n}$ and Bezout's theorem \ref{thm:Bezout} guarantees that the number of such zeroes, counted with multiplicity, is the product of the degrees
	$$\deg(\mathcal{R}_1)\deg(\mathcal{R}_2)\deg(\mathcal{R}_3) \times \deg(\mathcal{Q}_1)  \ldots  \deg( \mathcal{Q}_n) .$$ 
	These degrees where computed in subsection \ref{sec:Polynomial} where we found this to be
	$$ (3n-2)^3 \times 2^n .$$

	\begin{remark}
		In \cite{Ki}, Bezout's theorem is also used to obtain an upper bound on the number of equilibria for planar charges. The procedure is, however very different. First, we consider any type of generically placed point charges, not just planar ones. Secondly, our procedure allows us to explicitly prove Lemma \ref{lem:Isolated zeroes v0}, which shows that the non-degenerate equilibria become isolated zeroes of the polynomial map $\mathcal{P}$. Thus allowing us to use Bezout's theorem to obtain an effective upper bound.
	\end{remark}

	\subsection{The modified potentials $V_p$}
	
	For $p>0$ let us instead consider the potential
	$$V_p(x)=\sum_{i=1}^n \frac{\zeta_i}{r_i^p}.$$
	
	\begin{proposition}\label{prop:Bound_p}
		For even $p=2r \in 2\mathbb{N}$, the number of isolated critical points of $V_p$ is at most 
		$$\left( r (n-1) +2-r \right)^3.$$ 
		In particular, the number of isolated critical points of $V_2$ is at most $n^3$.
	\end{proposition}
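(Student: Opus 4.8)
The plan is to exploit that for even $p=2r$ the power $r_i^p=\norm{x-p_i}^{2r}=\rho_i^r$, with $\rho_i:=\norm{x-p_i}^2$, is already a polynomial in $x$; hence, unlike in the case $p=1$ treated above, \emph{no} square-root-encoding auxiliary variables $u_i$ are needed and $V_p$ is an honest rational function. After reducing $\nabla V_p=0$ to a polynomial system, the bound will follow from the affine Bézout theorem (Theorem~\ref{thm:Bezout}); the only subtlety — and, as explained below, the crux — is getting the degree sharp enough.

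First I would \emph{linearise the spheres} by introducing a single auxiliary variable $s$ together with the equation $S(s,x):=s-x_1^2-x_2^2-x_3^2=0$: on the quadric $M=\{S=0\}\subset\R^4$ each $\rho_i$ coincides with the \emph{affine-linear} polynomial $\sigma_i(s,x):=s-2\sum_{l}p_{il}x_l+|p_i|^2$, which is exactly what keeps the degrees small. Critical points of $V_p$ on $\R^3$ correspond to critical points of $\widetilde V_p:=\sum_i\zeta_i\sigma_i^{-r}$ constrained to $M$; writing the Lagrange condition $\nabla\widetilde V_p=\lambda\nabla S$ and eliminating $\lambda$ produces the three equations $\partial_{x_j}\widetilde V_p+2x_j\,\partial_s\widetilde V_p=0$, which — using $\partial_s\sigma_i=1$ and $\partial_{x_j}\sigma_i=-2p_{ij}$, exactly as in Lemma~\ref{lem:Clearing out denominators} — become $\sum_i\zeta_i(x_j-p_{ij})\sigma_i^{-r-1}=0$. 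Clearing the common denominator $\prod_k\sigma_k^{r+1}$ gives the polynomials $E_j(s,x)=\sum_i\zeta_i(x_j-p_{ij})\prod_{k\neq i}\sigma_k^{r+1}$, and an isolated critical point $x^\ast$ of $V_p$ yields the isolated zero $(\norm{x^\ast}^2,x^\ast)$ of the system $E_1=E_2=E_3=S=0$ in $\R^4$; passing to the complexification, such isolated zeros are counted (with multiplicity) by Theorem~\ref{thm:Bezout}. Note that no genericity hypothesis is needed here, since $S=0$ pins down $s$ and recovers $\nabla V_p=0$ on $M$.

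Now $\deg E_j=(r+1)(n-1)+1$ and $\deg S=2$, so a naive application only gives $2\bigl((r+1)(n-1)+1\bigr)^3$, which is polynomial in $n$ but far from the stated $\bigl(r(n-1)+2-r\bigr)^3$. \textbf{The hard part is precisely to account for the highly degenerate geometry at infinity.} The top-degree form of $E_j$ is (a constant multiple of) $x_j\,G$, where $G:=\sum_i\zeta_i\prod_{k\neq i}\bigl(s-2\sum_l p_{il}x_l\bigr)^{r+1}$ is \emph{independent of $j$}; thus the three $E_j$ share the common leading factor $G$ of degree $(r+1)(n-1)$, while $S$ has leading form $-(x_1^2+x_2^2+x_3^2)$. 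Consequently the projective closures in $\mathbb{P}^4$ of the four hypersurfaces all contain the positive-dimensional locus $\{x_0=0,\ G=0,\ x_1^2+x_2^2+x_3^2=0\}$ at infinity (together with the isolated point $[0:1:0:0:0]$), which absorbs the bulk of the Bézout number; the loci $\{\sigma_m=\sigma_{m'}=0\}$ on which all $E_j$ vanish are likewise positive-dimensional and so contribute no isolated zeros. A refined (excess-intersection) Bézout count — subtracting the intersection multiplicity carried by the locus at infinity, equivalently computing the \emph{effective} degree $r(n-1)+2-r$ that remains after the common leading factor $G$ is cancelled against the quadric — then gives the bound $\bigl(r(n-1)+2-r\bigr)^3$, and specialising $r=1$ yields $n^3$ for $V_2$. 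Everything up to this last step is routine calculus and degree bookkeeping; pinning down the exact multiplicity along the infinite locus (hence the precise shape $r(n-1)+2-r$ of the exponent) is where the real work lies.
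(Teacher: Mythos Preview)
Your route diverges sharply from the paper's and leaves the decisive step undone.

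The paper's argument is far more direct than yours. Since $p=2r$ is even, each factor $r_m^{p+2}=(|x-p_m|^2)^{r+1}$ is \emph{already} a polynomial in the three variables $x_1,x_2,x_3$; the paper simply clears denominators in $\partial_{x_j}V_p=0$ to obtain three polynomials
\[
\tilde P_j(x)=\sum_{i=1}^n\zeta_i(x_j-p_{ij})\prod_{m\ne i}r_m^{2r+2}
\]
and applies the affine B\'ezout theorem in $\R^3$ immediately. There is no auxiliary variable $s$, no Lagrange multiplier on a quadric, and no excess-intersection analysis. Your linearisation $s=|x|^2$ is an unnecessary detour: it converts each degree-$2$ polynomial $\rho_i$ into a degree-$1$ polynomial $\sigma_i$, but at the price of an extra degree-$2$ equation $S=0$, so the B\'ezout product does not improve.

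More importantly, your proposal has a genuine gap. You correctly compute that the naive B\'ezout number $2\bigl((r+1)(n-1)+1\bigr)^3$ overshoots the target, and you correctly observe that the projective closures of $E_1,E_2,E_3,S$ share positive-dimensional components at infinity. But the entire content of the bound then rests on your ``refined (excess-intersection) B\'ezout count,'' which you only describe and do not perform. Your heuristic for why the residual count should be exactly $(r(n-1)+2-r)^3$---``cancelling the common leading factor $G$ against the quadric''---is not an argument: $\deg E_j-\deg G=1$, not $r(n-1)+2-r$, and no mechanism is given that produces that specific exponent. As you yourself say, ``this is where the real work lies''; the proposal stops precisely there.

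It is worth flagging that the paper's own proof is terse at the same point: it asserts $\deg\tilde P_j=1+(n-1)(r-1)$, whereas the evident degree of $\tilde P_j$ is $1+2(r+1)(n-1)$ with no cancellation exhibited (and the displayed identity $1+(n-1)(r-1)=r(n-1)+2-r$ already fails unless $n=r$). So the degree computation is delicate in either approach. The difference is that the paper commits to a three-variable system and a specific degree claim, while your proposal adds a fourth variable and defers the crux to an unspecified intersection-theoretic refinement.
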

	
	\begin{proof}
		Differentiating with respect to $x_j$ and clearing out denominators, we obtain
		\begin{align*}
			\frac{\partial V_p}{\partial x_j} & = -\sum_{i=1}^n \frac{\zeta_i}{r_i^{p+2}} (x_j-p_{ij}) \\ 
			& = - \frac{1}{\prod_{\ell=1}^n r_\ell^{p+2}} \sum_{i=1}^n \zeta_i (x_j-p_{ij}) \prod_{m \neq i} r_m^{p+2}.
		\end{align*}
		If $p=2r \in 2\mathbb{N}$ is even, then the numerator above defines a polynomial in $x$
		$$\tilde{P_j}(x)=\sum_{i=1}^n \zeta_i (x_j-p_{ij}) \prod_{m \neq i} r_m^{2r+2}$$
		of degree 
		$$1+(n-1) (r-1) = r (n-1) +2-r.$$ 
		Again, by complexifying it and using the affine version of Bezout's theorem we conclude that
		$$\# \left( \tilde{P_1}^{-1}(0) \cap \tilde{P_2}^{-1}(0) \cap \tilde{P_3}^{-1}(0) \right) \leq \left( r (n-1) +2-r \right)^3 .$$
		In particular, for $r=1$ we obtain the upper bound $n^3$.
	\end{proof}
	
	\begin{remark}
		The bound obtained in Proposition \ref{prop:Bound_p} is an improvement of the previously known bound of
		$$(1+2(1+r)(n-1))\times (1+4(r+1)(n-1))^2,$$
		which had been obtained in \cite{Z}.
	\end{remark}

	\section{Regular Configurations}
	\label{sec:2}
	
	This section starts out by using Morse theory to show that for non-degenerate potentials generated by $n$ unit point charges in $\R^3$, there are at least $n-1$ equilibria.
	There are however configurations with more than $n-1$ equilibria, and as mentioned in the introduction, the quadratic upper bound $(n-1)^2$ was conjectured by Maxwell~\cite{Max54}. 
	This bound is realised for $n=3$, for example when placing charges at the vertices of an equilateral triangle. 
	However, configurations that achieve this bound are not known for $n>3$, and numerical experiments suggest that for randomly placed charges the ratio of the number of equilibria over point charges is bounded and rather low. 
	Therefore, we shall now study special configurations of point charges, such as those obtained by placing them at the vertices of certain solids. 
	This study leads in Section~\ref{sec:2.3} to our first main result.
	
	\subsection{Morse Theory Considerations}
	\label{sec:2.1}
	
	An equilibrium of a smooth function is \emph{non-degenerate} if the Hessian at the point is invertible.
	A useful result of Morse--Cairns in \cite{MoCa69} yields that for the generic location of a single point charge, all equilibria of $V$ are non-degenerate.
	In this case, the Hessian has $0$, $1$, $2$, or $3$ negative eigenvalues, and in Morse theory, this number is called the \emph{index}; see e.g.\ \cite{Mil63}.
	Equilibria of index $0, 1, 2, 3$ are referred to as \emph{minima}, \emph{$1$-saddles}, \emph{$2$-saddles}, and \emph{maxima}.
	Importantly, the electrostatic potential in $\R^3$ is \emph{harmonic}; that is: its Laplacian vanishes at every point.
	To explain, it is not difficult to check that the Laplacian of the function $x \mapsto \sfrac{1}{\norm{x}}$ is harmonic on $\R^3 \backslash \lbrace 0 \rbrace$. Sums, translates, and multiples of harmonic functions are again harmonic, which implies that $V$ is harmonic.
	It follows that $V$ enjoys the \emph{mean value property}: $V(x)$ is the average of the values of $V$ on any sphere centered at $x$ that neither passes through nor encloses any of the $A_i$.
	Hence, if nonzero, $V$ has neither minima nor maxima in $\R^3 \setminus \{A_1, A_2, \ldots, A_n\}$.
	
	\smallskip
	An early application of Morse theory---carried out by Morse himself \cite{MoCa69}---gives a lower bound on the number of equilibria of $V$. 
	We shall now describe this argument, and its outcome under the assumption that $\zeta_i > 0$ for $1 \leq i \leq n$ and that all the equilibria are non-degenerate (as mentioned above, this can be achieved by simply perturbing one of the point charges). 
	Because $\R^3 \setminus \{A_1, A_2, \ldots, A_n\}$ is not compact, it is convenient to modify $V$ to get a Morse function, $M \colon \Sp^3 \to \R$, that is sufficiently similar to $V$ as follows:
	\begin{enumerate}
		\item compactify $\R^3$ to $\Sp^3$ by adding a point at infinity and extend $V$ by continuity, so it takes the value $0$ and has a local minimum there;
		\item for each $1 \leq i \leq n$, define $M$ inside a sufficiently small neighbourhood of $A_i$ such that $A_i$ is a maximum and the only equilibrium of $M$ in this neighbourhood;
		\item let $M$ coincide with $V$ outside these neighbourhoods and at the point at infinity.
	\end{enumerate}
	Let $m_p$ be the number of index-$p$ equilibria of $M$.
	After subtracting the one minimum at infinity and the $n$ maxima at $A_1$ to $A_n$, we get $m_0 + m_1 + m_2 + m_3 - (n+1)$ as the number of equilibria of $V$.
	By the Euler--Poincar\'{e} relation, we have $m_0 - m_1 + m_2 - m_3 = 0$ because the Euler characteristic of $\Sp^3$ vanishes.
	By construction, $m_3 = n$ and $m_0 = 1$, which thus implies
	$m_2 - m_1 = n-1$.
	Hence, the number of equilibria of $V$ is
	\begin{align}
		m_1+m_2 &= 2m_1 + n-1 \geq n-1 .
	\end{align}
	This lower bound is tight since placing $n$ point charges along a straight line in $\R^3$ defines a potential with only $n-1$ equilibria, all of which are $2$-saddles.
	
	\subsection{Platonic Solids and Local Homology}
	\label{sec:2.2}
	
	We consider the vertices of the Platonic solids and semi-regular polyhedra as potentially interesting point charge configurations.
	The numbers of equilibria given in Table~\ref{tbl:solids} were first obtained by numerical computations and graphical investigations, in which we discovered that all equilibria are non-degenerate, except the center of the solid, which unfolds into two or more such equilibria.
	Through lengthy but straightforward computations and arguments, we were then able to rigorously show the existence of the claimed equilibria. 
	As an example, we present the full proof for the case of the cube in Appendix~\ref{app:A}. 
	
	\begin{table}[hbt]
		\centering \footnotesize{
			\begin{tabular}{l||c|ccc}
				\multicolumn{1}{c||}{solid} & $f$-vector     & 2-saddles & 1-saddles & center  \\ \hline \hline
				Tetrahedron         & (4,6,4,1)    &  0 &  0 &   3    \\
				Cube                & (8,12,6,1)   & 12 &  0 &  -5    \\
				Octahedron          & (6,12,8,1)   &  0 &  0 &   5    \\
				Dodecahedron        & (20,30,12,1) & 30 &  0 & -11    \\
				Icosahedron         & (12,30,20,1) &  0 &  0 &  11
		\end{tabular} }
		\vspace{0.1in}
		\caption{\footnotesize  Numerical results for the electrostatic potential 
			defined by placing the point charges at the vertices of the five Platonic solids.
			The first entry in each $f$-vector is the number of point charges.
			The next two columns give the number of observed $2$- and $1$-saddles.
			In each case, the center is a degenerate equilibrium, for which we give the alternating sum of ranks in local homology.
			In each case, the number vertices exceeds this alternating sum plus the number of $2$-saddles minus the number of $1$-saddles by $1$.}
		\label{tbl:solids}
	\end{table}
	
	\smallskip
	We use the local homology at a point to determine whether it is an equilibrium or not.
	For $x \in \R^3 \setminus \{A_0, A_1, \ldots, A_n\}$, let $\ee > 0 $ be smaller than the distance between $x$ and any of the $n$ point charges, and write $B(x,\ee)$ and $S(x,\ee)$ for the closed ball and sphere with center $x$ and radius $\ee$.
	The \emph{local homology} of $V$ at $x$ is the limit, for $\ee$ going to zero, of the relative homology of the pair $\left( B^-(x,\ee), S^-(x,\ee) \right)$, in which $B^-(x,\ee) \subseteq B(x,\ee)$ and $S^-(x,\ee) \subseteq S(x,\ee)$ are the subsets of points $y$ with $V(y) \leq V(x)$.
	To visually represent these groups, we define a binary function on the unit sphere, which maps $u \in \Sp^2$ to \emph{black} if $V(x) < V(x+\ee u)$, and to \emph{white} if $V(x) \geq V(x+\ee u)$ for every sufficiently small $\ee > 0$.
	In words, the potential increases if we leave $x$ in a black direction, and it decreases or stays the same if we leave $x$ in a white direction; see the upper row in Figure~\ref{fig:binary_functions} for the binary functions of a non-critical point and the four types of non-degenerate equilibria.
	\begin{figure}[hbt]
		\centering
		\vspace{-0.3in}
		\hspace{-0.4in}
		\includegraphics[width=.28\linewidth]{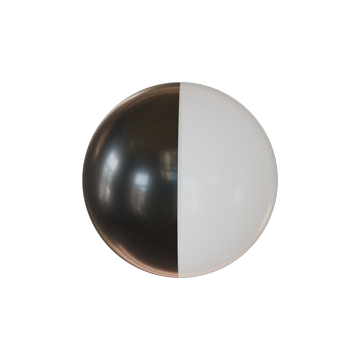}
		\hspace{-0.7in}
		\includegraphics[width=.28\linewidth]{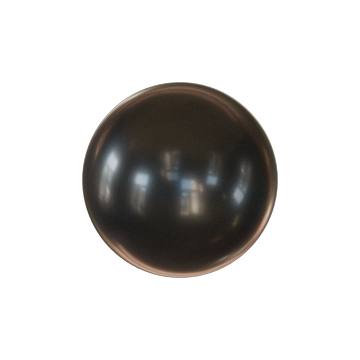}
		\hspace{-0.7in}
		\includegraphics[width=.28\linewidth]{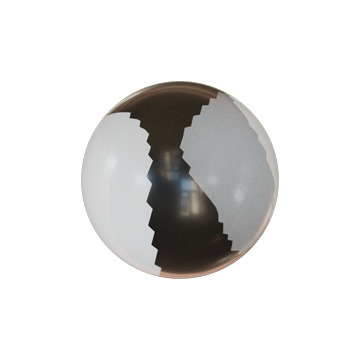}
		\hspace{-0.7in}
		\includegraphics[width=.28\linewidth]{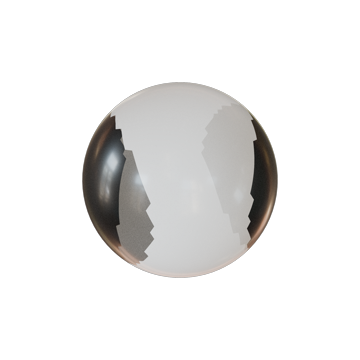}
		\hspace{-0.7in}
		\includegraphics[width=.28\linewidth]{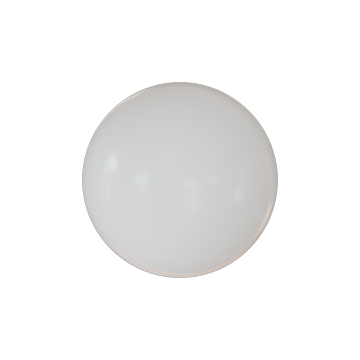}
		\hspace{-0.4in}
		\vspace{-0.6in} \\
		\hspace{-0.4in}
		\includegraphics[width=.28\linewidth]{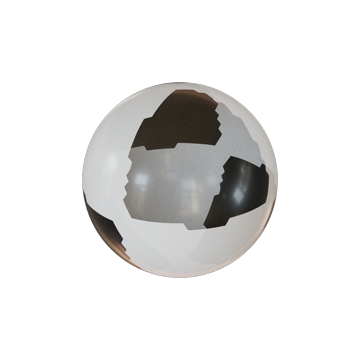}
		\hspace{-0.7in}
		\includegraphics[width=.28\linewidth]{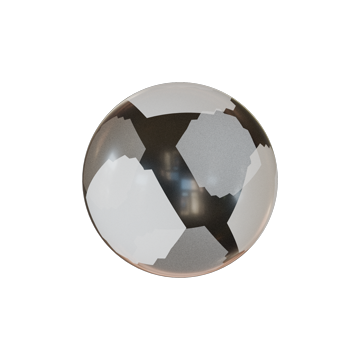}
		\hspace{-0.7in}
		\includegraphics[width=.28\linewidth]{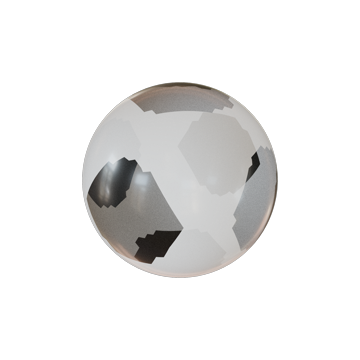}
		\hspace{-0.7in}
		\includegraphics[width=.28\linewidth]{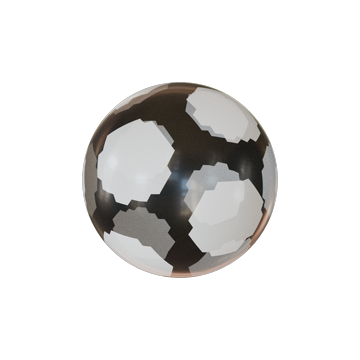}
		\hspace{-0.7in}
		\includegraphics[width=.28\linewidth]{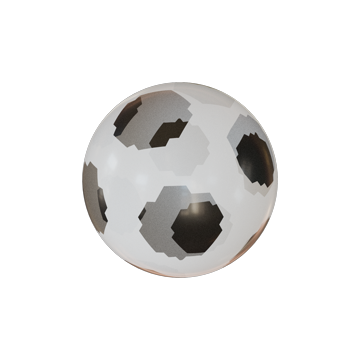}
		\hspace{-0.4in}
		\vspace{-0.3in}
		\caption{\footnotesize \emph{Upper row, from left to right:} the binary functions on the unit $2$-sphere for a non-critical point, a minimum, a $1$-saddle, a $2$-saddle, and a maximum.
			\emph{Lower row, from left to right:} the binary functions for the centers of the tetrahedron, cube, octahedron, dodecahedron, and isocahedron (these are degenerate equilibria and so are neither $1$-saddles nor $2$-saddles).}
		\label{fig:binary_functions}
	\end{figure}
	Writing $W \subseteq \Sp^2$ for the white points, the local homology of $x$ is the homology of the cone over $W$ relative to $W$; see the left and right halves of Table~\ref{tbl:ranks} for the ranks of these relative groups for the cases displayed in the first and second rows of Figure~\ref{fig:binary_functions}, respectively.
	For example, the rank vector of the tetrahedron is $(0,0,3,0)$, which is consistent with three $2$-saddles co-located at its center.
	It is also consistent with one minimum, four $1$-saddles, and six $2$-saddles co-located at the center, where the four $1$-saddles cancel with the minimum and three of the $2$-saddles.
	The latter choice is suggested by the face structure of the tetrahedron, which has one tetrahedron, four triangles, and six edges.
	\begin{table}[hbt]
		\centering \footnotesize {
			\begin{tabular}{l|rrrr||l|rrrr}
				&$p=0$& 1 & 2 & 3 &             &$p=0$& 1 &  2 &  3  \\  \hline \hline
				non-critical & 0 & 0 & 0 & 0 &  tetrahedron  & 0 &  0 &  3 & 0 \\
				minimum      & 1 & 0 & 0 & 0 &  cube         & 0 &  5 &  0 & 0 \\
				$1$-saddle   & 0 & 1 & 0 & 0 &  octahedron   & 0 &  0 &  5 & 0 \\
				$2$-saddle   & 0 & 0 & 1 & 0 &  dodecahedron & 0 & 11 &  0 & 0 \\
				maximum      & 0 & 0 & 0 & 1 &  icosahedron  & 0 &  0 & 11 & 0 \\
		\end{tabular} }
		\vspace{0.1in}
		\caption{\footnotesize  \emph{Left half:} the ranks in local homology at the points whose binary functions are illustrated in the upper row of Figure~\ref{fig:binary_functions}.
			\emph{Right half:} the ranks in local homology at the centers of the Platonic solids, with binary functions shown in the lower row of Figure~\ref{fig:binary_functions}.
			Observe that the last column in Table~\ref{tbl:solids} is the alternating sum of these ranks.}
		\label{tbl:ranks}
	\end{table}
	
	\smallskip
	Observe that every geometric realization of the solid yields the same local homology. 
	Indeed, a \emph{similarity}---which is a map $\theta \colon \R^3 \to \R^3$ that is the composition of a scaling, a rotation, a translation, and possibly a reflection---preserves the types of equilibria.
	To prove this claim, we write $x' = \theta (x)$ and $V' \colon \R^3 \setminus \{A_1', A_2', \ldots, A_n'\} \to \R$ for the potential defined by the point $A_i'$ with charges $\zeta_i' = \zeta_i$.
	\begin{lemma}
		\label{lem:similarity}
		Let $V, V'$ be the potentials defined by a finite set of point charges, before and after applying a similarity, respectively.
		Then there is a bijection between the equilibria of $V$ and $V'$
		such that corresponding equilibria have the same type.
	\end{lemma}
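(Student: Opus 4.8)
The plan is to take the bijection to be the similarity map $\theta$ itself and to reduce the whole statement to a single functional identity relating $V$ and $V'$. First I would record the elementary fact that a similarity $\theta(x) = \lambda O x + b$ (with $\lambda > 0$, $O$ orthogonal, $b \in \R^3$, allowing $\det O = -1$) scales all Euclidean distances by the factor $\lambda$, that is, $\norm{\theta(x) - \theta(y)} = \lambda \norm{x-y}$ for all $x,y$. Taking $A_i' = \theta(A_i)$ and $\zeta_i' = \zeta_i$, this gives at once $V'(\theta(x)) = \lambda^{-1} V(x)$ on $\R^3 \setminus \{A_1,\dots,A_n\}$, equivalently $V = \lambda\,(V' \circ \theta)$.

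Everything then follows formally from this identity. Differentiating, $\nabla V(x) = \lambda\,(D\theta)^{\top}\,\nabla V'(\theta(x))$ with $D\theta = \lambda O$ invertible, so $x$ is a critical point of $V$ precisely when $\theta(x)$ is a critical point of $V'$; hence $x \mapsto \theta(x)$ is the claimed bijection. To see that it preserves the type I would use the local-homology description of an equilibrium's type recalled just before the lemma, so that the degenerate equilibria (the centers of the Platonic solids) and the non-degenerate ones are handled uniformly. Since $\theta$ scales distances by $\lambda$, it sends $B(x,\ee)$ onto $B(\theta(x), \lambda\ee)$ and $S(x,\ee)$ onto $S(\theta(x),\lambda\ee)$; and since $V'(\theta(y)) \le V'(\theta(x))$ is equivalent to $V(y) \le V(x)$, it restricts to a homeomorphism of pairs $\bigl(B^-(x,\ee),\, S^-(x,\ee)\bigr) \to \bigl({B'}^-(\theta(x),\lambda\ee),\, {S'}^-(\theta(x),\lambda\ee)\bigr)$, compatibly as $\ee$ varies. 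Passing to the limit $\ee \to 0$ yields an isomorphism between the local homology of $V$ at $x$ and that of $V'$ at $\theta(x)$, so corresponding equilibria have the same type. For the non-degenerate equilibria one can shortcut this: the identity $\mathrm{Hess}\,V(x) = \lambda^3\, O^{\top}\,\mathrm{Hess}\,V'(\theta(x))\,O$ shows the two Hessians are congruent up to the positive factor $\lambda^3$, hence have equal signature, hence equal Morse index.

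I do not expect a genuine obstacle here, since the lemma is essentially a bookkeeping statement about how the data transform; but two small points deserve a sentence each. First, $\theta$ may reverse orientation, which is harmless because local homology is a topological invariant and the Hessian argument uses only congruence. Second, the homeomorphisms of pairs above are all induced by the single map $\theta$, so they are compatible with the direct system defining the local homology and the limit argument goes through.
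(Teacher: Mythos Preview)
Your proof is correct and follows essentially the same approach as the paper: both establish the identity $V'(\theta(x)) = \lambda^{-1} V(x)$ (the paper writes the scaling factor as $s$) and then argue that multiplication by a positive constant preserves the local homology at a point, so the bijection $x \mapsto \theta(x)$ does the job. Your version is more detailed—spelling out the gradient and Hessian relations and the homeomorphism of pairs—whereas the paper simply asserts the last step in one sentence.
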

	\begin{proof}
		A rotation, translation, and reflection does not change the value at a point, except that it transfers this value to the image of that point.
		This is not true for a scaling.
		Letting $s > 0$ be the scaling factor of the similarity, we have
		\begin{align}
			V'(x') &= \sum\nolimits_i \frac{\zeta_i'}{\dist{x'}{A_i'}}
			= \sum\nolimits_i \frac{\zeta_i}{s\dist{x}{A_i}}
			= \tfrac{1}{s} V(x) ,
		\end{align}
		at every $x \in \R^3$.
		Since multiplication with a positive constant preserves the local homology at a point, this establishes the claim for the bijection that maps an equilibrium to its image under the similarity.
	\end{proof}

	\subsection{Prisms and Anti-prism}
	\label{sec:2.3}
	
	Recall that the unit point charges at the vertices of the equilateral triangle in $\R^3$ define four equilibria, which implies that the ratio of equilibria over the number of vertices is $\sfrac{4}{3}$.
	Based on the numbers given in Table~\ref{tbl:solids}, we deduce that the corresponding ratios for the Platonic solids are $\sfrac{1}{4}$, $\sfrac{13}{8}$, $\sfrac{1}{6}$, $\sfrac{31}{20}$, and $\sfrac{1}{12}$, respectively, with the highest ratio for the cube (see also Appendix \ref{app:A} for a proof of the existence of these equilibria).
	We get an even higher ratio of $\sfrac{33}{12}$ for the cuboctahedron; see the second row of Table~\ref{tbl:Archimedean-solids}.
	Indeed, this is the highest ratio we observe for the regular and semi-regular polytopes surveyed in Section~\ref{sec:2.2} and Appendix~\ref{app:B}.
	
	\smallskip
	Beyond these families, we wish to consider the prisms and anti-prisms, which are parametrized by an integer, $k$, and a positive real number.
	A \emph{$k$-sided prism} is the Cartesian product of a regular $k$-gon and an interval.
	It has two regular $k$-gons and $k$ rectangles as facets.
	We call the apect ratio of the rectangles the \emph{relative height} of the prism.
	Lemma~\ref{lem:similarity} applies only if two prisms agree on the $k$ and on the relative height.
	Indeed, for the same $k$, the number of equilibria generated by unit point charges at the vertices of the prism depends on the relative height.
	We shed light on this dependence for the more interesting second family.
	A \emph{$k$-sided anti-prism} has two regular $k$-gons in parallel planes as facets, one rotated relative to the other so that there are $2k$ isosceles triangles that connect the $k$-gons and complete the list of boundary facets; see Figure~\ref{fig:anti-prism} for examples.
	The \emph{relative height} of the anti-prism is the distance between the two $k$-gons over the length of their edges.
	Note that for $k=3$ and the relative height chosen so that the isosceles triangles are equilateral, the $3$-sided or triangular anti-prism is the (regular) octahedron.
	
	\smallskip
	Our experiments show that for $k \geq 4$, the relative height of the $k$-sided anti-prism can be chosen so that the electrostatic potential for unit point charges placed at the vertices has one $2$-saddle for each edge, one $1$-saddle for each isosceles triangle, and one additional $1$-saddle at the center; see Figure~\ref{fig:anti-prism} for illustrations and Appendix~\ref{app:C} for analytic proofs of some of these equilibria.
	For $k \geq 4$, the ratio of equilibria over vertices is therefore
	\begin{align}
		\frac{1}{2k} (4k + 2k + 1) &= 3 + \frac{1}{2k} ,
	\end{align}
	For $k = 6$, this is the same ratio we get for the cuboctahedron after perturbing the point charges so that the degenerate equilibrium at the center unfolds into individual non-degenerate equilibria; see Section~\ref{sec:2.4} for details.
	To maximize this ratio, we minimize the number of sides, which suggests that the $4$-sided or square anti-prism is our best choice.
	In contrast, the equilibria for the $3$-sided or triangular anti-prism are always too close to distinguish using our methods.
	\begin{figure}[htb]
		\centering \vspace{-0.5in}
		\hspace{-1.1in}
		\includegraphics[width=.45\linewidth]{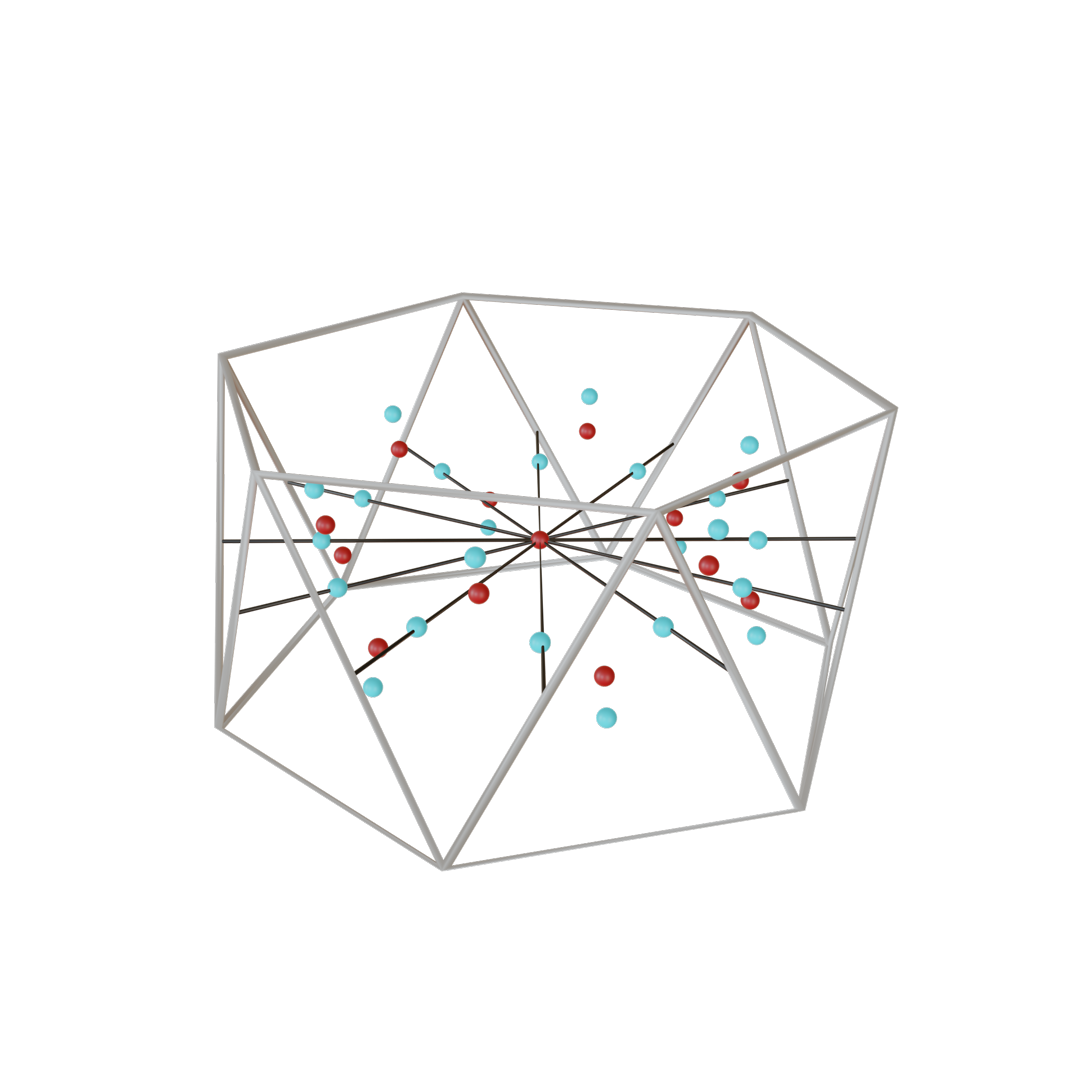} \hspace{-0.65in}
		\includegraphics[width=.45\linewidth]{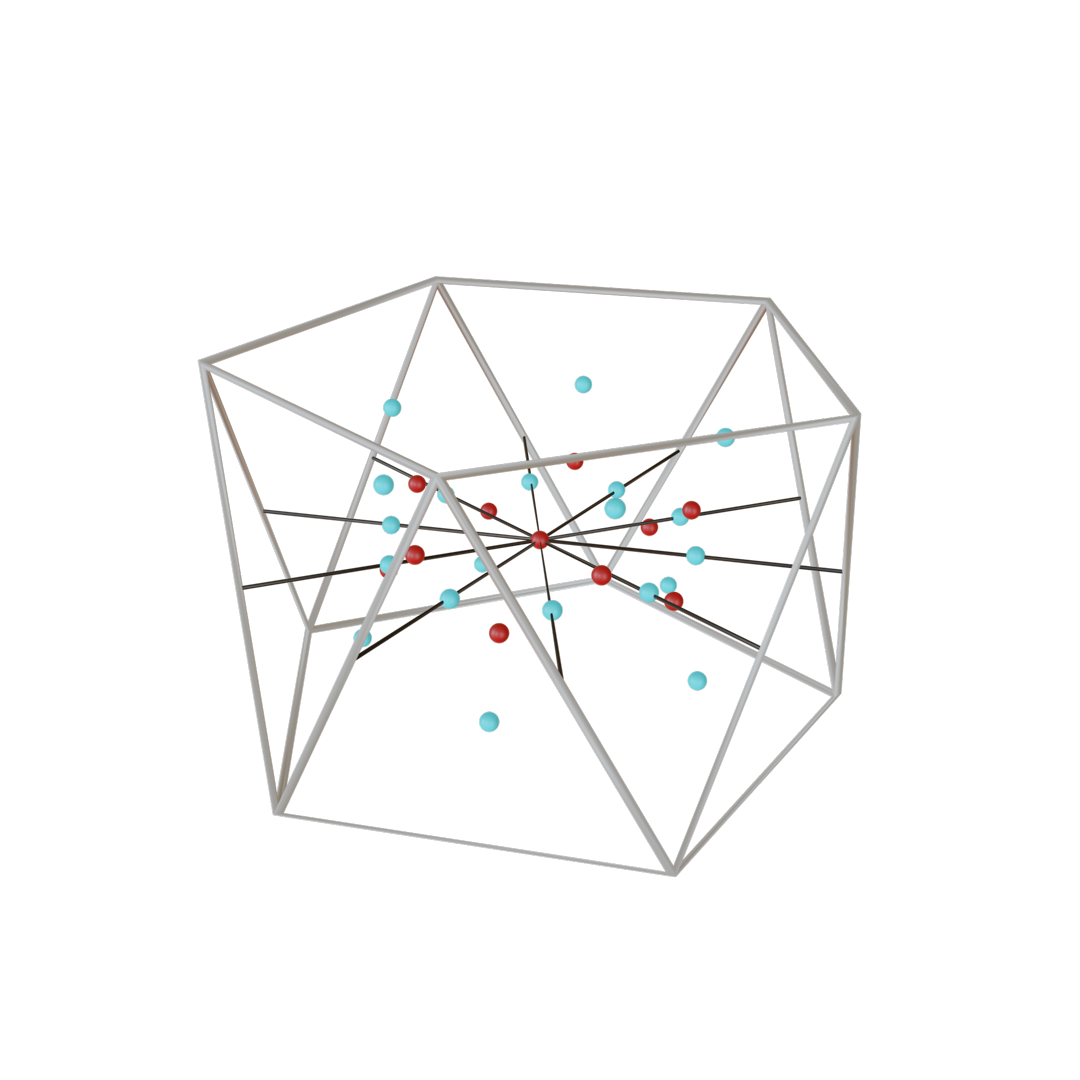} \hspace{-0.8in} 
		\includegraphics[width=.45\linewidth]{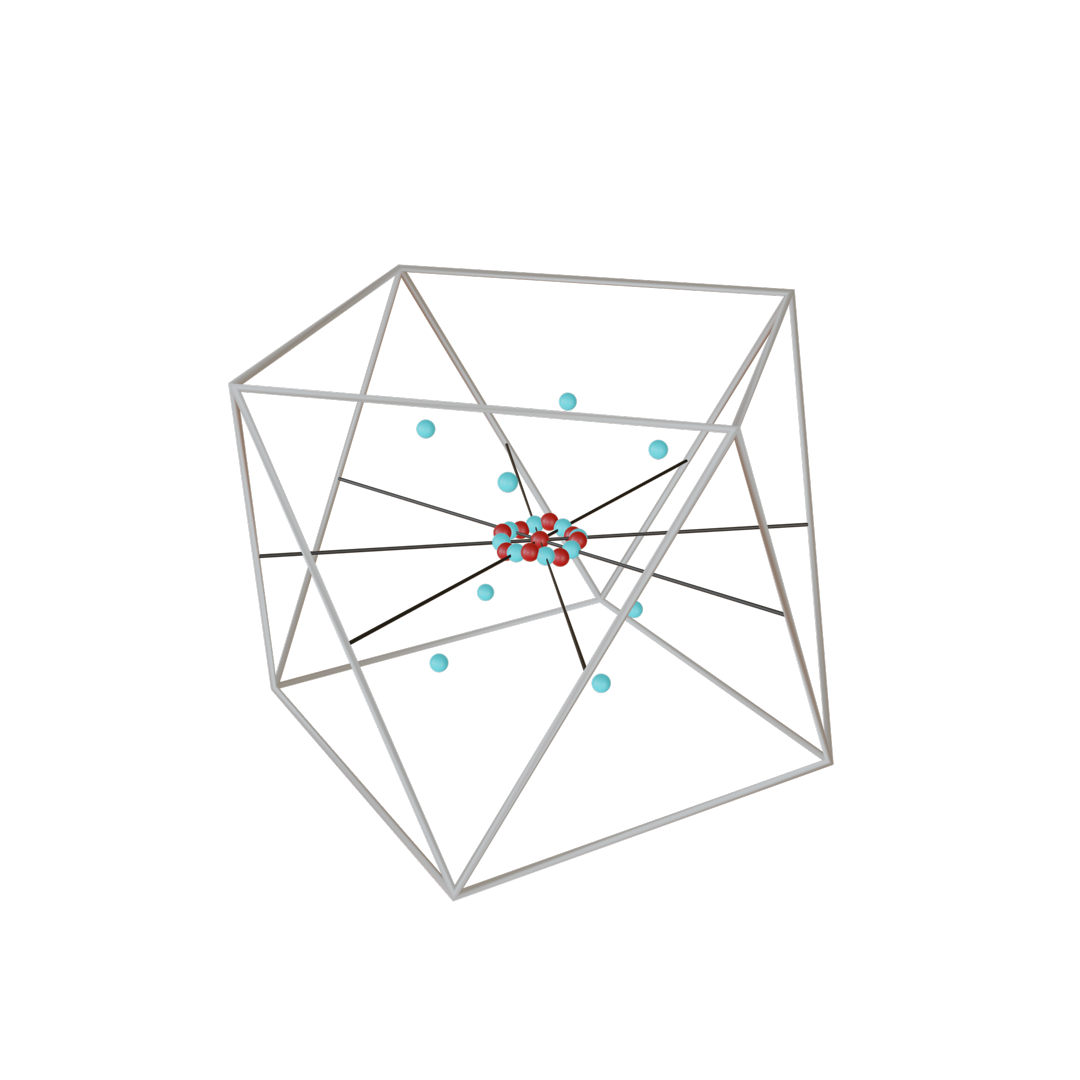} \hspace{-1.2in} 
		\vspace{-0.35in}
		\caption{\footnotesize \emph{From left to right:} the hexagonal, pentagonal, square anti-prisms with the heights chosen to maximize the number of equilibria.
			The ratios of equilibria over vertices are $\sfrac{37}{12} < \sfrac{31}{10} < \sfrac{25}{8}$, respectively.
			Observe how a ring of alternating $1$- and $2$-saddles gets successively more concentrated around the center.}
		\label{fig:anti-prism}
	\end{figure}
	
	\smallskip
	In addition to the results on the maximum number of equilibria for anti-prims, we discover an interesting universal phenomenon: the existence of a transition that occurs when the relative height of the prism or anti-prism is $\sqrt{2}$. 
	This transition is universal in the sense that it occurs for both prisms and anti-prims formed with any regular $k$-gon and always at relative height $\sqrt{2}$. 
	We prove this in Appendix~\ref{app:C} and describe it here together with an existence result for some other equilibria.
	
	\begin{theorem}
		\label{thm:prisms}
		Consider the electrostatic potential generated by unit point charges at the vertices of any prism or anti-prism. 
		\begin{enumerate}
			\item Its center is a non-degenerate equilibrium iff the relative height is not $\sqrt{2}$, in which case the center is a $1$-saddle for relative heights less than $\sqrt{2}$ and a $2$-saddle for relative heights larger than $\sqrt{2}$. 
			\item Moreover, if the relative height is larger than $\sqrt{2}$, there are two additional $1$-saddles along the axis of the prism or anti-prism.
			\item Furthermore, if the relative height is smaller than $\sqrt{2}$, there is an equilibrium on every line segment connecting the center of the anti-prism to the midpoint of a lateral edge.
		\end{enumerate}
	\end{theorem}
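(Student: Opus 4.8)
The plan is to exploit the large symmetry group of the solid to collapse each assertion to a one‑variable problem. First, place the axis of the prism or anti‑prism along the $z$‑axis, let the two regular $k$‑gons have circumradius $\rho$ and lie in the planes $z=\pm a$, and let the center be the origin $O$; then every vertex satisfies $\norm{A_i}=R:=\sqrt{\rho^2+a^2}$ and $(A_i)_z=\pm a$, and, by the definition of relative height recalled in Appendix~\ref{app:C}, ``relative height $\sqrt2$'' is precisely the condition $2a^2=\rho^2$. The symmetry group---the dihedral group $D_{kh}$ for the prism and $D_{kd}$ for the anti‑prism, $k\ge3$---fixes $O$ and admits no nonzero invariant vector (the $k$‑fold axial rotation forces such a vector onto the axis, and a horizontal $C_2$ rotation then reverses it), so $\nabla V(O)=0$ and $O$ is an equilibrium. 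More generally, $\nabla V$ is tangent to every line that is fixed pointwise by the rotations about it, so a critical point of the restriction of $V$ to such a line is a genuine equilibrium of $V$; this is the mechanism behind parts (2) and (3).

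For (1) I would compute $\operatorname{Hess}V(O)=\sum_i\bigl(-\norm{A_i}^{-3}I+3\norm{A_i}^{-5}A_iA_i^{\top}\bigr)$. This matrix commutes with the axial rotation, so for $k\ge3$ it has the form $\operatorname{diag}(\lambda,\lambda,\mu)$ in axis‑adapted coordinates; since $V$ is harmonic its trace vanishes, whence $\mu=-2\lambda$ and the eigenvalues are $(\lambda,\lambda,-2\lambda)$. Thus $O$ is non‑degenerate iff $\lambda\ne0$, with index $1$ when $\lambda>0$ and index $2$ when $\lambda<0$. Substituting $\norm{A_i}=R$ and $(A_i)_z^2=a^2$ gives $\mu=(\operatorname{Hess}V(O))_{zz}=2k(3a^2-R^2)/R^5=2k(2a^2-\rho^2)/R^5$, which changes sign exactly at $2a^2=\rho^2$, i.e.\ at relative height $\sqrt2$: for larger relative height $\mu>0$, so $\lambda<0$ and $O$ is a $2$‑saddle, and for smaller relative height $\mu<0$, so $\lambda>0$ and $O$ is a $1$‑saddle.

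For (2), suppose the relative height exceeds $\sqrt2$, so $\mu>0$. On the axis, $V$ restricts to the even function $\phi(z)=V(0,0,z)=k\bigl[(\rho^2+(z-a)^2)^{-1/2}+(\rho^2+(z+a)^2)^{-1/2}\bigr]$, with $\phi(0)=2k/R>0$, $\phi(z)\to0$ as $z\to\pm\infty$, $\phi'(0)=0$, and $\phi''(0)=\mu>0$. Hence $0$ is a strict local minimum of $\phi$, and $\phi$ attains a local maximum at some $z_0\in(0,\infty)$ and, by evenness, at $-z_0$; both lie on the axis and are therefore equilibria of $V$. At such a point $\operatorname{Hess}V$ equals $\operatorname{diag}(\nu,\nu,\phi''(z_0))$ with $2\nu=-\phi''(z_0)\ge0$ by harmonicity, so for relative heights at which this local maximum is non‑degenerate (hence $\phi''(z_0)<0$, all but finitely many values) the signature is $(+,+,-)$ and each is a $1$‑saddle. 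A short study of $\phi'$---all of whose zeros in $(0,\infty)$ lie in $(0,a)$, since $\phi'<0$ for $z\ge a$---then shows that there are exactly two, which is (2).

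For (3), suppose the relative height is below $\sqrt2$, so $\mu<0$ and $\lambda=-\mu/2>0$. Fix a lateral edge with endpoints $T,B$ and midpoint $M$, and set $\hat M=M/\norm{M}$. A short calculation shows that $M$ lies in the plane $z=0$ with $\hat M\perp(T-B)$ (so that $\norm{M}=\rho\cos(\pi/(2k))$ for the anti‑prism) and that the $\pi$‑rotation about the line through $O$ and $M$ swaps $T$ and $B$ and preserves the anti‑prism; hence that line is a symmetry axis, and the critical points of $\psi(t)=V(t\hat M)$ are equilibria of $V$. Because $\hat M$ is perpendicular to the axis it lies in the $\lambda$‑eigenspace of $\operatorname{Hess}V(O)$, so $\psi''(0)=\lambda>0$ and thus $\psi'(t)>0$ for small $t>0$. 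At the other endpoint, $\psi'(\norm{M})=-\sum_i(M-A_i)\cdot\hat M\,\norm{M-A_i}^{-3}$: the charges $T$ and $B$ contribute $0$ since $(M-T)\cdot\hat M=(M-B)\cdot\hat M=0$, while every remaining vertex has its $xy$‑projection of length $\rho$ at angular distance at least $3\pi/(2k)$ from $\hat M$, so that $A_i\cdot\hat M\le\rho\cos(3\pi/(2k))<\rho\cos(\pi/(2k))=\norm{M}$ and therefore $(M-A_i)\cdot\hat M>0$; hence $\psi'(\norm{M})<0$. The intermediate value theorem then yields $t^*\in(0,\norm{M})$ with $\psi'(t^*)=0$, giving an equilibrium on the open segment from $O$ to $M$; transporting it by the symmetry group, which acts transitively on the $2k$ lateral‑edge midpoints, produces one on every such segment. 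I expect the sign $\psi'(\norm{M})<0$ to be the main obstacle: it rests on the two geometric facts that $\overline{OM}$ meets its lateral edge orthogonally---so the two nearby, potentially dominant, charges $T$ and $B$ exert no force along $\overline{OM}$ at $M$---and that every remaining charge sits strictly ``inside'' $M$ along $\hat M$, so their net force at $M$ points back toward $O$. Isolating the right symmetry line and verifying these facts is the crux; parts (1) and (2) become routine once harmonicity supplies the missing eigenvalue.
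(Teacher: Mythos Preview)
Your proposal is correct and follows essentially the same route as the paper's proof in Appendix~\ref{app:C}: reduce by the dihedral symmetry to a one-variable problem, use harmonicity to relate the Hessian eigenvalues at the center, and then apply the intermediate value theorem along the axis (part~(2)) and along the $C_2$-axis through a lateral-edge midpoint (part~(3)). The only cosmetic differences are that for~(2) you invoke $\phi\to 0$ at infinity where the paper evaluates $\phi'(\pm h/2)$ explicitly, and for~(3) you read $\psi''(0)=\lambda$ directly off the Hessian from~(1) and use the clean angular inequality $A_i\cdot\hat M\le\rho\cos(3\pi/(2k))<\norm{M}$, whereas the paper carries out the equivalent Taylor expansion and trigonometric bound by hand; both arrive at the same sign change.
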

	
	\Skip{\begin{theorem}
			Consider the electrostatic potential generated by unit point charges at the vertices of any prism or anti-prism. Then, its center is a non-degenerate equilibrium iff the relative height is not $\sqrt{2}$, in which case the center is a $1$-saddle for relative heights less than $\sqrt{2}$ and a $2$-saddle for relative heights larger than $\sqrt{2}$. Moreover, in the latter case there are two additional $1$-saddles along the axis of the prism or anti-prism respectively.
		\end{theorem}
		
		The relative height $\sqrt{2}$ is also special in one extra result which we prove in Appendices \ref{app:C.3} and \ref{app:C.4}. This locates some equilibria for the triangular and square anti-prism which exist for relative heights below the critical value of $\sqrt{2}$.
		
		\begin{theorem}
			Consider the electrostatic potential generated by unit point charges at the vertices of a triangular or square anti-prism. Then, for relative heights smaller than $\sqrt{2}$, there is an equilibrium on any of the segments connecting the center of the anti-prism to the midpoints of every lateral edge.
		\end{theorem}
	} 

	\subsection{Increasing the Ratio}
	\label{sec:2.4}
	
	We can further increase the ratio in two steps. 
	The first of these arises from the realization that the equilibrium at the center of a Platonic solid is necessarily degenerate and we can perturb the location of the charges so that it unfolds into several non-degenerate ones. 
	Indeed, by Theorem~6.3 in \cite{MoCa69}, the point charges can be perturbed such that all equilibria are non-degenerate and arbitrarily close to the degenerate equilibria that give rise to them. 
	Setting $N = m_1 - m_2 + n - 1$, we note that this is the alternating sum of local homology ranks at the center.
	It is also a lower bound for the number of $1$-saddles and $2$-saddles the degenerate equilibrium unfolds into.
	For example, we get three $2$-saddles near the center of the tetrahedron, five $1$-saddles near the center of the cube, and so on.
	Perturbing the vertices of the five Platonic solid thus increases the ratios to $\sfrac{3}{4}$, $\sfrac{17}{8}$, $\sfrac{5}{6}$, $\sfrac{41}{20}$, and $\sfrac{11}{12}$, respectively.
	Perturbing the vertices of the cuboctahedron increases the ratio to $\sfrac{37}{12}$.
	Compare this with the ratios $\sfrac{37}{12} < \sfrac{31}{10} < \sfrac{25}{8}$ for the hexagonal, pentagonal, square anti-prisms displayed in Figure~\ref{fig:anti-prism}, which are obtained without any perturbation.
	
	\smallskip
	In the second step, we iteratively substitute the vertices of a much smaller copy of the solid for each vertex of the original solid.
	Iterating $\ell - 1$ times, we get a configuration with $\ell$ layers.
	Letting $n$ be the number of vertices of the solid and $m$ the number of equilibria defined by these $n$ charges, the number of point charges in the $\ell$ layers is $n_\ell = n^\ell$, and the number of equilibria is
	\begin{align}
		m \cdot \left( n^{\ell-1} + n^{\ell-2} + \ldots + 1 \right) &= m \cdot \frac{n^{\ell}-1}{n-1}
		= \frac{m}{n-1} \cdot (n_\ell-1) .
	\end{align}
	In particular, the ratio of the number of equilibria by electric charges in the $\ell$-th iteration is given by
	$\frac{m}{n-1} \cdot \frac{n_\ell-1}{n_\ell}$,
	which for $\ell=1$ coincides with the initial ratio of $\frac{m}{n}$, and as $\ell \to + \infty$ converges to $\frac{m}{n-1}$.
	In all examples we have computed, the iterated square anti-prism achieves the largest fraction $\sfrac{25}{7}$ in the limit, when $\ell \to \infty$.
	This is the highest ratio ever observed by far.
	In particular, we find that for any $\ee>0$, there is an iteration $\ell_*$ such that all further iterations $\ell > \ell_*$ have a ratio greater than $\sfrac{25}{7} - \ee$. 
	This establishes Result~\ref{result:Antiprism} and naturally leads to the question of whether one can find an example of a configuration which achieves a greater ratio. 
	We state this inquiry as follows.
	
	\medskip \noindent \textbf{Question.}
	Can one find an example of a configuration of $n$ unit point changes in $\R^3$ whose electrostatic potential has a number of equilibria that exceeds $\sfrac{25 n}{7}$?

	\section{1-parameter Family of Potentials}
	\label{sec:3}
	
	In the same way as in \cite{GNS07}, we generalize the set-up by introducing a real parameter, $p > 0$, which modifies the effect of the distance to the point charges on the potential function:
	\begin{align}
		V_p (x) &= \sum\nolimits_{i=1}^n \left( \frac{\zeta_i}{\dist{x}{A_i}} \right)^p .
		\label{eqn:Vp}
	\end{align}
	Comparing \eqref{eqn:Vp} with \eqref{eqn:V}, we see that $V = V_{1}$.
	The main purpose of this $1$-parameter family of functions is to interpolate between the electrostatic potential and the (weighted) Euclidean distance function.
	\begin{figure}[htb]
		\centering
		\hspace{-1.1in}
		\includegraphics[width=.33\linewidth]{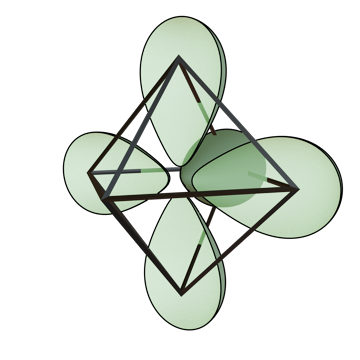} \hspace{-0.4in} 
		\includegraphics[width=.33\linewidth]{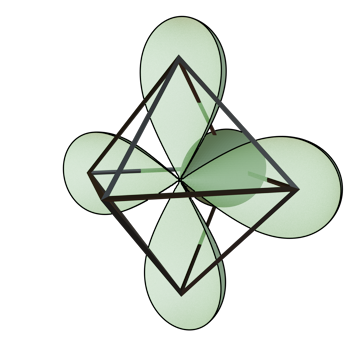} \hspace{-0.4in}
		\includegraphics[width=.33\linewidth]{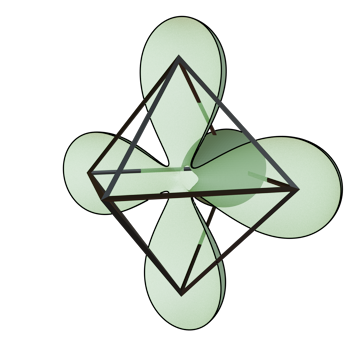}  \\
		\vspace{-0.6in}
		\hspace{0.3in}
		\includegraphics[width=.33\linewidth]{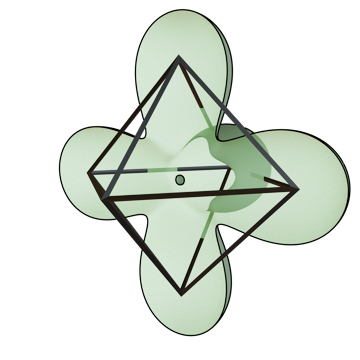} \hspace{-0.4in}
		\includegraphics[width=.33\linewidth]{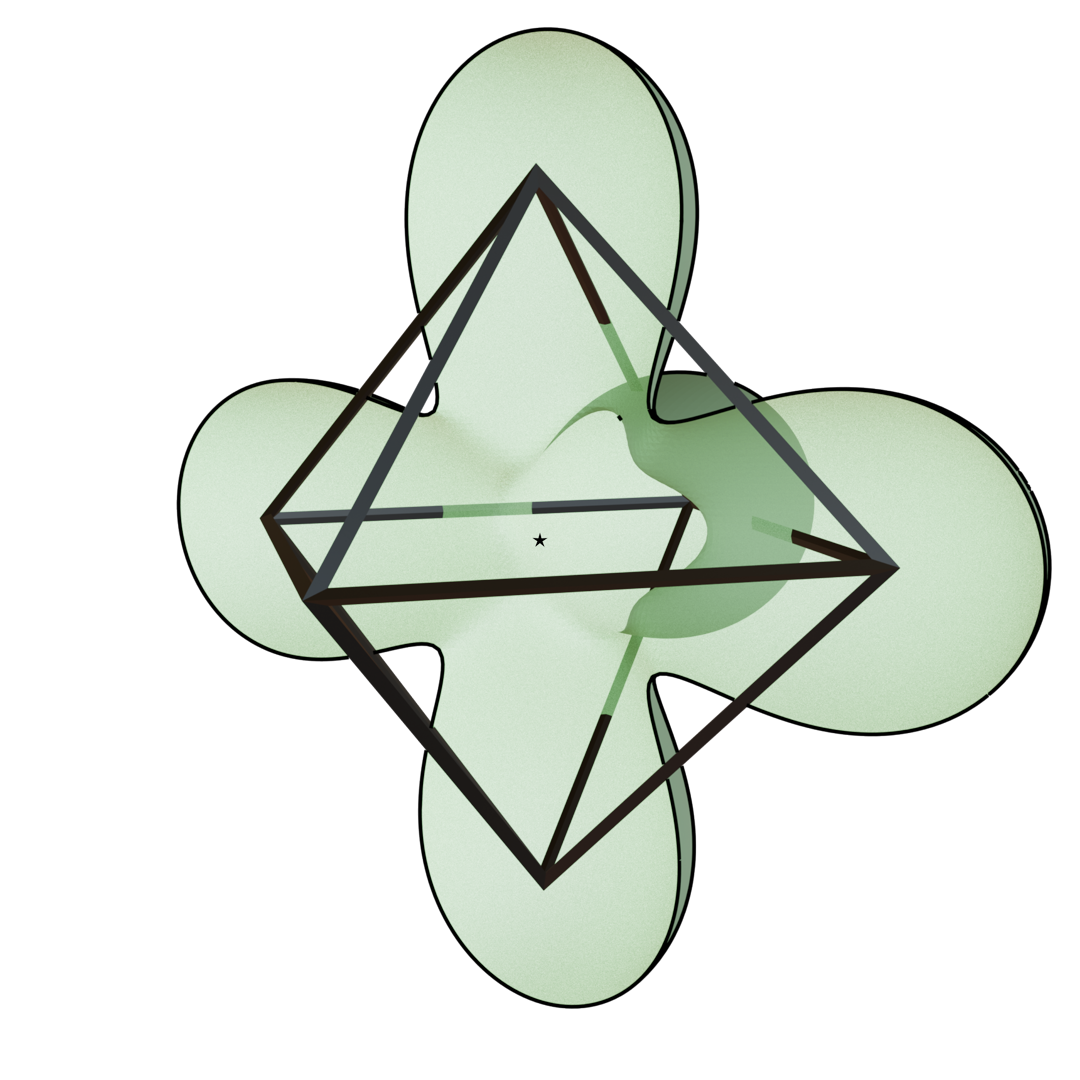} \hspace{-0.4in}
		\includegraphics[width=.33\linewidth]{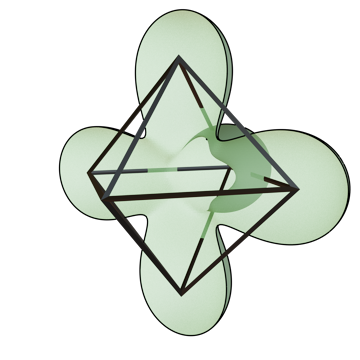} 
		\vspace{-0.1in}
		\caption{\footnotesize Cut-away views of three level sets of $V_{1}$ (\emph{upper row}) and three level sets of $V_{1.3}$ (\emph{lower row}) defined by point sources at the vertices of the octahedron.
			\emph{From left to right:} the values are chosen slightly less than, equal to, and slightly greater than the potential at the center of the octahedron.
			Removing the front of the surface reveals some of the complication at the center, which for $V_{1}$ is a degenerate equilibrium; compare with Figure~\ref{fig:binary_functions}, but for $V_{1.3}$ is a minimum with a single point in the level set at the center.}
		\label{fig:Octahedron-level-sets}
	\end{figure}
	The latter is the limit of one over the $p$-th root of $V_p$, as $p$ goes to infinity:
	\begin{align}
		E(x) &= \lim_{p \to \infty} \frac{1}{\sqrt[p]{V_p (x)}}
		= \lim_{p \to \infty} \left( \sum\nolimits_{i=1}^n \frac{\zeta_i^p}{\dist{x}{A_i}^p} \right)^{-1/p}
		= \min_{1 \leq i \leq n} \frac{\dist{x}{A_i}}{\zeta_i} .
		\label{eqn:Edistance}
	\end{align}
	
	\subsection{No Maxima}
	\label{sec:3.1}
	
	For $p=1$, $V_p$ is harmonic and therefore has neither minima nor maxima.
	A weaker property still holds for more general values of $p$.
	Specifically, if $p \geq 1$ and all charges are positive, then $V_p$ does not have any maxima:
	\begin{proposition}
		\label{prop:no_maxima}
		Let $V_p \colon \R^3 \setminus \{A_1,A_2,\ldots,A_n\} \to \R$ as defined in \eqref{eqn:Vp}, and assume $\zeta_i > 0$ for every $1 \leq i \leq n$.
		Then for every $p \geq 1$, $V_p$ has no maximum.
	\end{proposition}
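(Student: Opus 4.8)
\medskip\noindent\textbf{Proof proposal.}
The plan is to prove the stronger statement that $V_p$ is subharmonic on $\R^3\setminus\{A_1,A_2,\ldots,A_n\}$ for every $p\ge 1$, after which the absence of maxima follows from the maximum principle, in direct analogy with the harmonic case $p=1$ already treated in Section~\ref{sec:2.1}. First I would compute the Laplacian of a single radial summand: using $\Delta f=f''+\tfrac{2}{r}f'$ for a radial function $f(r)$ on $\R^3$, with $r=\norm{x-A_i}$, a short calculation gives
\[
\Delta\!\left(\frac{1}{r^{\,p}}\right)=p(p-1)\,\frac{1}{r^{\,p+2}}\,.
\]
Since $p\ge 1$, the right-hand side is non-negative, so $x\mapsto\norm{x-A_i}^{-p}$ is subharmonic on $\R^3\setminus\{A_i\}$; it is precisely here that the hypothesis $p\ge 1$ is used, as the sign reverses for $0<p<1$. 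Because $\zeta_i>0$, the numbers $\zeta_i^{\,p}$ are well defined and positive, so $V_p=\sum_{i=1}^n\zeta_i^{\,p}\norm{x-A_i}^{-p}$ is a non-negative linear combination of functions subharmonic on $\R^3\setminus\{A_1,A_2,\ldots,A_n\}$, whence $\Delta V_p\ge 0$ there.

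Next I would rule out local maxima using the sub-mean-value property. Suppose $V_p$ had a local maximum at $x_0$, with value $c=V_p(x_0)$. For every sufficiently small $\rho>0$, subharmonicity gives that $V_p(x_0)$ is at most the spherical average of $V_p$ over $S(x_0,\rho)$, while the local-maximum property gives $V_p\le c$ on that sphere; hence the average equals $c$, and since $V_p\le c$ is continuous, $V_p\equiv c$ on $S(x_0,\rho)$ for all small $\rho$, i.e.\ $V_p$ is constant on a ball around $x_0$. But $\R^3\setminus\{A_1,A_2,\ldots,A_n\}$ is connected and $V_p$ is real-analytic there (each summand is $\zeta_i^{\,p}\bigl((x-A_i)\cdot(x-A_i)\bigr)^{-p/2}$, a composition of real-analytic maps away from $A_i$), so $V_p$ would be globally constant --- contradicting $V_p>0$ together with $\lim_{\norm{x}\to\infty}V_p(x)=0$. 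For $p>1$ one may bypass analyticity: then $\Delta V_p=p(p-1)\sum_{i=1}^n\zeta_i^{\,p}\norm{x-A_i}^{-p-2}>0$ everywhere, which is already incompatible with $V_p$ being constant on an open set.

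The only genuinely delicate point is the passage from ``$\Delta V_p\ge 0$'' to ``$V_p$ has no interior local maximum'': one must either invoke the strong maximum principle for subharmonic functions, or, as above, extract local constancy from the sub-mean-value inequality and then propagate it via analyticity and connectedness of the domain. Everything else is a routine computation, and the argument degenerates gracefully into the classical harmonic one when $p=1$. Note also that the hypothesis $\zeta_i>0$ is used twice: to make each $\zeta_i^{\,p}$ meaningful and positive, so that the linear combination remains subharmonic.
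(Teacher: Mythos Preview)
Your proof is correct and shares its core with the paper's: both compute
\[
\Delta V_p \;=\; p(p-1)\sum_{i=1}^n \frac{\zeta_i^{\,p}}{\norm{x-A_i}^{p+2}}\;\ge 0,
\]
and then argue that this sign obstruction rules out local maxima. Where you diverge is in the last step. The paper draws the conclusion pointwise: at any local maximum the Hessian is negative semi-definite, so its trace is $\le 0$; for $p>1$ this contradicts $\Delta V_p>0$ immediately, and $p=1$ is the harmonic case. You instead invoke the sub-mean-value inequality for subharmonic functions, deduce local constancy near a putative maximum, and then propagate constancy globally via real analyticity and connectedness to reach a contradiction with $V_p\to 0$ at infinity. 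Your route is sound but heavier---it imports the strong maximum principle and an analyticity argument where a one-line linear-algebra observation about the Hessian suffices. On the other hand, your framing has the virtue of treating $p=1$ and $p>1$ uniformly up to the very last sentence, and it makes explicit why the result degenerates into the classical harmonic argument.
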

	We present the proof in Appendix~\ref{app:D} and illustrate the difference between the cases $p=1$ and $p>1$ in Figure~\ref{fig:Octahedron-level-sets}, which shows a few level sets of $V_{1}$ and $V_{1.3}$.
	The tiny sphere around the center of the octahedron in the lower left level set suggests that the center is a genuine minimum of $V_{1.3}$, which it cannot be if $p = 1$ since this would contradict the harmonicity of $V_{1}$.
	In fact, we can prove the following result showing that for $p>1$, the origin is indeed a local minimum of the potential.
	\begin{proposition}\label{prop:octahedron}
		Let $\{ A_1, A_2, \ldots, A_6 \}$ be the vertices of an octahedron and for $p\geq 1$
		$$V_p(x)=\sum_{i=1}^6 \frac{1}{\norm{x-A_i}^p}.$$ 
		Then, the center of the octahedron is a local minimum of $V_p$ for all $p>1$.
	\end{proposition}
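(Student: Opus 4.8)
The plan is to reduce to explicit coordinates and then compute the Hessian of $V_p$ at the center directly; the octahedral symmetry will make this Hessian a scalar multiple of the identity, and positivity of that scalar is exactly the condition $p>1$. First, note that the analogue of Lemma~\ref{lem:similarity} holds for $V_p$: a scaling by a factor $s>0$ replaces $V_p$ by $s^{-p}V_p$, and multiplication by a positive constant preserves the type of every equilibrium. Hence I may assume the octahedron has vertices
\begin{align*}
A_1, \ldots, A_6 &= (\pm 1, 0, 0),\ (0, \pm 1, 0),\ (0, 0, \pm 1),
\end{align*}
so that $\norm{A_i} = 1$ for all $i$ and the center is the origin. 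Because this vertex set is invariant under $x \mapsto -x$, the gradient terms coming from $A_i$ and from $-A_i$ cancel at the origin, so the origin is a critical point of $V_p$; it remains only to examine the Hessian there.

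Second, I would compute the second derivatives of a single summand $x \mapsto r^{-p}$ with $r = \norm{x - A}$. A direct differentiation gives
\begin{align*}
\partial_{x_j}\partial_{x_k}\, r^{-p} &= p(p+2)\, r^{-p-4}\,(x_j - A_j)(x_k - A_k) \;-\; p\, r^{-p-2}\,\delta_{jk}.
\end{align*}
Evaluating at $x = 0$, where $r = \norm{A} = 1$, the contribution of the charge at $A$ to the Hessian is $p(p+2)\,A_j A_k - p\,\delta_{jk}$. Summing over the six vertices and using the identity $\sum_{i=1}^6 (A_i)_j (A_i)_k = 2\,\delta_{jk}$ (each coordinate direction receives exactly the two antipodal vertices on that axis), the Hessian of $V_p$ at the origin is
\begin{align*}
\mathrm{Hess}\,V_p(0) &= \bigl( 2p(p+2) - 6p \bigr)\, I \;=\; 2p(p-1)\, I .
\end{align*}
This matrix is positive definite exactly when $p > 1$, in which case the origin is a non-degenerate local minimum of $V_p$; this proves the proposition. (At $p = 1$ the Hessian vanishes---consistent with the harmonicity of $V_1$ and the degenerate center visible in Figure~\ref{fig:binary_functions}---and for $0 < p < 1$ the origin is instead a local maximum.)

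There is no serious obstacle here: the argument is elementary once one fixes coordinates. The only point that needs care is the symmetry reduction---both that one may normalize the octahedron in this way, and that the sum of the rank-one pieces $A_i A_i^{\mathsf{T}}$ collapses to $2I$---since it is precisely this collapse that makes the Hessian isotropic and reduces the whole question to the scalar inequality $2p(p+2) > 6p$.
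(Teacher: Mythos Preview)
Your proof is correct and follows essentially the same approach as the paper: normalize the octahedron so its vertices are the standard unit vectors and their negatives, then compute the Hessian of $V_p$ at the origin and obtain $2p(p-1)\,I$. The paper simply cites the second-derivative formulas from the proof of Proposition~\ref{prop:no_maxima} to read off $\partial^2 V_p/\partial x_i^2(0)=2p(p-1)$ and vanishing mixed partials, while you package the same computation via the identity $\sum_i A_i A_i^{\mathsf T}=2I$; the content is identical.
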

	Again, the proof of this result will be presented in Appendix \ref{app:D}. Finally, we point out that in this and other examples, the origin becomes a non-degenerate equilibrium for $p>1$. 
	Thus, in terms of non-degeneracy, increasing the power, $p$, can have a similar effect as perturbing the point charges; see \cite[Thm~6.3]{MoCa69}.

	\subsection{Counterexample}
	\label{sec:3.2}
	
	For positive and equal charges, the function $E \colon \R^3 \to \R$ defined in \eqref{eqn:Edistance} is commonly referred to as the \emph{Euclidean distance function} of the $A_i$ and studied through the \emph{Voronoi tessellation}, which assigns to each $A_i$ the region of points that are at least as close to $A_i$ as to any of the other points.
	If the charges are not necessarily the same (but still positive), then $E \colon \R^3 \to \R$ is a weighted version of the Euclidean distance function, namely with multiplicative weights ${1}/{\zeta_i}$.
	The corresponding tessellation of $\R^3$ is the \emph{multiplicatively weighted Voronoi tessellation} \cite{AuEd84,OBSC00}.
	Both types of tessellations are described in more detail in Section~\ref{sec:4}.
	
	\smallskip
	The hope expressed as Conjecture~1.8 in \cite{GNS07}---and stated as Conjecture~\ref{conj:1.8} in this article---is that $V \colon \R^3 \to \R$ cannot have more equilibria than $E \colon \R^3 \to \R$, in which the latter are defined using the limit process in \eqref{eqn:Edistance}.
	As stated in Result~\ref{result:Counterexample}, the unit point charges at the vertices of the truncated octahedron form a counterexample to this conjecture, and we shall now explain why.
	This Voronoi domain is also known as the truncated octahedron; see Figure~\ref{fig:BCC-counterexample}, which also shows the eqilibria of the electrostatic potential defined by placing unit point charges at its vertices. 
	This polytope has $14$ faces ($6$ squares and $8$ hexagons) and $36$ edges ($24$ shared by a square and a hexagon and $12$ shared by two hexagons).
	Correspondingly, $E$ has $14$ $1$-saddles and $36$ $2$-saddles, which we compare to the $18$ $1$-saddles and $36$ $2$-saddles of $V$.
	Hence, the Euclidean distance function has fewer $1$-saddles than the electrostatic potential of the same unit point charges, and it has equally many $2$-saddles.
	In total, $V$ has more equilibria than $E$, which contradicts Conjecture~1.8 in \cite{GNS07}.
	\begin{figure}[hbt]
		\centering
		\includegraphics[width=.42\textwidth]{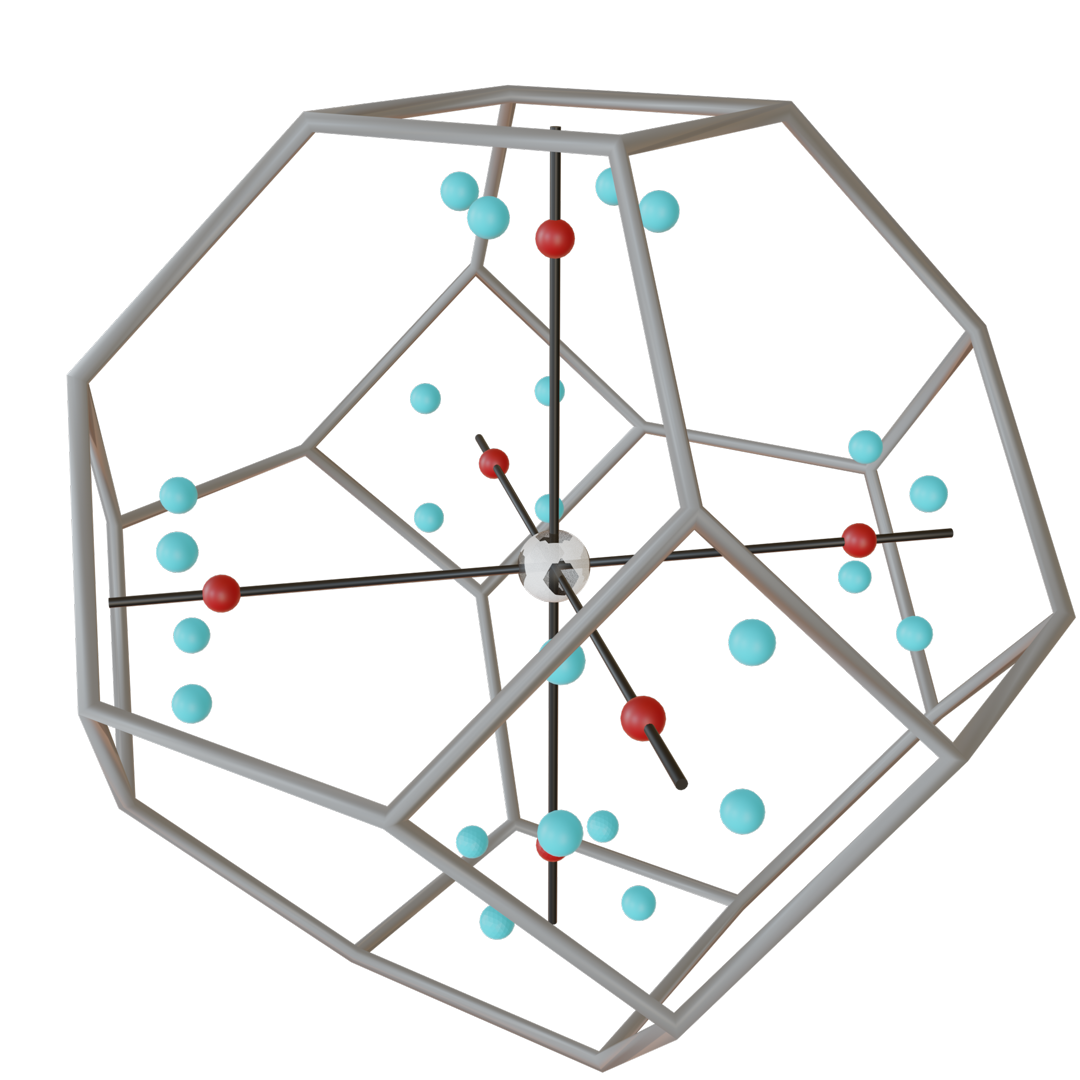} \hspace{0.2in} 
		\includegraphics[width=.42\textwidth]{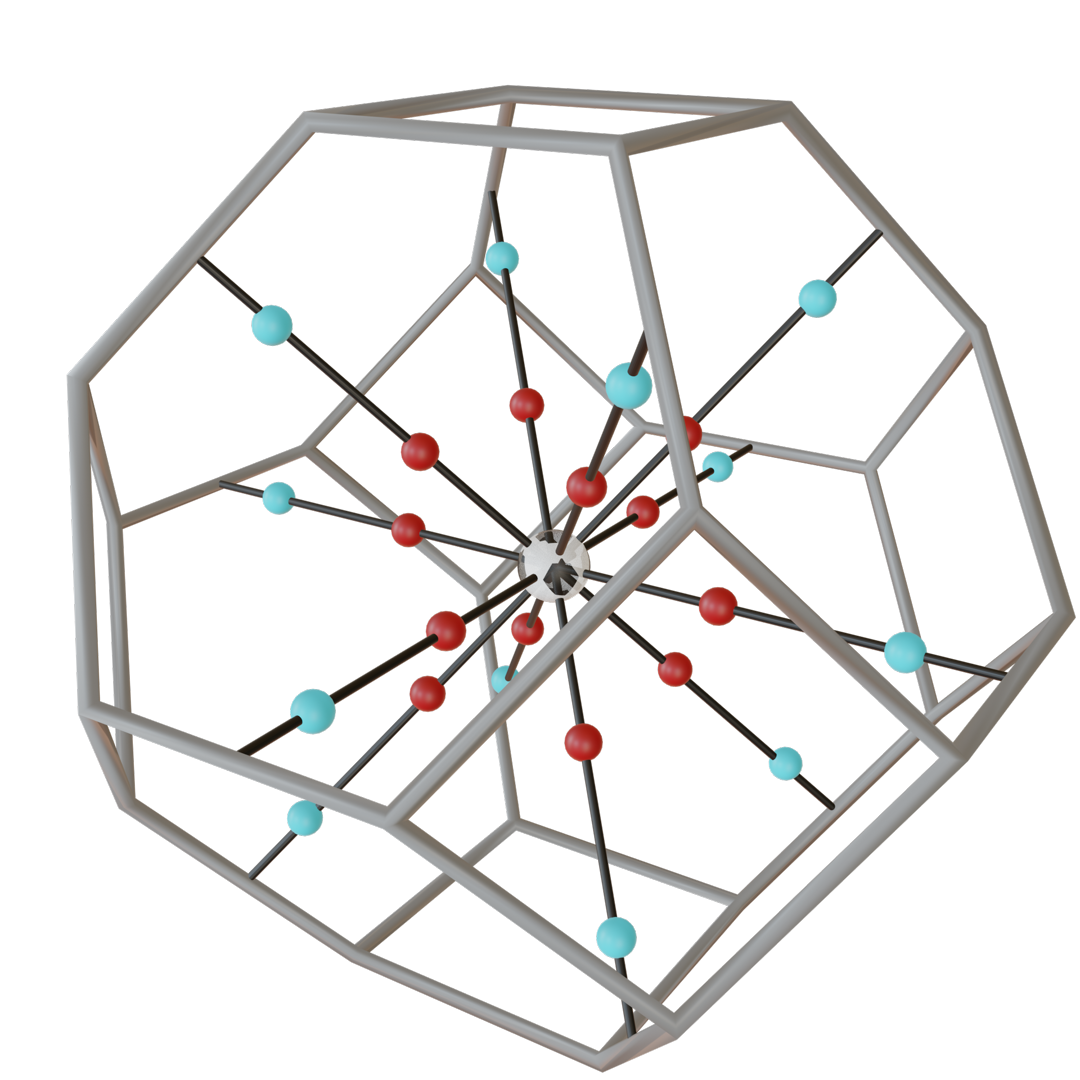} \\
		\caption{\footnotesize Equilibria of the electrostatic potential generated by unit point charges at the vertices of the truncated octahedron.
			In total there are $36$ \emph{light blue} $2$-saddles, $18$ \emph{dark red} $1$-saddles, and the degenerate equilibrium at the center.
			For better visualization, we split these equilibria into two groups, with one $1$-saddle and four $2$-saddles near each of the six squares displayed in the \emph{left panel}, and one $1$-saddle as well as one $2$-saddle for each of the twelve edges shared by two hexagons in the \emph{right panel}.}
		\label{fig:BCC-counterexample}
	\end{figure}
	
	\smallskip
	Making use of the obvious symmetries, we may associate equilibria of the electrostatic potential with the facets and edges of the polytope.
	For example, the line that passes through the centers of two opposite squares is the intersection of four planes of symmetry, and it passes through three equilibria: the center and a $1$-saddle near the square on either side.
	A more interesting example is the line that passes through the midpoints of two opposite edges shared by two hexagons each, which is the intersection of two planes of symmetry; see Figure~\ref{fig:BCC-counterexample}.
	We observe that such a line passes through five equilibria: the center and two saddles on each side of the center.
	It is the only example we have so far, in which an edge of the polytope seems to be associated with more than one equilibrium.
	Furthermore, equlibria located along lines obtained as the common fixed locus of at least two reflections can be deduced to exist using the method implemented in Appendix~\ref{app:A} for the cube.

	\section{Distance Functions}
	\label{sec:4}
	
	Conjectures~1.8 and 1.9 in \cite{GNS07} motivate us to take a closer look at the Voronoi tessellations defined by the Euclidean distance function and its weighted version.
	The former relates to the case in which all points have equal charges, while the latter models the situation is which all charges are positive but not necessarily equal to each other.
	
	\subsection{Euclidean Distance}
	\label{sec:4.1}
	
	If the number of equilibria were monotonically non-decreasing with growing $p$, then the number of such points of $V$ could be bounded from above by the number of such points of $E$.
	To study the latter, we introduce the \emph{Voronoi domain} of $A_i$ as the points $x \in \R^3$ that are at least as close to $A_i$ as to any other $A_j$.
	This domain is a closed convex polyhedron, and the \emph{Voronoi tessellation} is the collection of such domains, one for each point \cite{Vor070809}.
	The intersection of any collection of Voronoi domains is either empty or a shared face, which we refer to as a \emph{Voronoi cell}.
	\Skip{Any two Voronoi domains have disjoint interiors, and if they intersect, they do so along a common face.
		It follows that the interiors of the Voronoi domains and all their faces form a partition of $\R^3$ into open convex polyhedra, which we refer to as \emph{Voronoi cells}.}
	The dual of the Voronoi tessellation is the \emph{Delaunay mosaic} or \emph{Delaunay triangulation} of the points \cite{Del34}.
	It consists of convex polytopes that are convex hulls of subsets of the points.
	Specifically, the \emph{Delaunay cell} that corresponds to a Voronoi cell is the convex hull of the points that generate the Voronoi domains sharing the Voronoi cell, and the Delaunay mosaic is the collection of Delaunay cells.
	Generically, the Delaunay mosaic is a simplicial complex, and in general it is a polyhedral complex.
	
	\smallskip
	Following \cite{GNS07}, we call a Voronoi cell \emph{effective} if its interior has a non-empty intersection with the interior of the corresponding Delaunay cell.
	Corresponding cells have complementary dimensions and lie in orthogonal affine subspaces, which implies that the intersection is either empty or a point.
	Effective cells are interesting because they are in bijection with the equilibria of $E$.
	\begin{figure}[hbt]
		\vspace{-0.1in}
		\centering
		\includegraphics[width=.42\textwidth]{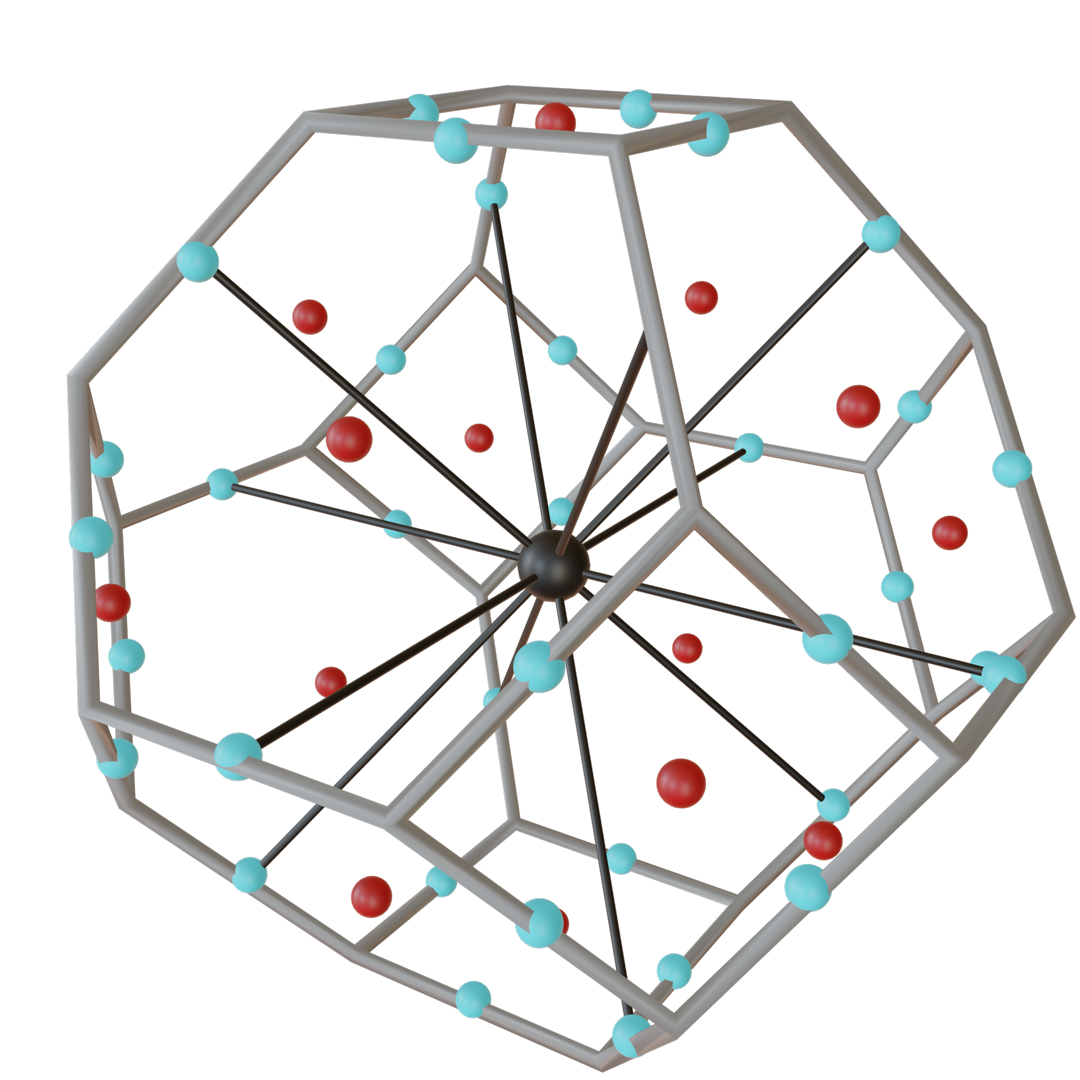} \\
		\vspace{-0.0in}
		\caption{\footnotesize The equilibria of $V_2$ generated by unit point charges at the vertices of the truncated octahedron.
			Compared with $V_1$, we note a drastically reduced number of $1$-saddles and a minimum at the origin; see Figure~\ref{fig:BCC-counterexample} where we used two copies of the solid to show all equilibria.
			While $p = 2$ is still not large, the equilibria are already close to the barycenters of the facets and edges of the solid.}
		\label{fig:BCC-equilibria}
	\end{figure}
	Indeed, with increasing $p$, the equilibria tend toward the intersections of corresponding Voronoi and Delaunay cells; see Figure~\ref{fig:BCC-equilibria}, which illustrates that already for $p=2$ the equilibria are barely distinguishable from these intersection points.
	The Upperbound Theorem for convex polytopes \cite{McM71} implies that the Voronoi tessellation of $n$ points in $\R^3$ has at most $O(n^2)$ cells, and thus at most that many effective cells.
	Recently, \cite[Theorem~3.1]{EdPa23} proved that this bound is asymptotically tight:
	\begin{theorem}
		\label{thm:lower_bound}
		For every $k \geq 2$, there exist $2k+2$ points in $\R^3$ such that the Voronoi tessellation has $(k+1)^2 -1$ effective $1$-dimensional and $k^2$ effective $2$-dimensional cells.
	\end{theorem}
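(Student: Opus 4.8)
I would prove this by exhibiting an explicit configuration of $2k+2$ points together with certificates for the claimed effective cells, relying on two standard reformulations of effectiveness, both immediate from the orthogonal–complement remark made just above the statement. First, a Delaunay edge $\{a,b\}$ has an effective dual Voronoi facet if and only if its midpoint is strictly closer to $a$ and $b$ than to every other point of the configuration --- equivalently, the open ball with diameter $[a,b]$ contains no other point (the \emph{Gabriel} condition). Second, a Delaunay triangle $\tau$ has an effective dual Voronoi edge if and only if the circumcenter of $\tau$ lies in the relative interior of $\tau$ and the open circumball of $\tau$ contains no other point. Thus the whole proof reduces to producing a configuration in which the right number of such ``empty ball'' inequalities hold.

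For the configuration I would split the $2k+2$ points into two groups of $k+1$ --- call them red and blue --- placed on two convex arcs in carefully chosen relative position; one convenient choice is two coaxial circular arcs of very different radii lying in two parallel horizontal planes, with the radius ratio, the heights, and the angular spacings of the points along each arc kept as parameters to be fixed at the end. In the degenerate limit in which each arc becomes a complete regular $(k+1)$-gon one recovers a symmetric, prism-like configuration, and the argument below amounts to perturbing off that picture. The first substep is to pin down the Delaunay mosaic: the cells that matter are the ``mixed'' ones, namely the Delaunay edges $\{R_i,B_j\}$ joining a red to a blue point and the mixed Delaunay triangles (two vertices on one arc, one on the other), with indices running along the arcs. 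For each such simplex I would write down an explicit circumscribing sphere and check it is empty; by openness of the Delaunay property it is enough to do this for the symmetric limiting configuration and then pass to a nearby configuration in general position. The same estimates also rule out any ``parasitic'' mixed Delaunay tetrahedron large enough to collapse the mixed grid.

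The remaining substep is to decide effectiveness and count. Substituting the circumcenter of each mixed triangle and the midpoint of each mixed edge into the two criteria above turns them into inequalities in the construction parameters, and one shows these hold for every mixed simplex except a bounded number of ``extremal'' ones near the ends of the two arcs, where a neighbouring third point destroys emptiness. A short count then gives exactly $(k+1)^2-1$ effective $1$-dimensional cells (dual to the surviving mixed triangles) and exactly $k^2$ effective $2$-dimensional cells (dual to the surviving mixed edges), with $2k+2$ points; together with the $O(n^2)$ upper bound coming from McMullen's Upper Bound Theorem \cite{McM71}, this shows the bound is asymptotically tight.

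The hard part is the simultaneous control of quadratically many emptiness conditions while tracking the $O(k)$ boundary exceptions precisely enough to land on the exact constants $(k+1)^2-1$ and $k^2$ rather than merely $\Theta(k^2)$. The way around this is the degenerate-limit device: choose all parameters so close to the symmetric prism-like configuration that every emptiness inequality reduces, to leading order, to a single scalar inequality in one scale parameter; verify that inequality once; and then transport the whole picture to a configuration in general position by a generic perturbation, which changes neither the combinatorial Delaunay mosaic nor which of its cells are effective.
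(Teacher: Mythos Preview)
The paper does not prove this theorem: it is quoted as \cite[Theorem~3.1]{EdPa23} and the reader is referred there for the construction, which the present paper characterises only as ``delicate''. So there is no in-paper argument to compare your proposal against; what follows concerns the viability of your sketch on its own.

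Your reformulation of effectiveness (Gabriel condition for edges; circumcenter-in-interior plus empty circumball for triangles) is correct, and the two-colour architecture---forcing $\Theta(k^2)$ mixed Delaunay simplices between two groups of $k{+}1$ points each---is indeed the right overall shape for any such lower-bound construction in $\R^3$.

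Where the proposal is thin is the specific configuration and the limit device. Two \emph{coaxial} regular $(k{+}1)$-gons in parallel horizontal planes, even with very different radii, carry $(k{+}1)$-fold rotational symmetry, and their Delaunay mosaic is the obvious frustum/antiprism with only $\Theta(k)$ mixed simplices. Your ``degenerate-limit device'' thus starts from a non-generic, \emph{linear}-size Delaunay mosaic, and a small generic perturbation of such a configuration does not canonically produce the full quadratic mixed grid: which of the many competing simplices survive depends on the perturbation direction, and you have not specified one. The standard route to $\Theta(k^2)$ mixed Delaunay edges is to break the coaxial symmetry from the outset---e.g.\ two skew segments, or one group on a short axial segment and the other on a transverse circle---so that every red--blue pair already satisfies an empty-ball condition before any perturbation.

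There is also a counting gap. You assert that ``a bounded number of extremal'' simplices fail the effectiveness test and that the survivors number exactly $k^2$ and $(k{+}1)^2-1$, but nothing in the sketch explains why the number of exceptional mixed edges is exactly $(k{+}1)^2-k^2=2k{+}1$ and the number of exceptional mixed triangles is exactly one. Hitting those precise constants, rather than merely $\Theta(k^2)$, is the genuinely delicate part of the Edelsbrunner--Pach argument and is not addressed here.
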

	Does this lower bound translate to any meaningful lower bound for the electrostatic potential defined by the same number of points with equal charges?
	Perhaps not since the construction in \cite{EdPa23} is delicate and the authors of this article have not been able to get a quadratic number of equilibria of $V$ for similarly placed points.

	\subsection{Weighted Euclidean Distance}
	\label{sec:4.2}
	
	If we allow points with different charges, then $E$ is the multiplicatively weighted Euclidean distance function; see \eqref{eqn:Edistance}.
	We can still define Voronoi domains, tessellation, and cells, but there are significant differences:
	\begin{itemize}
		\item a domain is no longer the common intersection of closed half-spaces but of $n-1$ closed balls and closed complements of balls, and thus not necessarily connected;
		\item already in $\R^2$, the tessellation may consists of $\Theta (n^2)$ vertices, (circular) edges, and (connected) regions \cite{AuEd84};
		\item there is no established dual structure, like the Delaunay mosaic in the unweighted case.
	\end{itemize}
	In the plane, the maximum numbers of cells in the Voronoi tessellation differ even asymptotically between the unweighted and the weighted cases: $\Theta (n)$ versus $\Theta (n^2)$.
	In $\R^3$ on the other hand, the maximum number of cells is $\Theta (n^2)$ in both cases.
	There is at most one equilibrium of $E$ per Voronoi cell, so we have at most $\Theta (n^2)$ such points.
	However, the difference in the maximum number of cells in $\R^2$ begs the question whether the maximum number of equilibria differ in the same way.
	We think not.
	\begin{conjecture}
		\label{conj:2D_weighted_Euclidean_distance}
		The number of equilibria of the weighted Euclidean distance function defined by $n$ points with positive real weights in $\R^2$ is at most some constant times $n$.
	\end{conjecture}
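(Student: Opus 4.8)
The plan is to reduce the count to that of a single hardest species of equilibrium and then to attack it topologically; the last step is where we expect the genuine difficulty. First we set up Morse theory for the min-type function $E(x)=\min_{1\le i\le n}\dist{x}{A_i}/\zeta_i$ and classify its critical points. Each site $A_i$ is a (cone-type) minimum and there are no others, so there are exactly $n$ minima. On the Voronoi edge separating $A_i$ and $A_j$ --- an arc of the Apollonius circle of the pair --- the function $\dist{\cdot}{A_i}$ attains its extrema on that circle at the two points dividing the segment $[A_i,A_j]$ in the ratio $\zeta_i:\zeta_j$; at the internal division point the two active gradients are antiparallel, which gives an index-$1$ critical point exactly when that point lies on the edge, while the external one is never critical. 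There are no index-$1$ critical points off the edges, $E$ has no maxima (it tends to $\infty$ at infinity, and on each Voronoi domain it is a monotone cone), and the index-$2$ critical points are precisely the Voronoi vertices lying in the interior of the convex hull of their (generically three) weighted-nearest sites --- equivalently, the local maxima of $E$. Sweeping the sublevel filtration $\{E\le t\}_{t\ge 0}$ from $t=0^{+}$ (where it is $n$ disjoint disks) to $t=\infty$ (where it is contractible, being a union of convex bodies sharing a common point) and bookkeeping births, merges and loop events yields $m_0=n$, $m_1=(n-1)+L$, $m_2=L$, where $L$ is the total number of $1$-cycles ever created; hence the number of equilibria equals $2n-1+2L$, and it suffices to prove $L=O(n)$.

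Next I would reformulate $L$. The sublevel set is $U(t)=\bigcup_{i=1}^{n}B(A_i,t\zeta_i)$, a family of $n$ disks whose radii all scale with the common parameter $t$ (equivalently, $E$ is the lower envelope over $\R^2$ of $n$ vertical cones in $\R^3$). By definition $L$ is the number of bars of the persistence module $\betti{1}(U(t))$, and by Alexander duality in $\Sp^2$ it equals the number of bounded complementary components that ever appear and later vanish. The key structural input is that any two of the bounding circles meet in at most two points, so the disks form a family of \emph{pseudodisks}; the classical linear bound on the union complexity of pseudodisks then gives $O(n)$ boundary arcs and $\betti{1}(U(t))=O(n)$ for \emph{every fixed} $t$.

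The hard part will be upgrading this instantaneous bound to a bound on the \emph{cumulative} number of topological transitions: a priori $1$-cycles could be born and destroyed $\Theta(n^2)$ times along the filtration --- the full weighted Voronoi tessellation does have $\Theta(n^2)$ cells, and the combinatorics of a growing pseudodisk arrangement can change a super-linear number of times. The whole point of the conjecture is that \emph{homotopy-changing} events (component merges and loop births and deaths) are nevertheless rare. I would try to charge each loop birth to the last pair of boundary arcs whose contact closes that loop, and to bound the number of such charges by an amortized, Davenport--Schinzel-style analysis of how the arcs of $\partial U(t)$ are created, destroyed, and become adjacent --- exploiting the rigidity coming from the fact that all radii scale by the same $t$. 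An appealing alternative would be to build a multiplicatively-weighted analogue of the Delaunay mosaic: a filtered polyhedral complex of size $O(n)$ that is, as a filtration, homotopy equivalent to $\{U(t)\}$, from which $L=O(n)$ would be immediate. But, as the preceding discussion notes, no such dual structure is currently established, so producing one (or showing it can be avoided) is exactly the open core. In short, the Morse-theoretic reduction and the per-$t$ pseudodisk estimate are routine; controlling the total number of topological transitions is where all of the difficulty lies.
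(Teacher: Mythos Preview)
This statement is posed in the paper as an open \emph{conjecture}; the paper offers no proof, so there is nothing to compare your attempt against. Your write-up is likewise not a proof, and you say so yourself: you reduce the question, via Morse theory for the min-type function $E$, to bounding the cumulative number $L$ of $1$-cycles ever appearing in the growing union $U(t)=\bigcup_i B(A_i,t\zeta_i)$, record the per-$t$ estimate $\beta_1(U(t))=O(n)$ from pseudodisk union complexity, and then correctly identify that upgrading this instantaneous bound to a cumulative one is exactly the unresolved core.

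Two remarks on the setup. First, you assert that ``$E$ has no maxima'' and then in the same breath describe the index-$2$ critical points as ``the local maxima of $E$''; in the plane these are the same thing, so the sentence contradicts itself. What you presumably mean is that $E$ has no local maximum in the interior of any single Voronoi region (there it is a monotone cone), while the effective Voronoi vertices---those lying in the interior of the triangle of their three weighted-nearest sites---\emph{are} local maxima. Once that is cleaned up, your Euler-characteristic count $m_0=n$, $m_1=(n-1)+L$, $m_2=L$ is correct and the reduction to showing $L=O(n)$ is the right target. Second, neither of your two suggested attacks is currently known to succeed: the Davenport--Schinzel-style charging of loop births to arc adjacencies has no established linear bound in this setting, and, as the paper itself notes, there is no known filtered dual complex of size $O(n)$ playing the role of a weighted Delaunay mosaic. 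Since the multiplicatively weighted Voronoi tessellation in $\R^2$ has $\Theta(n^2)$ cells, one cannot simply read $L$ off its combinatorial complexity; showing that only $O(n)$ of its vertices are \emph{effective} is precisely the open content of the conjecture.
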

	The number of equilibria in the plane is relevant because they can be turned into higher index equilibria in three dimensions.

	\subsection{Slices}
	\label{sec:4.3}
	
	Consider the Voronoi tessellation that represents the weighted or unweighted Euclidean distance function defined by $n$ weighted or unweighted points in $\R^3$.
	We call the restriction to a plane or a line a \emph{slice}.
	These restrictions look a lot like the lower-dimensional tessellations, but they are more general and can be defined by adding a weight to the respective squared distance to the $i$-th point:
	\begin{align}
		\pi_i (x) &= \dist{x}{A_i}^2 + w_i ; 
		\label{eqn:power} \\
		\varphi_i (x) &= \left( \frac{\dist{x}{A_i}}{\zeta_i} \right)^2 + w_i ,
		\label{eqn:wwdistance}
	\end{align}
	in which $w_i$ is the squared distance of $A_i$ from the plane or line.
	The weighted squared distance in \eqref{eqn:power} is known as \emph{power} or \emph{Laguerre distance}, and while it is a special case of \eqref{eqn:wwdistance}, it received substantially more attention in the mathematical literature.
	In both cases, the $1$-dimensional tessellation has at most $O(n)$ cells and thus at most that many equilibria.\footnote{Indeed, for the weighted squared distance in \eqref{eqn:power}, each point generates at most one interval along the line.  This is not true for the weighted squared distance in \eqref{eqn:wwdistance}, but it is not difficult to see that the point with minimum charge generates a single convex cell in three dimensions and therefore at most one interval along the line.
		The linear bound now follows by induction.}
	This is also true for the $2$-dimensional tessellation defined by the weighted squared distance in \eqref{eqn:power}, but see Conjecture~\ref{conj:2D_weighted_Euclidean_distance} to the more general weighted squared distance in \eqref{eqn:wwdistance}.
	By reducing to the same denominator and appealing to the fundamental theorem of algebra, one obtains that, for even $p > 0$ and $V_p$ defined by $n$ unit point charges, its restriction to a straight line in $\R^3$ has at most $p (n-1) + 2n-1$ equilibria.
	While we have no proof, we venture that this bound can be strengthened to at most $O(n)$ equilibria for any $p > 0$, and extended to $2$-dimensional slices.
	To focus, we formulate a more restricted version:
	\begin{conjecture}
		\label{conj:slices}
		Let $V \colon \R^3 \setminus \{A_1,A_2,\ldots,A_n\} \to \R$ be the electrostatic potential as in \eqref{eqn:V}.
		Then the restriction of $V$ to any straight line or flat plane in $\R^3$ has at most $O(n)$ equilibria.
	\end{conjecture}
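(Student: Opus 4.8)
\medskip
\noindent\textbf{A possible approach.}
We do not settle this conjecture, but describe a plausible line of attack, reducing both cases to a one‑dimensional counting problem.

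\emph{Reduction to one dimension.} Parametrize the line $L$ by arclength, $x(t)=q+tv$ with $\norm v=1$, let $t_i\in\R$ be the parameter of the orthogonal projection of $A_i$ onto $L$, and let $d_i\ge 0$ be the distance from $A_i$ to $L$, so $\norm{x(t)-A_i}^2=(t-t_i)^2+d_i^2=:s_i(t)$. Then $V|_L(t)=\sum_i\zeta_i s_i(t)^{-1/2}$, and a point is critical for $V|_L$ iff it is a zero of $g(t)=\sum_i\zeta_i(t-t_i)s_i(t)^{-3/2}$. A charge with $d_i=0$ contributes the term $\zeta_i\,\mathrm{sgn}(t-t_i)(t-t_i)^{-2}$, which is strictly monotone on each side of $t_i$; cutting $L$ at these at most $n$ points reduces to the case $d_i>0$ for all $i$, where $g$ is real‑analytic on $\R$ and decays like $t^{-2}$ at infinity. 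It thus suffices to bound the number of sign changes of $g$ on $\R$ by $O(n)$.

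\emph{A Gaussian superposition.} Using $a^{-1/2}=\tfrac1{\sqrt\pi}\int_0^\infty e^{-au}u^{-1/2}\,du$ and $e^{-s_i u}=e^{-d_i^2 u}e^{-u(t-t_i)^2}$ gives
\begin{align*}
	V|_L(t)&=\tfrac1{\sqrt\pi}\int_0^\infty u^{-1/2}\,F_u(t)\,du,
	\qquad
	F_u(t)=\sum\nolimits_{i=1}^n\zeta_i e^{-d_i^2 u}\,e^{-u(t-t_i)^2},
\end{align*}
so that $g=\tfrac1{\sqrt\pi}\int_0^\infty u^{-1/2}F_u'(t)\,du$. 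For each fixed $u>0$, $F_u$ is a positive combination of $n$ translates of a single Gaussian; such a function has at most $2n-1$ critical points --- by Sturm's variation‑diminishing principle for the heat equation, or by a direct Rolle‑type estimate on the exponential sum $e^{-ut^2}\sum_i b_i(u)e^{2ut_i t}$ with $b_i(u)>0$. The plan is to upgrade this per‑$u$ bound to an $O(n)$ bound for the average $g$: I would show that, as $u$ grows, the sign pattern of $F_u'$ refines in a nested, interlacing manner, so that the kernel $(t,u)\mapsto F_u'(t)$ is sign‑regular of order $O(n)$ and Karlin's variation‑diminishing theorem bounds the zeros of $g=\int u^{-1/2}F_u'\,du$. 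Recombining the at most $n$ intervals from the first step then gives the bound on $L$.

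\emph{The plane case and the main obstacle.} For a plane $\Pi$ let $B_i$ be the orthogonal projection of $A_i$ onto $\Pi$ and $d_i$ its distance to $\Pi$; then $x\in\Pi$ is critical for $V|_\Pi$ iff $\sum_i\zeta_i(x-B_i)(\norm{x-B_i}^2+d_i^2)^{-3/2}=0$. Fixing a direction $w\in\Pi$ and foliating $\Pi$ by the lines $L_c\perp w$, every critical point of $V|_\Pi$ lies on a unique $L_c$ and is in particular critical for $V|_{L_c}$, hence lies on the real‑analytic curve $\Gamma\subseteq\Pi$ where the $w$‑directional derivative of $V|_\Pi$ vanishes; by the line case each $L_c$ carries $O(n)$ such points. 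Clearing denominators with variables $r_i$, $r_i^2=\norm{x-B_i}^2+d_i^2$, as in Section~\ref{sec:Upper bound}, realizes $\Gamma$ and the second critical equation as real algebraic sets, and one would bound their common points by an $O(n)$ --- rather than $2^n$ --- refinement of that square‑root analysis. This last point is exactly where the genuine difficulty lies, and the reason the statement remains a conjecture: clearing all $n$ radicals and invoking B\'ezout, or presenting $g$ as a Pfaffian chain of length $n$ and invoking Khovanskii's theory, each pays an exponential price $2^n$ coming from the $2^n$ sign choices of $r_i=\pm\sqrt{s_i}$, although only the branch $r_i=+\sqrt{s_i}$ is geometrically meaningful. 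Getting the total positivity of the Gaussian (heat) kernel to absorb this exponential factor --- i.e.\ proving the sharp variation‑diminishing statement for the one‑sided kernel $u^{-1/2}\,du$ acting on $\{F_u'\}_{u>0}$ --- is the step where I expect a genuinely new idea to be needed.
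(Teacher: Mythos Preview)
The statement you address is labeled a \emph{conjecture} in the paper, and the paper offers no proof of it; the authors explicitly write that they ``have no proof'' and are merely venturing the bound. Your write-up is therefore appropriately framed: you do not claim to settle the question, and you are candid about where the argument breaks.

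Your outline goes substantially further than the paper does. The paper's only related concrete statement is the elementary observation, for \emph{even} $p$, that after clearing denominators the derivative along a line becomes a polynomial of degree $p(n-1)+2n-1$; it says nothing about how one might attack the odd or non-integer case, nor the planar case. Your Gaussian-superposition idea (writing $r^{-1}$ as a Laplace integral of Gaussians and then trying to invoke variation-diminishing/total-positivity machinery) is a genuinely different and more sophisticated line than anything in the paper, and it is a natural thing to try. You also correctly isolate the real obstruction: for each fixed scale $u$ the per-slice count is $O(n)$, but averaging over $u$ via $\int_0^\infty u^{-1/2}(\cdot)\,du$ need not preserve an $O(n)$ sign-change bound without an additional sign-regularity statement for the family $\{F_u'\}_{u>0}$, and this is exactly the missing ingredient. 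I would add one caution: the heuristic that the zero set of $F_u'$ ``refines in a nested, interlacing manner'' as $u$ grows is not obviously true---under the backward heat flow (increasing $u$) zeros can be created, and their trajectories need not be monotone in $t$---so even formulating the precise sign-regularity hypothesis that Karlin's theorem would require is part of the work.

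For the planar case, your reduction is sound as far as it goes (a critical point of $V|_\Pi$ is critical for the restriction to every line through it), but note that the per-line bound by itself gives no finite total, since there are infinitely many lines $L_c$; the content must come entirely from the algebraic intersection step you sketch. There you again meet the same $2^n$ obstruction you already flagged. In short: your diagnosis of the difficulty matches the state of the problem, and nothing in the paper resolves it.
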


	\section{Discussion}
	\label{sec:7}
	
	Inspired by the article of Gabrielov, Novikov and Shapiro \cite{GNS07}, we studied the connection between the electrostatic potential and the (weighted) Euclidean distance function defined by a finite collection of point charges.
	We focus on the $3$-dimensional case and on positive charges, which may or may not all be the same.
	We have three main results, the second of which is a counterexample to bounding the number of equilibria of the electrostatic potential by those of the Euclidean distance function.
	We therefore wish to suggest a more nuanced formulation of the question inspired by a result about the heat-flow in \cite{ChEd11}:
	\begin{quote}
		Is the $L_1$-norm of the persistence diagram of $V_p$ always smaller than that of $V_q$ for any $1 \leq p \leq q$?
		In other words, does the combined ``strength'' of the equilibria monotonically decrease with shrinking parameter, $p$?
	\end{quote}
	Such decay in strength does not contradict the increase in number of equilibria, which can indeed happen, as shown by our counterexample.
	The main question of determining the maximum number of equilibria of an electrostatic potential defined by $n$ point charges in $\R^3$ remains open.


	\Skip{
		\bibstyle{plainurl}
		\bibliography{references}
	} 
	
	\clearpage
	\appendix

	\section{Analytic Proof of Existence of Equilibria}
	\label{app:A}
	
	This appendix illustrates how we can prove the existence of claimed equilibria.
	We shall do this for the cube; see the second row of Table~\ref{tbl:solids}, but the argument generalizes to the other Platonic solid with more electrostatic points beside the center, i.e. the dodecahedron.
	
	\begin{proposition}
		\label{prop:equilibria_of_cube}
		The electrostatic potential defined by the (eight) unit point charges located at the vertices of a cube in $\R^3$ has twelve equilibria beside the one at the cube's center. These occur along the segments connecting the cube's center to the midpoints of its edges.
	\end{proposition}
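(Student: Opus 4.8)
The plan is to exploit the large symmetry group of the cube to reduce, along one of the twelve segments, the vector equation $\nabla V=0$ to a single scalar equation, to locate a root of that equation by the intermediate value theorem, and then to recover the other eleven equilibria by transporting this root with the symmetry group.

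Concretely, place the cube with vertices at $(\pm1,\pm1,\pm1)$, so that the midpoints of its twelve edges are the points obtained from $(0,1,1)$ by permuting the coordinate that equals $0$ and changing the signs of the other two, and fix attention on the segment from the center $0$ to $(0,1,1)$, parametrised by $x(t)=(0,t,t)$ with $t\in[0,1]$. For $t\neq 0$ the point $x(t)$ is fixed by the two reflections $(x_1,x_2,x_3)\mapsto(-x_1,x_2,x_3)$ and $(x_1,x_2,x_3)\mapsto(x_1,x_3,x_2)$, each of which is a symmetry of the charge configuration; since $V\circ\sigma=V$ for every such orthogonal $\sigma$ we have $\nabla V(\sigma x)=\sigma\,\nabla V(x)$, so $\nabla V(x(t))$ is fixed by both reflections and therefore lies in their common fixed line $\mathrm{span}\{(0,1,1)\}$. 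Consequently $x(t)$ is an equilibrium of $V$ exactly when $g(t):=\frac{d}{dt}V(x(t))=\langle\nabla V(x(t)),(0,1,1)\rangle$ vanishes, and it suffices to find a zero in $(0,1)$ of $g=h'$, where (grouping the eight vertices by their last two coordinates, which produces the coefficients $2,4,2$)
\[
h(t)=V(0,t,t)=\frac{2}{\sqrt{1+2(t-1)^2}}+\frac{4}{\sqrt{3+2t^2}}+\frac{2}{\sqrt{1+2(t+1)^2}} .
\]

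Since the cube is centrally symmetric, $V(-x)=V(x)$, so $h$ is even and $g=h'$ is odd; in particular $g(0)=0$, which records the degenerate equilibrium at the center. At the far end a direct computation gives $g(1)=h'(1)=-\tfrac{8}{5\sqrt5}-\tfrac{8}{27}<0$. To produce a sign change inside $(0,1)$, expand $h$ in a Taylor series about $t=0$: the quadratic term vanishes (as it must, since the Hessian of $V$ at the center is $0$ by symmetry together with harmonicity), and the first nonconstant term is $a_4t^4$ with $a_4=\tfrac{1}{24}h^{(4)}(0)=\tfrac{28}{81\sqrt3}>0$. Hence $g(t)=4a_4t^3+O(t^5)>0$ for all sufficiently small $t>0$, and together with $g(1)<0$ the intermediate value theorem yields a root $t_0\in(0,1)$, i.e.\ an equilibrium of $V$ in the relative interior of the segment. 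The symmetry group of the cube acts transitively on its twelve edges, so applying its elements to $x(t_0)$ produces twelve equilibria, one in the interior of each center-to-edge-midpoint segment; these are pairwise distinct because two distinct such segments meet only at the center.

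The one genuine obstacle is the sign of $a_4$: the vanishing of the quadratic term of $h$ means the conclusion does not follow from $g(0)=0$ and $g(1)<0$ alone, so one is forced to carry the Taylor expansion of $h$ to fourth order and verify that the resulting coefficient is positive. The computation is elementary but somewhat lengthy, and it is the step where care is needed; the remainder of the argument is immediate. I would finally note that the same scheme --- restrict to a symmetry axis, read off a one-variable function, and locate its zeros --- applies with only cosmetic changes to the dodecahedron and to the prisms and anti-prisms discussed elsewhere in the paper.
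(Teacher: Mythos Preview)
Your proof is correct and follows essentially the same strategy as the paper: reduce to a line joining the center to an edge midpoint via two reflection symmetries, so that the equilibrium condition becomes a single scalar equation, and then apply the intermediate value theorem and the transitive action of the symmetry group on the twelve edges. The only substantive difference is how the sign change of $g=h'$ inside $(0,1)$ is established: the paper simply evaluates $f'$ at the interior point $t=\tfrac12$ and checks it is positive, whereas you compute the Taylor expansion of $h$ at the origin to order four, observe that the quadratic term vanishes (by harmonicity and symmetry) and that $a_4=\tfrac{28}{81\sqrt3}>0$, and conclude $g(t)>0$ for small $t>0$. Both are legitimate; the paper's evaluation is quicker, while your expansion has the merit of explaining the mechanism---the degeneracy of the central equilibrium forces the sign of $g$ near $0$ to be governed by the quartic term---and connects naturally to the discussion of the degenerate center elsewhere in the paper.
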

	\begin{proof}
		Assume without loss of generality, that the vertices of the cube are located at the points $(\pm 1, \pm 1, \pm 1)$.
		The cube is invariant by a discrete group of symmetries, $G$, which contains, in particular, the reflections on the planes $P_1 = \{(x_1,x_2,x_3) \in \R^3 \ | \ x_2=0 \}$ and $P_2 = \{ (x_1,x_2,x_3) \in \R^3 \ | \  x_1+x_3=0 \}$. 
		Denote these reflections $\sigma_1, \sigma_2 \colon \R^3 \to \R^3$, respectively. 
		Letting $v_1 = (0,1,0)$ and $v_2 = (1,0,1)$ be vectors orthogonal to these planes, the reflections change their signs:
		$\sigma_1 (v_1) = -v_1$ and $\sigma_2 (v_2) = -v_2$.
		Furthermore, the electrostatic potential $V$ generated by unit charges at the vertices of this cube is invariant by these symmetries, i.e.\ $V \circ \sigma_1 = V \circ \sigma_2 = V$.
		Hence, we find from the symmetry that, along $P_i$,
		\begin{align}
			\langle \nabla V , v_i \rangle = \langle \sigma_i (\nabla V) , \sigma_i (v_i) \rangle =  - \langle \nabla V , v_i \rangle ,
		\end{align}
		and therefore $\langle \nabla V , v_i \rangle=0$, for $i = 1,2$.
		In particular, along 
		\begin{align}
			P_1 \cap P_2 & = \{ (x_1,x_2,x_3) \in \R^3 \ | \ x_2=0 , \ x_1+x_3=0 \} \\
			& = \{ (x,0,-x) \in \R^3 \ | x \in \R \}
		\end{align}
		we have $\langle \nabla V, v_1 \rangle = 0$ and $\langle \nabla V , v_2 \rangle = 0$, which we write more explicitly as
		\begin{align}
			\label{eq:Derivatives symmetries}
			\frac{\partial V}{\partial x_2}(x,0,-x) = \frac{\partial V}{\partial x_1}(x,0,-x) + \frac{\partial V}{\partial x_3}(x,0,-x) = 0. 
		\end{align}
		We shall now prove the existence of an equilibrium of $V$ by finding one in $P_1 \cap P_2$.
		Due to \eqref{eq:Derivatives symmetries}, we need only prove that there is an $x \in \R$ such that the function $f(x)=V(x,0,-x)$ has a critical point, i.e.\ we must find $x$ such that $f'(x) = 0$; that is:
		\begin{align}
			\frac{\partial V}{\partial x_1}(x,0,-x) - \frac{\partial V}{\partial x_3} (x,0,-x) &= 0.
		\end{align}
		The function $f'(x)$ is odd and varies continuously with $x$.
		We can explicitly compute it and therefore evaluate it at any point; see Figure~\ref{fig:PotentialLine}. 
		\begin{figure}[hbt]
			\centering \vspace{0.0in}
			\includegraphics[width=0.35\textwidth]{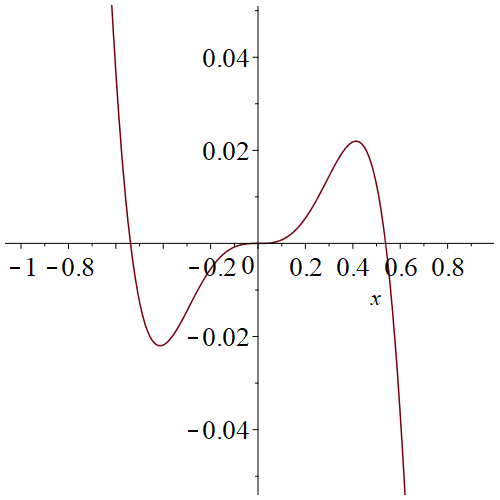}
			\vspace{0.0in}
			\caption{\footnotesize Plot of $f'(x)$ whose zeroes correspond to electrostatic points lying in the line $P_1 \cap P_2$.}
			\label{fig:PotentialLine}
		\end{figure}
		Evaluating the explicit formula for $f'(x)$ we find that
		\begin{align}
			f'(-1) & = \tfrac{8}{27} + \tfrac{8 \sqrt{5}}{25} > 0,\\
			f'(-1/2) & = - \tfrac{4}{63} \left[ \sqrt{7} \left( \sqrt{3} - \tfrac{27 \sqrt{11}}{121} \right) - \tfrac{18}{7} \right] < 0, \\
			f'(1/2) & =  \tfrac{4}{63} \left[ \sqrt{7} \left( \sqrt{3} - \tfrac{27 \sqrt{11}}{121} \right) - \tfrac{18}{7} \right] >0, \\
			f'(1) & = \tfrac{8}{27} + \tfrac{8 \sqrt{5}}{25} > 0.
		\end{align}
		We therefore conclude from the intermediate value theorem that there are two extra equilibria of $V$ (zeroes of $f'(x)$) located at points $(x_*,0,-x_*)$ and $(-x_*,0,x_*)$ for $x_* \in (1/2, 1)$. Notice, in particular, that such points lie along the straight lines passing through the cube's center and the midpoints in its edges.
		
		These two points are in the same orbit of the group of symmetries of the cube and this orbit has cardinality $12$ thus resulting in such a number of extra equilibria.
	\end{proof}

	\section{Archimedean and Catalan Solids}
	\label{app:B}
	
	In this appendix, we list the numbers of equilibria of the electrostatic potentials generated by placing unit point charges at the vertices of Archimedean and Catalan solids.
	We recall that an \emph{Archimedean solid} is a convex polytope that is neither a Platonic solid nor a prism or anti-prism such that
	\begin{itemize}
		\item all facets are regular polygons;
		\item all vertices have the same local shape.
	\end{itemize}
	Conventially, there are $13$ Archimedean solids, but depending on the interpretation of the second condition, there may be a $14$-th.
	In particular, we may or may not require vertex-transitivity (for any two vertices, there is a symmetry that maps one vertex into the other), which holds for the $13$ Archimedean solids but not for the Elongated Square Gyrobicupola.
	The latter is similar to the Rhombicuboctahedron but has one of the square cuppola rotated by $45^\circ$ relative to the opposite cupola; see Figure~\ref{fig:rhombicuboctahedron-equilibria}.
	Comparing rows $5$ and $14$ in Table~\ref{tbl:Archimedean-solids}, we see that the Elongated Square Gyrobicupo generates fewer equilibria than the Rhombicuboctahedron.
	\begin{table}[hbt]
		\centering \footnotesize
		\begin{tabular}{l||c|ccc}
			\multicolumn{1}{c||}{solid} & $f$-vector     & \multicolumn{1}{c}{2-saddles} & \multicolumn{1}{c}{1-saddles} & \multicolumn{1}{c}{center}  \\ \hline \hline
			Truncated Tetrahedon            & (12,18,8,1)   &  18 &  4 &  -3   \\
			Cuboctahedron                   & (12,24,14,1)  &  24 &  8 &  -5   \\
			Truncated Cube                  & (24,36,14,1)  &  36 &  8 &  -5   \\
			Truncated Octahedron            & (24,36,14,1)  &  36 & 18 &   5   \\
			Rhombicuboctahedron             & (24,48,26,1)  &  36 &  8 &  -5   \\
			Truncated Cuboctahedron         & (48,72,26,1)  &  72 & 20 &  -5   \\
			Snub Cube                       & (24,60,38,1)  &  36 &  8 &  -5   \\
			Icosidodecahedron               & (30,60,32,1)  &  60 & 20 &  -11  \\
			Truncated Dodecahedron          & (60,90,32,1)  &  90 & 20 &  -11  \\
			Truncated Icosahedron           & (60,90,32,1)  &  90 & 12 &  -19  \\
			Rhombicosidodecahedron          & (60,120,62,1) &  90 & 20 &  -11  \\
			Truncated Icosidodecahedron     & (120,180,62,1)& 180 & 50 &  -11  \\
			Snub Dodecahedron               & (60,150,92,1) &  90 & 20 &  -11  \\
			Elongated Square Gyrobicupola   & (24,48,26,1)  &  32 &  8 &  -1
		\end{tabular}
		\vspace{0.1in}
		\caption{\footnotesize The equilibria of the electrostatic potential of point charges at the vertices of the thirteen Archimedean solids and the Elongated Square Gyrobicupola.
			The $f$-vector gives the number of vertices, edges, facets, and the solid itself, in this sequence.
			Except possibly in the last case, the center is a degenerate equilibrium, and in each case, the number of $1$-saddles is non-zero; compare with Table~\ref{tbl:Catalan-solids}.}
		\label{tbl:Archimedean-solids}
	\end{table}
	
	\Skip{\smallskip
		HI CHRIS.  THIS IS GOOD TEXT BUT A BIT TOO ELABORATE AND NOT INTEGRATED.  PERHAPS SOME OF IT CAN BE ADDED AT THE RIGHT PLACE.
		The examples in Figure~\ref{fig:rhombicuboctahedron-equilibria} are the so-called Rhombicuboctahedron and the elongated square gyrobicupola, whose boundary consists of $16$ squares and $8$ triangles.  The elongated square gyrobicupola has half the vertices rotated by $90^\circ$.
		While there are $24$ facets, we observe only $8$ $1$-saddles, presumably one for each triangle in both cases.
		Similarly, there are $48$ edges, but we observe only $36$ $2$-saddles in the rhombicuboctahedron and $32$ $2$-saddles in the elongated square gyrobicupola, presumably one for each triangle edge and an additional one for each square neighbouring two triangles where super level set components corresponding to the triangles merge.  The later case occuring more frequently in the rhombicuboctahedron.
	} 
	\begin{figure}[hbt]
		\vspace{-0.1in}
		\centering
		\includegraphics[width=0.38\textwidth]{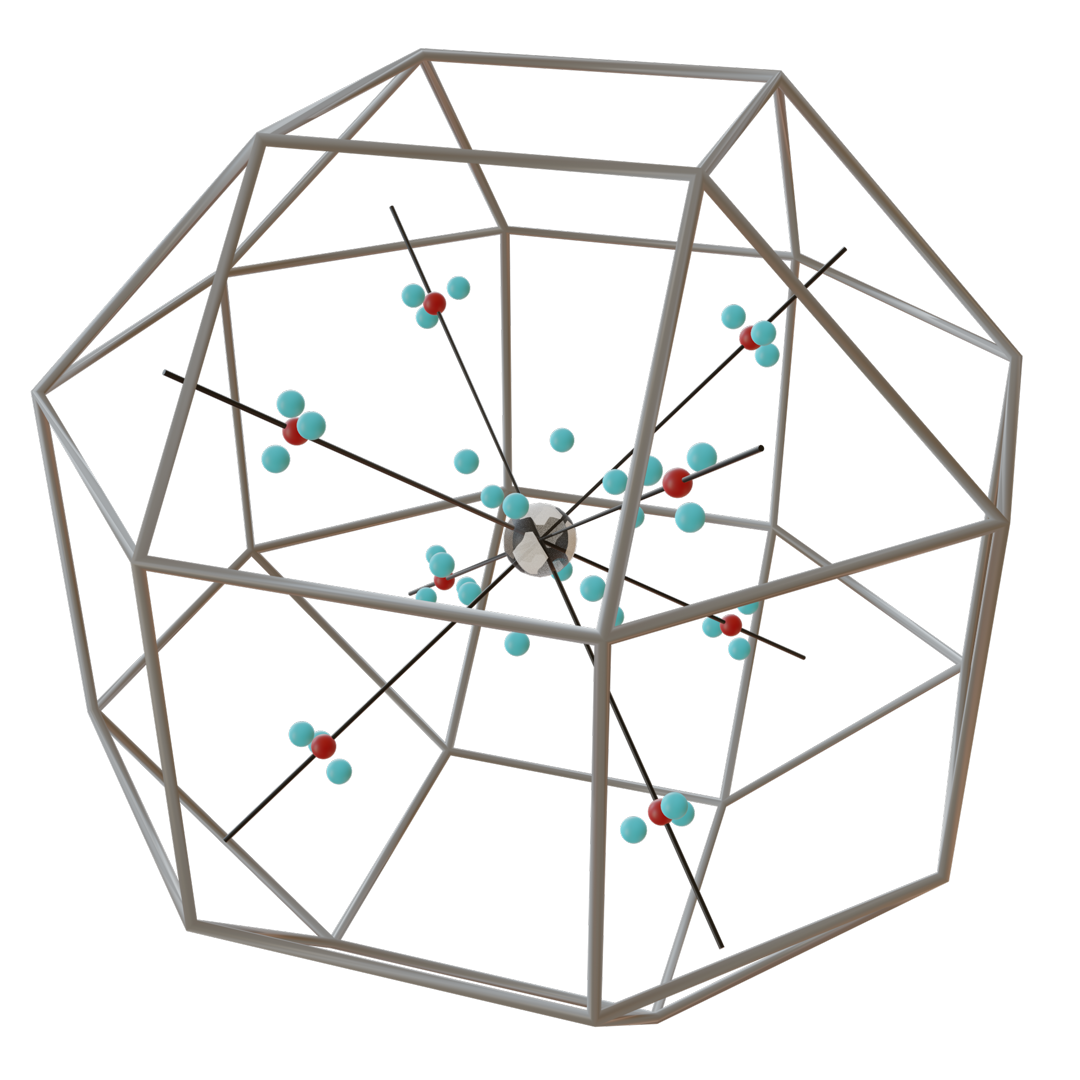} \hspace{0.2in}
		\includegraphics[width=0.38\textwidth]{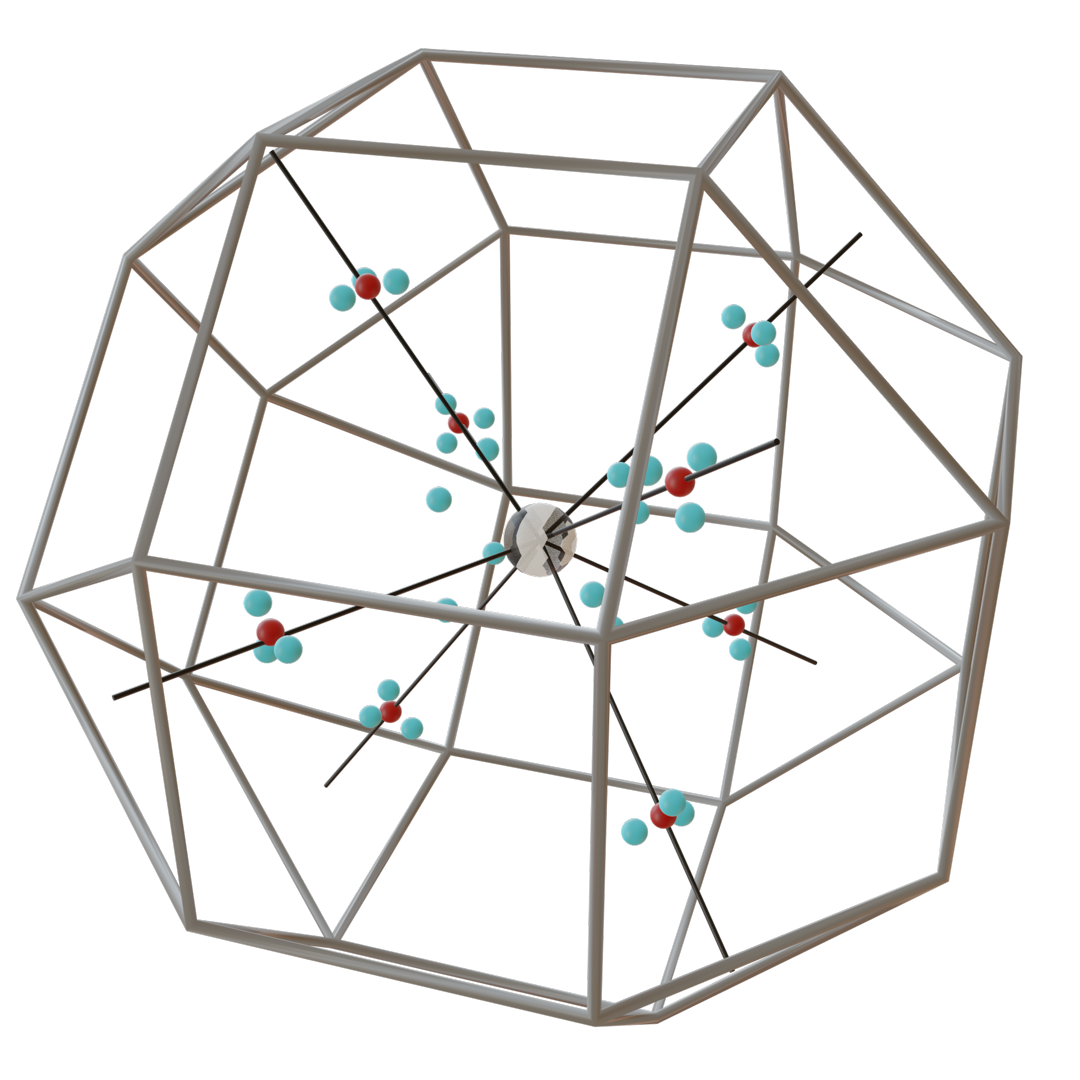} \\
		\vspace{-0.1in}
		\caption{\footnotesize \emph{Left:} the Rhombicuboctahedron and the equilibria of the electrostatic potential generated by unit point charges at its vertices.
			There are $36$ \emph{light blue} $2$-saddles, $8$ \emph{dark red} $1$-saddles, and a degenerate equilibrium at the center.
			\emph{Right:} the Elongated Square Gyrobicupola---which differs from the Rhombicuboctahedron by the left cupola being rotated by $45^\circ$--- and the equilibria of the electrostatic potential generated by unit point charges at its vertices.
			There are fewer equilibria: $32$ \emph{light blue} $2$-saddles, $8$ \emph{dark red} $1$-saddles, and a degenerate equilibrium at the center.  
		}
		\label{fig:rhombicuboctahedron-equilibria}
	\end{figure}
	
	The \emph{Catalan solids} are dual to the Archimedean solids, so the vertices have regular local shapes, and the facets are congruent to each other.
	There are again $13$ classical examples, and the Pseudo Deltoidal Icositegrahedron as a $14$-th solid that is not facet-transitive.
	Comparing rows $5$ and $14$ in Table~\ref{tbl:Catalan-solids}, we see that the Pseudo Deltoidal Icositegrahedron generates again fewer equilibria than the similar Deltoidal Icositetrahedron.
	\begin{table}[hbt]
		\centering \footnotesize
		\begin{tabular}{l||c|ccc}
			\multicolumn{1}{c||}{solid} & $f$-vector     & \multicolumn{1}{c}{2-saddles} & \multicolumn{1}{c}{1-saddles} & \multicolumn{1}{c}{center}  \\ \hline \hline
			Triakis Tetrahedron                 & (8,18,12,1)       &  10 &  0 &  -3   \\
			Rhombic Dodecahedron                & (14,24,12,1)      &  18 &  0 &  -5   \\
			Triakis Octahedron                  & (14,36,24,1)      &  18 &  0 &  -5   \\
			Tetrakis Hexehedron                 & (14,36,24,1)      &  20 &  0 &  -7   \\
			Deltoidal Icositetrahedron          & (26,48,24,1)      &  30 &  0 &  -5   \\
			Disdyakis Dodecahedron              & (26,72,48,1)      &  30 &  0 &  -5   \\
			Pentagonal Icositetrahedron         & (38,60,24,1)      &  48 &  0 &  -11  \\
			Rhombic Triacontahedron             & (32,60,30,1)      &  42 &  0 &  -11  \\
			Triakis Icosahedron                 & (32,90,60,1)      &  42 &  0 &  -11  \\
			Pentakis Dodecahedron               & (32,90,60,1)      &  50 &  0 &  -19  \\
			Deltoidal Hexecontahedron           & (62,120,60,1)     &  72 &  0 &  -11  \\
			Disdyakis Triacontahedron           & (62,180,120,1)    &  72 &  0 &  -11  \\
			Pentagonal Hexecontahedron          & (92,150,60,1)     & 102 &  0 &  -11  \\
			Pseudo Deltoidal Icositetrahedron   & (26,48,24,1)      &  26 &  0 &  -1   
		\end{tabular}
		\vspace{0.1in}
		\caption{\footnotesize  The equilibria of the electrostatic potential of point charges at the vertices of the thirteen Catalan solids and the Pseudo Deltoidal Icositetrahedron.
			The $f$-vector gives the number of vertices, edges, facets, and the solid itself, in this sequence.
			Except possibly in the last case, the center is a degenerate equilibrium, and in each case, the number of $1$-saddles vanishes; compare with Table~\ref{tbl:Archimedean-solids}.}
		\label{tbl:Catalan-solids}
	\end{table}
	
	\smallskip
	Finally, we note that all Archimedean solids define electrostatic potentials with a positive number of (non-degenerate) $1$-saddles, while all Catalan solids generate no such $1$-saddles; compare Tables~\ref{tbl:Archimedean-solids} and \ref{tbl:Catalan-solids}.
	Curiously, also the Platonic solids generate no $1$-saddles; see Table~\ref{tbl:solids}.
	At this time, the authors of this paper have no explanation for this curious observation.

	\section{Prisms and Anti-prisms}
	\label{app:C}
	
	In this appendix, we provide details about the equilibria of the electrostatic potential for unit point charges placed at the vertices of regular prisms and anti-prisms.
	After introducing the building blocks of the analysis, we focus on the square and triangular anti-prisms.
	
	\subsection{Regular polygons}
	\label{app:C.1}
	
	Let $V_0$ be the electric potential generated by unit point charges located at the vertices of a regular $N$-gon centered at the origin, $0 \in \R^3$, and placed on the horizontal plane spanned by the first two coordinate vectors. 
	Denote by $R$ the radius of the circle that passes through its vertices, and by $\alpha = \frac{2\pi}{N}$ the smallest angle that is a rotation symmetry of the $N$-gon.
	We start by proving that the partial derivatives are symmetric with respect to the horizontal plane, and they all vanish at the origin.
	\begin{lemma}
		\label{lem:derivatives_for_prism}
		Let $V_0 \colon \R^3 \to \R$ be the electrostatic potential of unit point charges at the vertices of a regular $N$-gon, as described, and write $x=(x_1,x_2,x_3)$ for a point with its Cartesian coordinates in $\R^3$.
		Then
		\begin{align}
			\frac{\partial V_0}{\partial x_3} (x_1,x_2,x_3) = - \frac{\partial V_0}{\partial x_3} (x_1,x_2,-x_3)
			\mbox{\rm ~~and~~}
			\frac{\partial V_0}{\partial x_1} (0,0,x_3) &= \frac{\partial V_0}{\partial x_2} (0,0,x_3) = 0.
		\end{align}
	\end{lemma}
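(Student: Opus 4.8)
The plan is to exploit the reflection and rotation symmetries of the regular $N$-gon, exactly as in the cube computation of Appendix~\ref{app:A}. First I would observe that the reflection $\rho \colon (x_1,x_2,x_3) \mapsto (x_1,x_2,-x_3)$ across the horizontal plane fixes every vertex $A_i$ of the $N$-gon (since all vertices lie in the plane $x_3 = 0$), hence leaves the potential invariant: $V_0 \circ \rho = V_0$. Differentiating this identity and using that $\rho$ negates the third coordinate while fixing the first two, one gets
\begin{align*}
\frac{\partial V_0}{\partial x_3}(x_1,x_2,-x_3) &= -\frac{\partial V_0}{\partial x_3}(x_1,x_2,x_3),
\end{align*}
which is the first claimed identity. (Equivalently: $V_0$ is an even function of $x_3$, so $\partial_{x_3} V_0$ is odd in $x_3$.)

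For the second pair of identities, I would restrict attention to a point on the vertical axis, $x = (0,0,x_3)$. The key point is that the axis is fixed pointwise by the rotation symmetries of the $N$-gon. Concretely, let $\tau \colon \R^3 \to \R^3$ be rotation by the angle $\alpha = \tfrac{2\pi}{N}$ about the $x_3$-axis; then $\tau$ permutes the vertices $A_i$, so $V_0 \circ \tau = V_0$. Taking the gradient and evaluating on the axis (where $\tau$ fixes the point), we get $\tau(\nabla V_0(0,0,x_3)) = \nabla V_0(0,0,x_3)$; that is, $\nabla V_0(0,0,x_3)$ is a vector in $\R^3$ fixed by rotation through $\alpha$ about the $x_3$-axis. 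For $N \geq 3$ the only such vectors are those parallel to the axis, forcing the horizontal components $\partial_{x_1} V_0(0,0,x_3)$ and $\partial_{x_2} V_0(0,0,x_3)$ to vanish. (For $N = 1, 2$ one can argue separately, or simply note that a mirror symmetry in the plane $x_1 = 0$ and another in $x_2 = 0$ each fix the axis and force the corresponding partial to vanish; these mirror planes exist for every $N$ because the $N$-gon is regular, so this alternative argument covers all cases uniformly.)

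I expect no serious obstacle here — the argument is a direct symmetry computation of the same flavor as Appendix~\ref{app:A}, and the only point requiring a word of care is the step ``a vector fixed by rotation through $2\pi/N$ about an axis must be parallel to that axis,'' which holds precisely because $\alpha$ is not a multiple of $2\pi$. I would present the mirror-symmetry version as the primary argument since it handles all $N$ at once and makes the vanishing of each individual partial derivative transparent, using the two reflections across the coordinate planes $\{x_1 = 0\}$ and $\{x_2 = 0\}$ (each of which is a symmetry of a suitably-oriented regular $N$-gon, or can be arranged to be after a harmless rotation of the configuration in its own plane, which does not affect the statement).
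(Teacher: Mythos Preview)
Your proposal is correct and follows essentially the same route as the paper: reflection across $\{x_3=0\}$ for the first identity, and the rotational symmetry by $\alpha = 2\pi/N$ for the second, the paper writing out the resulting $2\times 2$ linear system explicitly where you phrase it geometrically as ``the gradient is fixed by a nontrivial rotation about the axis, hence parallel to the axis.''

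One small caution about your proposed mirror-symmetry alternative: for odd $N$ a regular $N$-gon never has two \emph{perpendicular} mirror lines, so you cannot arrange both $\{x_1=0\}$ and $\{x_2=0\}$ to be symmetry planes simultaneously by a rotation in its own plane. The rotation argument you gave first is the clean one (and is what the paper uses); I would keep that as the primary argument rather than the mirror version.
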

	\begin{proof}
		The configuration is invariant by reflection through the $x_3=0$ plane, which contains the $N$-gon.
		Hence, $V_0(x_1,x_2,x_3) = V_0(x_1,x_2,-x_3)$ and differentiating with respect to $x_3$ gives the first relation.
		To see the second relation, note that the invariance of the $N$-gon under rotations of the horizontal plane by an angle $\alpha$ implies
		\begin{align} 
			V_0(x_1,x_2,x_3) &= V_0(x_1 \cos{\alpha} + x_2 \sin{\alpha}, -x_1 \sin{\alpha} + x_2 \cos{\alpha} , x_3),
		\end{align}
		for all $x=(x_1,x_2,x_3) \in \R^3$. 
		Differentiating with respect to $x_1$ and $x_2$, respectively, and evaluating at $x_1 = x_2 = 0$ yields
		\begin{align}
			\frac{\partial V_0}{\partial x_1} (0,0,x_3) &= \cos{\alpha} \frac{\partial V_0}{\partial x_1} (0,0,x_3) - \sin{\alpha} \frac{\partial V_0}{\partial x_2} (0,0,x_3) ; \\
			\frac{\partial V_0}{\partial x_2} (0,0,x_3) &= \sin{\alpha} \frac{\partial V_0}{\partial x_1} (0,0,x_3) + \cos{\alpha} \frac{\partial V_0}{\partial x_2} (0,0,x_3) .
		\end{align}
		This gives a system of linear equations for $\frac{\partial V_0}{\partial x_1} (0,0,x_3)$ and $\frac{\partial V_0}{\partial x_1} (0,0,x_3)$ whose only solution is when both vanish, as claimed.
	\end{proof}
	
	It will also be useful to have the second partial derivative at the origin in vertical direction.
	\begin{lemma}
		\label{lem:eigenvalue_for_prism}
		For $V_0 \colon \R^3 \to \R$ as before and all $x_3 \in \R$, we have 
		\begin{align}
			\frac{\partial^2 V_0}{ (\partial x_3)^2} (0,0, x_3 ) &= - \frac{N}{(R^2+x_3^2)^{5/2}}\left( R^2 - 2 x_3^2 \right).
			\label{eqn:E2}
		\end{align}
	\end{lemma}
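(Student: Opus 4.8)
The plan is to compute the second partial derivative $\partial^2 V_0 / (\partial x_3)^2$ directly at points of the form $(0,0,x_3)$ by differentiating the explicit expression for $V_0$ twice. Since $V_0(x) = \sum_{k=1}^N \norm{x - A_k}^{-1}$ with the $A_k$ being the $N$ vertices of the regular $N$-gon in the horizontal plane, the key simplification is that each vertex $A_k$ lies on the circle of radius $R$ in the plane $x_3 = 0$, so for a point $(0,0,x_3)$ on the vertical axis every vertex is at the \emph{same} distance $\sqrt{R^2 + x_3^2}$ from it. This is what makes the sum collapse to a clean closed form.

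Concretely, I would write $r_k(x) = \norm{x - A_k}$ and recall from Lemma~\ref{lem:Clearing out denominators} (or a direct computation) that $\partial_{x_3} r_k^{-1} = -(x_3 - (A_k)_3)/r_k^3$, and then differentiate once more to get
\begin{align*}
\frac{\partial^2}{(\partial x_3)^2} \frac{1}{r_k} &= -\frac{1}{r_k^3} + \frac{3(x_3 - (A_k)_3)^2}{r_k^5}.
\end{align*}
Summing over $k$ and then evaluating at $(0,0,x_3)$: here $(A_k)_3 = 0$ for all $k$ because the $N$-gon is in the horizontal plane, and $r_k = \sqrt{R^2 + x_3^2}$ independently of $k$. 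Hence the sum becomes $N\left(-\frac{1}{(R^2+x_3^2)^{3/2}} + \frac{3x_3^2}{(R^2+x_3^2)^{5/2}}\right)$, and factoring out $(R^2+x_3^2)^{-5/2}$ gives $\frac{N}{(R^2+x_3^2)^{5/2}}\left(-(R^2+x_3^2) + 3x_3^2\right) = -\frac{N}{(R^2+x_3^2)^{5/2}}(R^2 - 2x_3^2)$, which is exactly \eqref{eqn:E2}.

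There is essentially no obstacle here: the statement is a one-line consequence of the rotational symmetry of the configuration (all vertices equidistant from any axis point), which is what eliminates the need to track the individual vertex positions. The only point requiring a word of care is justifying that we may differentiate the finite sum term by term — immediate, since away from the $A_k$ each summand is smooth — and noting that the axis point $(0,0,x_3)$ never coincides with any vertex as long as $R > 0$, so $r_k \ne 0$ and the expression is well-defined for all $x_3 \in \R$.
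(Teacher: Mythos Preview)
Your proposal is correct and follows essentially the same approach as the paper: compute $\partial^2 (1/r_k)/(\partial x_3)^2$ directly, then evaluate at $(0,0,x_3)$ using that each vertex lies in the plane $x_3=0$ at distance $R$ from the axis, so all $r_k$ equal $\sqrt{R^2+x_3^2}$ and the sum collapses to $N$ identical terms. The paper's write-up differs only cosmetically, passing through $\partial r_i/\partial x_3$ as an intermediate step before arriving at the same expression $-\sum r_i^{-5}(r_i^2 - 3x_3^2)$.
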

	\begin{proof}
		Denote the vertices of the $N$-gon by $A_i = (a_{i,1},a_{i,2},0)$, for $i=1,2, \ldots , N$, and write $r_i$ for the distance of $A_i$ from a point $x = (x_1, x_2, x_3)$.
		Then the first and second partial derivatives along the vertical direction are
		\begin{align}
			\frac{\partial V_0}{ \partial x_3} 
			&= -\sum_{i=1}^N \frac{1}{r_i^{2}} \frac{\partial r_i}{\partial x_3} 
			= -\sum_{i=1}^N \frac{x_3}{r_i^{3}} ; \\
			\frac{\partial^2 V_0}{(\partial x_3)^2} 
			&= -\sum_{i=1}^N \left( \frac{1}{r_i^{3}} - 3\frac{x_3}{r_i^{4}} \frac{\partial r_i}{\partial x_3} \right)
			= -\sum_{i=1}^N \left(  \frac{1}    {r_i^{3}} - 3 \frac{x_3^2}{r_i^{5}} \right)
			= -\sum_{i=1}^N \frac{1}{r_i^{5}}\left( r_i^2 - 3 x_3^2 \right).
		\end{align}
		Evaluating the second partial derivative at $(0,0,x_3)$, we have $r_i^2=R^2+x_3^2$ independent of $i$, which implies \eqref{eqn:E2}.
	\end{proof}
	
	\subsection{Rotated prisms}
	\label{app:C.2}
	
	We get a prism by connecting a regular $N$-gon with a copy of itself translated in an orthogonal direction.
	If we first rotate the copy by an angle $\beta$---around its center and within its plane---we call the convex hull of the two $N$-gons a \emph{$\beta$-rotated prism}.
	For $\beta = \frac{\alpha}{2}$, this is an anti-prism.
	Its \emph{axis} is the line that passes through the centers of the two $N$-gons, its \emph{radius} is the distance of the vertices from this line, and its \emph{height} is the distance between the planes that contain the two $N$-gons.
	We now consider the electrostatic potential obtained by placing unit point charges at the vertices of such a $\beta$-rotated prism.
	Supposing the $N$-gons lie in horizontal planes at distance $\frac{h}{2}$ above and below the origin, the electrostatic potential is
	\begin{align}
		V(x_1,x_2,x_3) &= V_0(x_1,x_2,x_3+\tfrac{h}{2}) \nonumber \\
		&+ V_0(x_1 \cos{\beta} + x_2 \sin{\beta}, -x_1 \sin{\beta} + x_2 \cos{\beta} ,x_3-\tfrac{h}{2}) .
		\label{eqn:beta-potential}
	\end{align}
	Not surprisingly, the type of the equilibrium at the origin depends on the height, but with the exception of a particular height, it does not depend on the angle of the rotation.
	\begin{theorem}\label{thm:center_of_prism}
		Let $\beta \in \R / 2 \pi \Z$, $R>0$, and $h>0$. 
		Consider the electrostatic potential generated by unit point charges at the vertices of a $\beta$-rotated prism with radius $R$ and height $h$. 
		Then, its center is a non-degenerate equilibrium iff $h \neq \sqrt{2}R$, in which case the center is a $1$-saddle if $h < \sqrt{2} R$ and a $2$-saddle if $h > \sqrt{2} R$.
		Moreover, in the latter case there are two additional $1$-saddles along the axis of the $\beta$-rotated prism.
	\end{theorem}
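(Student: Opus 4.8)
The plan is to work throughout with the explicit expression \eqref{eqn:beta-potential}, write $V(x)=V_0(x_1,x_2,x_3+\tfrac h2)+V_0(\rho_\beta(x_1,x_2),x_3-\tfrac h2)$ with $\rho_\beta$ the rotation by $\beta$ in the horizontal plane, and reduce every statement to the two lemmas already proved for $V_0$. First I would check that the center $0$ is always an equilibrium. Since $\rho_\beta$ fixes $(0,0)$, Lemma~\ref{lem:derivatives_for_prism} applied to each summand gives $\partial_{x_1}V(0,0,x_3)=\partial_{x_2}V(0,0,x_3)=0$ for \emph{all} $x_3$ (in particular at the origin), and the same lemma says $t\mapsto\partial_{x_3}V_0(0,0,t)$ is odd, so $\partial_{x_3}V(0)=\partial_{x_3}V_0(0,0,\tfrac h2)+\partial_{x_3}V_0(0,0,-\tfrac h2)=0$.

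Next I would compute the Hessian $H$ of $V$ at $0$. Differentiating the identities $\partial_{x_1}V(0,0,x_3)=\partial_{x_2}V(0,0,x_3)=0$ in $x_3$ kills the mixed entries $\partial_{x_1x_3}V(0)=\partial_{x_2x_3}V(0)=0$; and since $V$ is invariant under the rotation by $\alpha=2\pi/N$ about the axis (which fixes $0$ and is through an angle in $(0,\pi)$ for $N\ge 3$), the horizontal $2\times2$ block of $H$ commutes with that planar rotation and hence equals a scalar $\lambda I_2$. Thus $H=\mathrm{diag}(\lambda,\lambda,\mu)$ with $\mu=\partial_{x_3}^2V(0)$; as $\partial_{x_3}^2V_0(0,0,t)$ is even in $t$, Lemma~\ref{lem:eigenvalue_for_prism} yields
\[
\mu \;=\; 2\,\partial_{x_3}^2V_0\!\big(0,0,\tfrac h2\big)\;=\;-\frac{2N}{(R^2+h^2/4)^{5/2}}\Big(R^2-\tfrac{h^2}{2}\Big).
\]
Because $V$ is harmonic away from the charges (Section~\ref{sec:2.1}) and $0$ is not a charge, $2\lambda+\mu=\Delta V(0)=0$, so $\lambda=-\mu/2$. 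Reading off signs proves part~(1): for $h<\sqrt2\,R$ one has $\mu<0<\lambda$, so $0$ is a non-degenerate index-$1$ equilibrium; for $h>\sqrt2\,R$ one has $\lambda<0<\mu$, so $0$ is a non-degenerate index-$2$ equilibrium; and for $h=\sqrt2\,R$ all of $\lambda,\mu$ vanish, so $0$ is degenerate.

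For part~(2), assume $h>\sqrt2\,R$ and restrict $V$ to the axis: $g(x_3):=V(0,0,x_3)=V_0(0,0,x_3+\tfrac h2)+V_0(0,0,x_3-\tfrac h2)$, which is even, strictly positive, and tends to $0$ as $x_3\to\pm\infty$. From part~(1), $g''(0)=\mu>0$, so $0$ is a strict local minimum of $g$; since then $\sup_\R g>g(0)>0=\lim_{\pm\infty}g$, the maximum of $g$ is attained at a symmetric pair $\pm x_3^\ast$ with $x_3^\ast>0$. As the horizontal gradient of $V$ vanishes identically on the axis, each $(0,0,x_3^\ast)$ is an equilibrium of $V$, and by the same symmetry and mixed-partial arguments as above its Hessian is $\mathrm{diag}(\nu,\nu,\rho)$ with $\rho=g''(x_3^\ast)\le 0$ (second derivative at a maximum) and $2\nu+\rho=0$ by harmonicity; hence $\nu\ge 0$, and these are $1$-saddles as soon as $\rho\ne 0$.

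The main obstacle is precisely this last point: ruling out the degenerate case $\rho=g''(x_3^\ast)=0$ and confirming that $\pm x_3^\ast$ are the \emph{only} axis equilibria besides the center. For this I would use the explicit formula coming from the proof of Lemma~\ref{lem:eigenvalue_for_prism}, namely $g'(x_3)=-N\big(\psi(x_3+\tfrac h2)+\psi(x_3-\tfrac h2)\big)$ with $\psi(t)=t(R^2+t^2)^{-3/2}$, exploiting that $\psi$ is odd, decays at $\pm\infty$, and is increasing on $(-R/\sqrt2,\,R/\sqrt2)$ and decreasing outside. Writing $g'(x_3)=-N\big(\psi(x_3+\tfrac h2)-\psi(\tfrac h2-x_3)\big)$ for $0<x_3<\tfrac h2$ and using $\tfrac h2>R/\sqrt2$, one sees that any zero $x_3^\ast$ must lie in $(\tfrac h2-R/\sqrt2,\ \tfrac h2)$, where $x_3^\ast+\tfrac h2$ sits on the decreasing branch of $\psi$ and $\tfrac h2-x_3^\ast$ on the increasing branch; a short monotonicity estimate on $\psi$ on these two branches then yields both the uniqueness of $x_3^\ast$ in $(0,\infty)$ and the strict sign $g''(x_3^\ast)<0$, so that $\nu>0>\rho$ and $(0,0,\pm x_3^\ast)$ are non-degenerate $1$-saddles. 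These are exactly the elementary but somewhat fiddly estimates that the appendix is expected to carry out in full.
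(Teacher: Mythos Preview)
Your proposal is correct and follows essentially the same route as the paper's proof: both verify that the origin is critical via Lemma~\ref{lem:derivatives_for_prism}, diagonalize the Hessian at $0$ using axial symmetry plus harmonicity, invoke Lemma~\ref{lem:eigenvalue_for_prism} to read off the eigenvalue signs, and then locate the two extra axis equilibria in the case $h>\sqrt{2}R$ and classify them as $1$-saddles by combining harmonicity with the rotational symmetry of the Hessian.

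Two small differences are worth flagging. First, for the equality $\lambda_1=\lambda_2$ you invoke invariance under the rotation by $\alpha=2\pi/N$, which is the correct symmetry for every $\beta$ (including the prism case $\beta=0$); the paper phrases this as ``rotational symmetry by $\beta\neq 0$'', which is at best shorthand. Second, for the existence of the extra axis equilibria you use a clean global-maximum argument on the even function $g(x_3)=V(0,0,x_3)$, whereas the paper evaluates $g'$ explicitly at $x_3=\pm h/2$ and applies the intermediate value theorem; both work. Finally, you are more scrupulous than the paper about the non-degeneracy $g''(x_3^\ast)\neq 0$: the paper simply asserts that the axis direction is a negative eigenvalue and does not carry out the monotonicity estimate you sketch, so your extra step, while reasonable, goes beyond what the paper actually proves.
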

	\begin{proof}
		To use the form of the potential in \eqref{eqn:beta-potential}, we assume that the two $N$-gons of the $\beta$-rotated prism lie in horizontal planes and its center is the origin in $\R^3$.
		We start by confirming that the origin is always an equilibrium.
		Using Lemma~\ref{lem:derivatives_for_prism}, we obtain
		\begin{align}
			\frac{\partial V}{\partial x_i} (0,0,0) & = \frac{\partial V_0}{\partial x_i} (0,0,\tfrac{h}{2}) + \frac{\partial V_0}{\partial x_i} (0,0,- \tfrac{h}{2}) = 0 ,
		\end{align}
		which for $i = 1, 2$ is true because both partial derivatives vanish, and for $i = 3$ follows from the symmetry across the horizontal coordinate plane.
		Next we investigate the Hessian at the origin.
		By differentiating the identity $\frac{\partial V_0}{\partial x_3} (x_1,x_2,x_3) = - \frac{\partial V_0}{\partial x_3} (x_1,x_2,-x_3)$ with respect to $x_1$, $x_2$, and $x_3$, we obtain 
		\begin{align}
			\frac{\partial^2 V_0}{\partial x_i \partial x_3} (x_1,x_2,x_3) & = - \frac{\partial^2 V_0}{\partial x_i \partial x_3} (x_1,x_2,-x_3) ; \\
			\frac{\partial^2 V_0}{(\partial x_3)^2} (x_1,x_2,x_3) & = \frac{\partial^2 V_0}{ (\partial x_3)^2} (x_1,x_2,-x_3),
		\end{align}
		for $i = 1,2$.
		Hence, using the first of these identities, we have
		\begin{align}
			\frac{\partial^2 V}{\partial x_i \partial x_3}(0,0,0) = \frac{\partial^2 V_0}{\partial x_i \partial x_3} (0,0, \tfrac{h}{2}) + \frac{\partial^2 V_0}{\partial x_i \partial x_3} (0,0, -\tfrac{h}{2}) =0.
		\end{align}
		This shows that $\lambda_3 = \frac{\partial^2 V}{(\partial x_3)^2}(0,0,0)$ is an eigenvalue of the Hessian at the origin.
		Using the second identity above, we see that this eigenvalue satisfies
		\begin{align}
			\lambda_3 &= \frac{\partial^2 V}{(\partial x_3)^2}(0,0,0) = \frac{\partial^2 V_0}{\partial x_i \partial x_3} (0,0, \tfrac{h}{2}) + \frac{\partial^2 V_0}{\partial x_i \partial x_3} (0,0, -\tfrac{h}{2}) = 2 \frac{\partial^2 V_0}{(\partial x_3)^2} (0,0, \tfrac{h}{2}).
		\end{align}
		The rotational symmetry by $\beta \neq 0$ then implies that the remaining two eigenvalues must coincide: $\lambda_1 = \lambda_2 = \lambda$.
		Furthermore, since $V$ is harmonic, we have $\lambda = -\frac{\partial^2 V_0}{(\partial x_3)^2} (0,0, \frac{h}{2})$. 
		It remains to compute this quantity and show it vanishes precisely when $h = \sqrt{2} R$ as claimed in the statement. 
		This can be done by making use of Lemma \ref{lem:eigenvalue_for_prism} to obtain
		\begin{align}
			\frac{\partial^2 V}{(\partial x_3)^2}(0,0,0) = - \frac{2N}{(R^2+x_3^2)^{5/2}}\left( R^2 - \tfrac{h^2}{2} \right),
		\end{align}
		which vanishes iff $h=\sqrt{2}R$.
		Furthermore, the eigenvalues of the Hessian at the origin are   
		\begin{align}
			\lambda_3 & = - \frac{2N}{(R^2+x_3^2)^{5/2}}\left( R^2 - \tfrac{h^2}{2} \right) ; \\
			\lambda_1 = \lambda_2 & = \frac{N}{(R^2+x_3^2)^{5/2}}\left( R^2 - \tfrac{h^2}{2} \right) .
		\end{align}
		We therefore conclude that for $h<\sqrt{2}R$ we have $\lambda_1=\lambda_2>0$ and $\lambda_3<0$, so the origin is a $1$-saddle. On the other hand, if $h>\sqrt{2}R$ we have $\lambda_1=\lambda_2<0$ and $\lambda_3>0$, and so the origin is a $2$-saddle.
		However, in this case there are at least two additional equilibria along the axis of the $\beta$-rotated prism, which by construction is the $x_3$-axis.
		To prove their existences, we compute the derivatives of $V$ along this axis:
		\begin{align}
			\frac{\partial V}{\partial x_i} (0,0,x_3) & = \frac{\partial V_0}{\partial x_i} (0,0,x_3+\tfrac{h}{2}) + \frac{\partial V_0}{\partial x_i} (0,0,x_3- \tfrac{h}{2}) = 0 +0 =0 ,
		\end{align}
		for $i=1,2$. 
		Hence, we need only look for zeroes of $\frac{\partial V}{\partial x_3} (0,0,x_3)$, which we compute as
		\begin{align}
			\frac{\partial V}{\partial x_3} (0,0,x_3) &= \frac{\partial V_0}{\partial x_3} (0,0,x_3+\tfrac{h}{2}) + \frac{\partial V_0}{\partial x_3} (0,0,x_3- \tfrac{h}{2}) .
		\end{align}
		To compute the two terms on the right-hand side, we use Lemma \ref{lem:eigenvalue_for_prism} and find that $\frac{\partial V_0}{ \partial x_3} = -\sum\nolimits_{i=1}^N \frac{x_3}{r_i^{3}}$, where $r_i(x_1,x_2,x_3)^2=(x_1+a_{i1})^2+(x_2-a_{i2})^2+x_3^2$.
		From this it follows that
		\begin{align}
			\frac{\partial V}{\partial x_3} (0,0,x_3) &=  -\sum\nolimits_{i=1}^N \frac{x_3+\frac{h}{2}}{r_i^{3}(0,0,x_3+\frac{h}{2})}  -\sum\nolimits_{i=1}^N \frac{x_3- \frac{h}{2}}{r_i^{3}(0,0,x_3-\frac{h}{2})} \\
			&=  -N \frac{x_3+\frac{h}{2}}{\left(  R^2 + (x_3+\frac{h}{2})^2 \right)^{3/2}}  - N \frac{x_3- \frac{h}{2}}{\left(  R^2 + (x_3-\frac{h}{2})^2 \right)^{3/2}} .
		\end{align}
		Evaluating this at the top and bottom of the $\beta$-rotated prism, we obtain
		\begin{align}
			\frac{\partial V}{\partial x_3} (0,0,\tfrac{h}{2})=-\frac{h}{(R^2+h^2)^{3/2}} <0 \mbox{\rm ~~and~~} \frac{\partial V}{\partial x_3} (0,0,-\tfrac{h}{2})=\frac{h}{(R^2+h^2)^{3/2}}>0.
		\end{align}
		On the other hand, further differentiating with respect to $x_3$, as in the proof of Theorem \ref{thm:center_of_prism}, we find that
		\begin{align}
			\frac{\partial^2 V}{\partial x_3^2}(0,0,0) = - \frac{2N}{(R^2+x_3^2))^{5/2}}\left( R^2 - \tfrac{h^2}{2} \right),
		\end{align}
		which is positive when $h>\sqrt{2}R$. 
		Hence, there exists $\ee>0$, such that $\frac{\partial V}{\partial x_3} (0,0,t)$ is positive for $t \in (0,\ee)$ and negative for $t\in (-\ee,0)$. 
		Combining this with the previously computed values of $\frac{\partial V}{\partial x_3}$ at the centers of the two $N$-gons, the intermediate value theorem implies that $\frac{\partial V}{\partial x_3} $ has two additional zeroes along the $x_3$-axis.
		
		\smallskip
		We now characterize these equilibria by showing that they are $1$-saddles. 
		At both, $V(0,0,x_3)$ increases on the left and decreases on the right, so they are local maxima of the potential restricted to the $x_3$-axis. 
		Given that the potential is harmonic, at least one of the remaining eigenvalues of the Hessian must be positive, and by rotational symmetry so must the other. 
		As the Hessian has a single negative eigenvalue, both equilibria are $1$-saddles.
	\end{proof}
	
	\subsection{Existence results for equilibria in ati-prisms}
	\label{app:C.3}
	
	We shall consider an anti-prims whose top and bottom faces are $N$-gons and denote their minimum angle of symmetry by $\alpha=\frac{2\pi}{N}$. We shall let $R>0$ be the radius of these $N$-gons and $h>0$ the height of the anti-prism, then its relative height is defined by $\frac{h}{R}$.
	
	Identify $\mathbb{R}^3 \cong \mathbb{C} \times \mathbb{R}$, then for $\ell \in \{ 1, \ldots ,N \}$ we shall denote the vertices of the top $N$-gon as $A_\ell=(Re^{i (\ell \alpha + \frac{\alpha}{4})} , \frac{h}{2})$ and those at the bottom $N$-gon as $B_\ell=(Re^{-i (\ell \alpha + \frac{\alpha}{4})} , -\frac{h}{2})$. Then, the electric potential generated by placing unit point charges at the vertices of this anti-prism is
	$$V(x)=V_u(x)+V_d(x),$$
	where 
	$$V_u(x)=\sum_{\ell=1}^N\frac{1}{|x-A_\ell|} , \ \  V_d(x)=\sum_{\ell=1}^N\frac{1}{|x-B_\ell|}.$$ 
	We shall now state and prove the following result.
	
	\begin{theorem}
		Let $V$ be the electric potential generated by placing unit point charges at the vertices of an anti-prism as above and suppose its relative height satisfies $\frac{h}{R} < \sqrt{2}$. Then, there are electrostatic points located along the line segments connecting the center of the anti-prism to the midpoints of its lateral edges. 
	\end{theorem}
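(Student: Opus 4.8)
The plan is to use a symmetry to reduce to a one-dimensional intermediate value argument and then propagate the resulting equilibrium to all lateral edges. Identify $\R^3\cong\C\times\R$ as in the statement and consider the lateral edge joining $A_0=(Re^{i\alpha/4},h/2)$ to $B_0=(Re^{-i\alpha/4},-h/2)$; its midpoint is $P_0=(R\cos(\alpha/4),0,0)$, which I place on the positive $x_1$-axis, henceforth $L$. The rotation by $\pi$ about $L$, namely $\tau(x_1,x_2,x_3)=(x_1,-x_2,-x_3)$, sends $A_\ell$ to $B_\ell$ for every $\ell$, hence is a symmetry of the charge configuration, so that $V\circ\tau=V$. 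Since $\tau$ fixes $L$ pointwise and acts on tangent vectors by $\mathrm{diag}(1,-1,-1)$, differentiating $V\circ\tau=V$ along $L$ forces $\partial_2 V=\partial_3 V=0$ on $L$. Thus a point of $L$ is an equilibrium of $V$ exactly when
\[
 g(t)\;=\;\frac{\partial V}{\partial x_1}(t,0,0),\qquad t\in[0,\,R\cos(\alpha/4)],
\]
vanishes there; note that $g$ is smooth on this interval, since no charge lies on $L$ (every charge has third coordinate $\pm h/2\neq 0$).

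Next I would control $g$ at the two endpoints. At $t=0$ the centre is an equilibrium, so $g(0)=0$, and here the hypothesis $h/R<\sqrt 2$ is used: by Theorem~\ref{thm:center_of_prism} (applied with $\beta=\alpha/2$) the centre is then a $1$-saddle, and the computation in its proof shows the Hessian there is $\mathrm{diag}(\lambda_1,\lambda_1,\lambda_3)$ in the coordinate frame with horizontal eigenvalue $\lambda_1=\tfrac{N}{R^{5}}\bigl(R^2-\tfrac{h^2}{2}\bigr)>0$. Hence $g'(0)=\partial_1^2 V(0)=\lambda_1>0$, so $g>0$ just to the right of $0$. At $t=R\cos(\alpha/4)$ I would group the charges into the $\tau$-pairs $\{A_\ell,B_\ell\}$: along $L$ one has $|x-A_\ell|=|x-B_\ell|$, and both charges of the pair share first coordinate $R\cos(\ell\alpha+\alpha/4)$, so the pair contributes $-2\,(t-R\cos(\ell\alpha+\alpha/4))/|x-A_\ell|^{3}$ to $g(t)$. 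At $t=R\cos(\alpha/4)$ the pair $\{A_0,B_0\}$, the two charges closest to $P_0$, contributes exactly $0$, while for every $\ell\neq 0$ we have $\cos(\ell\alpha+\alpha/4)<\cos(\alpha/4)$, so that pair contributes a strictly negative amount; summing gives $g(R\cos(\alpha/4))<0$. The intermediate value theorem then produces a zero $t_\ast\in(0,R\cos(\alpha/4))$, that is, an equilibrium $(t_\ast,0,0)$ lying strictly in the interior of the segment from the centre to $P_0$. Finally, the symmetry group of the anti-prism acts transitively on its $2N$ lateral edges (the stabiliser of $A_0B_0$ is $\{1,\tau\}$), so applying these symmetries yields an equilibrium on the segment from the centre to the midpoint of every lateral edge, as claimed.

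I expect the only delicate point to be the sign of $g$ at the far endpoint: everything hinges on observing that the two charges closest to $P_0$ lie in the hyperplane $\{x_1=R\cos(\alpha/4)\}$, so their otherwise dominant contributions to $\partial_1 V$ cancel and only strictly negative terms survive — without the $\tau$-pairing this cancellation is not transparent. A secondary subtlety is that $g(0)=0$, so one genuinely needs the sign of $g'(0)$; this is precisely where $h/R<\sqrt 2$ enters, via Theorem~\ref{thm:center_of_prism} (for $h/R>\sqrt 2$ the centre is a $2$-saddle and $g'(0)<0$, consistent with these equilibria being present only below the critical relative height; cf.\ part~(3) of Theorem~\ref{thm:prisms}).
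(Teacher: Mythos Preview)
Your argument is correct and follows essentially the same route as the paper: reduce to the $x_1$-axis, show the transverse derivatives vanish there, and apply the intermediate value theorem to $\partial_1 V$ using positivity near the origin (from $h<\sqrt 2 R$) and negativity at the midpoint (from $\cos(\alpha/4)>\cos(\ell\alpha+\alpha/4)$ for $\ell\neq 0$). The only cosmetic differences are that the paper verifies $\partial_2 V=\partial_3 V=0$ by direct computation rather than via your $\tau$-symmetry, and it redoes the Taylor expansion at $t=0$ instead of quoting Theorem~\ref{thm:center_of_prism}; note also that your displayed value of $\lambda_1$ should have denominator $(R^2+h^2/4)^{5/2}$ rather than $R^5$, though only its sign is used.
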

	
	\begin{proof}
		We shall consider the case of the line segment connecting the center to the midpoint of the edge connecting $A_N$ to $B_N$. The remaining cases can be handled similarly.
		
		The midpoint between $A_N$ and $B_N$ is $\frac{A_N+B_N}{2}=(R \cos(\frac{\alpha}{4}) , 0 , 0)$ and we can parametrize the segment that connects the origin to it using $t \in [0,1] \mapsto (t R \cos(\frac{\alpha}{4}) , 0 , 0)$. The derivatives of $V(x)$ in the normal directions $n_1=(0,1,0)$ and $n_2=(0,0,1)$ are given by
		\begin{align*}
			\langle \nabla V , n_1 \rangle & = \frac{\partial V_u}{\partial x_2} + \frac{\partial V_d}{\partial x_2} \\
			\langle \nabla V , n_2 \rangle & = \frac{\partial V_u}{\partial x_3} + \frac{\partial V_d}{\partial x_3}.
		\end{align*}
		Now, we compute each of the terms in the rights hand side in turn. We start with
		\begin{align*}
			\frac{\partial V_u}{\partial x_2} & = -\sum_{\ell=1}^N\frac{(x_2-R\sin(\ell \alpha + \frac{\alpha}{4}))}{|x-A_\ell|^3}  \\
			\frac{\partial V_d}{\partial x_2} & = -\sum_{\ell=1}^N\frac{(x_2+R\sin(\ell \alpha + \frac{\alpha}{4}))}{|x-B_\ell|^3} ,
		\end{align*}
		on the other hand, along the line connecting $0$ to $\frac{A_N+B_N}{2}$, we have $x_2=0=x_3$ and
		\begin{align*}
			|x-A_\ell| & = |(x_1-R\cos(\ell \alpha + \tfrac{\alpha}{4}) , R\sin(\ell \alpha + \tfrac{\alpha}{4}) , 0 )| \\
			& = |(x_1-R\cos( - \ell \alpha - \tfrac{\alpha}{4}) ,  R\sin( - \ell \alpha - \tfrac{\alpha}{4}) , 0 )| \\
			&  = |x-B_\ell| .
		\end{align*} 
		which shows that along such a line
		\begin{align*}
			\frac{\partial V_u}{\partial x_2} =  - \frac{\partial V_d}{\partial x_2} .
		\end{align*}
		On the other hand,
		\begin{align*}
			\frac{\partial V_u}{\partial x_3} & = -\sum_{\ell=1}^N\frac{(x_3-\frac{h}{2})}{|x-A_\ell|^3}  \\
			\frac{\partial V_d}{\partial x_3} & = -\sum_{\ell=1}^N\frac{(x_3+\frac{h}{2})}{|x-B_\ell|^3} ,
		\end{align*}
		and, again, along the line $x=2=0=x_3$ we have
		\begin{align*}
			\frac{\partial V_u}{\partial x_3} =  - \frac{\partial V_d}{\partial x_3} .
		\end{align*}
		Hence,
		\begin{align*}
			\langle \nabla V , n_1 \rangle  = 0 = \langle \nabla V , n_2 \rangle .
		\end{align*}
		Now, we shall finally consider the directional derivative in the direction of the tangent vector to the line under consideration $v=(1,0,0)$
		\begin{align*}
			\langle \nabla V , v \rangle & = \frac{\partial V}{\partial x_1} \\
			& = \frac{\partial V_u}{\partial x_1} + \frac{\partial V_d}{\partial x_1} \\
			& = - \sum_{\ell=1}^N \left( \frac{(x_1-R\cos(\ell \alpha + \frac{\alpha}{4}))}{|x-A_\ell|^3} + \frac{(x_1-R\cos(-\ell \alpha - \frac{\alpha}{4}))}{|x-B_\ell|^3} \right).
		\end{align*}
		Furthermore, along the line $x_2=0=x_3$, we have
		\begin{align*}
			\langle \nabla V , v \rangle & =  - \sum_{\ell=1}^N \left( \frac{(x_1-R\cos(\ell \alpha + \frac{\alpha}{4}))}{|x-A_\ell|^3} + \frac{(x_1-R\cos(-\ell \alpha - \frac{\alpha}{4}))}{|x-B_\ell|^3} \right) \\
			& =- 2\sum_{\ell=1}^N  \frac{(x_1-R\cos(\ell \alpha + \frac{\alpha}{4}))}{|x-A_\ell|^3} 
		\end{align*}
		and inserting $x_1=tR \cos(\frac{\alpha}{4})$ gives
		\begin{align*}
			\langle \nabla V , v \rangle & = - \frac{2}{R^2} \sum_{\ell=1}^N  \frac{ t \cos(\frac{\alpha}{4}) - \cos(\ell \alpha + \frac{\alpha}{4}) }{\left( (t \cos(\frac{\alpha}{4}) - \cos(\ell \alpha + \frac{\alpha}{4}))^2 + (\sin(\ell\alpha + \frac{\alpha}{4}) )^2 + \frac{h^2}{4}  \right)^\frac{3}{2}} .
		\end{align*}
		Now, notice that $\cos(\frac{\alpha}{4}) \geq \cos(\ell \alpha + \frac{\alpha}{4})$ with strict inequality for $\ell \neq N$, which shows that for $t=1$ we have $\langle \nabla V , v \rangle <0$. On the other hand, for evaluating this quantity near $t=0$, we compute its Taylor series at $t=0$
		\begin{align*}
			\langle \nabla V , v \rangle & =  \frac{16}{R^2} \frac{1}{\left( (4R^2 + h^2 \right)^\frac{3}{2}} \sum_{\ell=1}^N  \cos(\ell \alpha + \frac{\alpha}{4}) \\
			& \ \ \ \ + \frac{16 t}{R^2} \frac{\cos(\frac{\alpha}{4})}{(4R^2+h^2)^{\frac{5}{2}}} \sum_{\ell=1}^N \left( 12R^2 \cos(\ell\alpha + \frac{\alpha}{4})^2 - 4R^2 -h^2 \right) + O(t^2).
		\end{align*}
		Now, using the facts that
		$$\sum_{\ell=1}^N  \cos(\ell \alpha + \frac{\alpha}{4}) = 0 , \ \ \sum_{\ell=1}^N  \cos(\ell \alpha + \frac{\alpha}{4})^2 = \frac{N}{2},$$
		we find that 
		\begin{align*}
			\langle \nabla V , v \rangle & =  \frac{16 t}{R^2} \frac{\cos(\frac{\alpha}{4})}{(4R^2+h^2)^{\frac{5}{2}}}  \left( 6N R^2  - 4NR^2 -Nh^2 \right) + O(t^2) \\
			& = \frac{16N t}{R^2} \frac{\cos(\frac{\alpha}{4})}{(4R^2+h^2)^{\frac{5}{2}}}  \left( 2 R^2 - h^2 \right) + O(t^2) .
		\end{align*}
		Hence, if $h<\sqrt{2}h$ we have that along this line $\langle \nabla V , v \rangle$ is positive for small positive $t>0$. Combining this with the previously proven fact that it is negative for $t=1$ we find, from the intermediate value theorem, that $\langle \nabla V , v \rangle$ has a zero at some $t \in (0,1)$. This finishes the proof that $\nabla V$ as a zero along the segment connecting $0$ to $\frac{A_N+B_N}{2}$.
	\end{proof}
	
	For $h = \sqrt{2}R$, the complexity of the degenerate equilibrium at the origin is encapsulated by the Taylor expansion of the potential at the origin in homogeneous harmonic polynomials. 
	As we shall see, not only the degree-$2$ terms vanish,\footnote{We already knew this from the vanishing of the Hessian in the proof of Theorem \ref{thm:center_of_prism}.} but also the degree-$3$ terms do.
	\begin{lemma}
		\label{lem:Taylor_expansion_square}
		Let $V \colon \R^3 \to \R$ be the electrostatic potential generated by unit point charges placed at the vertices of the square anti-prism as specified at the beginning of this subsection.
		For $h=\sqrt{2}R$, its Taylor series at the origin is given by
		\begin{align}
			V(x_1, x_2, x_3) &= \tfrac {8 \sqrt {6}}{3} \left( 1  - \tfrac{7}{108} (x_1^4 + x_2^4) - \tfrac{14}{81} x_3^4 - \tfrac{7}{54} x_1^2x_2^2 + \tfrac{14}{27} x_3^2 (x_1^2 + x_2^2) \right) + \ldots ,
		\end{align}
		with terms of order exceeding $4$ not shown.
	\end{lemma}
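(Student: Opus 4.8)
\medskip
\noindent\textbf{Proof proposal.}
Taking $R=1$ (the general case reduces to this by the scaling identity $V_R(y)=R^{-1}V_1(y/R)$ for the potential of the anti-prism of radius $R$ and relative height $\sqrt2$), we have $h=\sqrt2$ and every vertex lies at distance $\sqrt{1+h^2/4}=\sqrt{3/2}$ from the origin. Each summand $1/\norm{x-A_\ell}$ is real-analytic near $0$ (because $A_\ell\neq 0$), hence so is $V$; and since $V$ is harmonic, its Taylor expansion at the origin is a sum $V=\sum_{k\ge 0}H_k$ of homogeneous harmonic polynomials $H_k$ of degree $k$. The plan is to pin down $H_0,\dots,H_4$, all higher-order terms being absorbed into the ``$+\ldots$''.

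The first step is to exploit the symmetry group $G$ of the square anti-prism, which acts on $\R^3$ by orthogonal maps fixing the origin and satisfies $V\circ g=V$ for all $g\in G$, so that each $H_k$ is $G$-invariant. Two elements suffice: the quarter-turn $\rho$ about the vertical axis, and the improper symmetry $\tau$ obtained by reflecting in the horizontal midplane and then rotating by an eighth-turn about the axis (so $\tau^2=\rho$). Decomposing into eigenspaces of $\rho$, the only homogeneous harmonic polynomials of degree $\le 4$ fixed by $\rho$ are the zonal ones, together with, in degree $4$, the two sectoral harmonics $\mathrm{Re}(x_1+\ii x_2)^4$ and $\mathrm{Im}(x_1+\ii x_2)^4$; and under $\tau$ the odd-degree zonal harmonics and both degree-$4$ sectoral harmonics change sign, hence drop out. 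This forces $H_1=H_3=0$,
\begin{align}
H_2 &= c_2\,(2x_3^2-x_1^2-x_2^2), \\
H_4 &= c_4\,\bigl(3(x_1^2+x_2^2)^2-24x_3^2(x_1^2+x_2^2)+8x_3^4\bigr),
\end{align}
for constants $c_2,c_4$, while $H_0=V(0)=8/\sqrt{3/2}=\tfrac{8\sqrt6}{3}$.

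Next, $c_2=0$: this is exactly Theorem~\ref{thm:center_of_prism} at the critical relative height $h/R=\sqrt2$, where the Hessian of $V$ at the centre vanishes. To determine $c_4$ I would restrict $V$ to the vertical axis: the four top vertices are at distance $\sqrt{1+(t-1/\sqrt2)^2}$ from $(0,0,t)$ and the four bottom vertices at distance $\sqrt{1+(t+1/\sqrt2)^2}$, so
\begin{align}
V(0,0,t) &= \frac{4}{\sqrt{1+(t-1/\sqrt2)^2}}+\frac{4}{\sqrt{1+(t+1/\sqrt2)^2}} .
\end{align}
Expanding this even function through order $t^4$ is an elementary one-variable power-series computation (the vanishing of its $t^2$ coefficient serving as a consistency check on $c_2=0$); the coefficient of $t^4$ equals the coefficient of $x_3^4$ in $H_4$, namely $8c_4$, and matching yields $c_4$, hence the displayed formula once $(x_1^2+x_2^2)^2$ is expanded.

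The conceptual content sits entirely in the symmetry step --- correctly identifying which harmonic polynomials of each degree $\le 4$ survive $\rho$ and $\tau$; with that in hand only the single axis expansion remains, which is routine. A brute-force alternative bypasses the symmetry and Taylor-expands all eight summands $1/\norm{x-A_\ell}$ directly to fourth order; that is conceptually trivial but bookkeeping-heavy, so in either route I would regard organizing that computation (equivalently, the representation-theoretic reduction) as the main practical obstacle.
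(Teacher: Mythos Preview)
The paper states this lemma without proof, treating it as the outcome of a direct computation; the surrounding text merely remarks that the expansion is in homogeneous harmonic polynomials and that the degree-$2$ and degree-$3$ parts vanish. Your argument is correct and is in fact more structured than a brute-force expansion of all eight summands: the symmetry reduction via $\rho$ and the rotoreflection $\tau$ (with $\tau^2=\rho$) correctly singles out the zonal harmonics as the only $G$-invariant harmonic polynomials in degrees $\le 4$, since the degree-$4$ sectoral harmonics pick up a factor $e^{\pm i\pi}=-1$ under the $\pi/4$ rotation in $\tau$ and hence drop out. With $H_2$ killed by Theorem~\ref{thm:center_of_prism}, only the single constant $c_4$ remains, and your restriction to the vertical axis determines it. The brute-force route you mention at the end is presumably what the authors did; your symmetry argument buys a cleaner derivation and explains \emph{why} no non-zonal terms appear through order~$4$, rather than merely observing that they cancel.
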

	
	Again, we compute the Taylor expansion at the origin as a way to understand the degenerate critical point there for $h=\sqrt{2}R$.
	\begin{lemma}
		\label{lem:Taylor_expansion_triangle}
		Let $V \colon \R^3 \to \R$ be the electrostatic potential generated by unit point charges placed at the vertices of the triangular anti-prism as specified at the beginning of this subsection.
		For $h=\sqrt{2}R$, its Taylor series at the origin is given by
		\begin{align}
			\sqrt{3} - \tfrac{7\sqrt{3}}{216}\left( \tfrac{x_1^4}{8} + \tfrac{x_2^4}{8} + \tfrac{x_3^2}{3} + \tfrac{x_1^2x_2^2}{4} -x_1^2x_3^2 -x_2^2x_3^2 + \tfrac{5\sqrt{2}}{2} x_2x_3 ( x_2^2 - x_1^2 ) \right) + \ldots ,
		\end{align}
		with terms of order exceeding $4$ not shown.
	\end{lemma}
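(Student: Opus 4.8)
The plan is to expand each term $1/\norm{x-P}$ of $V$ around the origin, collect by homogeneous degree, and use the symmetries of the triangular anti-prism to kill the low-order terms, so that the only genuine computation is the degree-$4$ piece. All six vertices lie at the common distance $\rho=\sqrt{R^2+h^2/4}$ from the center, so the classical multipole expansion gives, for $\norm{x}<\rho$,
\[
\frac{1}{\norm{x-P}} = \frac{1}{\rho}\sum_{k\ge 0}\Bigl(\frac{\norm{x}}{\rho}\Bigr)^{k} P_k\!\Bigl(\frac{\langle x,P\rangle}{\norm{x}\,\rho}\Bigr),
\]
where $P_k$ is the $k$-th Legendre polynomial and each summand is a harmonic homogeneous polynomial of degree $k$ in $x$. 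Summing over the six vertices, the degree-$k$ part of $V$ at the origin is $V^{(k)}(x)=\rho^{-k-1}\norm{x}^{k}\sum_{P}P_k(\langle x,P\rangle/(\norm{x}\rho))$, so the whole problem reduces to evaluating the power sums $\sum_{P}\langle x,P\rangle^{j}$ for $j\le 4$.

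First I would dispose of the terms of degree $0$, $1$, $2$, $3$. The constant is $V^{(0)}=6/\rho$, which gives the leading term in the statement. The triangular anti-prism is centrally symmetric: a short check shows $-A_\ell=B_m$ for the index $m\equiv 1-\ell\pmod 3$, for every $h$ and $R$. Hence $V(x)=V(-x)$ near the origin, so every odd-degree homogeneous part vanishes; in particular $V^{(1)}=V^{(3)}=0$. For the quadratic part, using the $C_3$-symmetry of each regular triangle one gets $\sum_{P}\langle x,P\rangle^{2}=3R^{2}(x_1^2+x_2^2)+\tfrac32 h^{2}x_3^{2}$, which equals $3R^{2}\norm{x}^{2}$ exactly when $h=\sqrt2 R$; combined with harmonicity this forces $V^{(2)}=0$. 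Alternatively this is Theorem~\ref{thm:center_of_prism} applied to the triangular anti-prism, an $\tfrac{\alpha}{2}$-rotated prism with $N=3$, whose proof shows the Hessian at the center vanishes precisely at relative height $\sqrt2$.

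The core of the proof is the quartic term. From $P_4(t)=\tfrac18(35t^4-30t^2+3)$,
\[
V^{(4)}(x) = \frac{1}{8\rho^{5}}\Bigl(\frac{35}{\rho^{4}}\sum_{P}\langle x,P\rangle^{4} - \frac{30\norm{x}^{2}}{\rho^{2}}\sum_{P}\langle x,P\rangle^{2} + 18\norm{x}^{4}\Bigr),
\]
so after substituting $\rho^2=\tfrac32 R^2$ and the quadratic power sum already found, everything comes down to $\sum_{P}\langle x,P\rangle^{4}$. Writing each vertex as $P=(Re^{i\theta},\pm\tfrac{h}{2})$ under $\R^3\cong\C\times\R$ and expanding $(Ru_\theta\pm\tfrac{h}{2}x_3)^4$ with $u_\theta=x_1\cos\theta+x_2\sin\theta$, this becomes a sum of monomials in $x$ weighted by trigonometric sums $\sum_\ell\cos^{a}\theta_\ell\sin^{b}\theta_\ell$ over the six vertex angles $\theta_\ell=\pm(\ell\alpha+\tfrac{\alpha}{4})$ with $\alpha=\tfrac{2\pi}{3}$. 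Summed over each equilateral triangle the angular harmonics of orders $\pm1$, $\pm2$, $\pm4$ all cancel, leaving only the order-$0$ content (which produces the $(x_1^2+x_2^2)^2$, $x_3^2(x_1^2+x_2^2)$, and $x_3^4$ terms) and the order-$\pm3$ content of $u_\theta^3$ (which, carried by one factor of $x_3$ and the opposite signs of $x_3$ on the two triangles, produces the mixed harmonic-cubic-times-$x_3$ term). Substituting $h^2=2R^2$ and collecting gives the claimed quartic, and I would finish with a harmonicity check as a sanity test. The only real obstacle is the bookkeeping here — tracking the several monomial families and the sign pattern between the two triangles — together with the point that for $N=3$ (unlike for generic $N$) the order-$\pm3$ harmonic does \emph{not} vanish, which is exactly why the triangular anti-prism picks up the extra mixed quartic term. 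The same scheme proves Lemma~\ref{lem:Taylor_expansion_square} for the square anti-prism, except that it is not centrally symmetric and the vanishing of $V^{(1)}$ and $V^{(3)}$ must instead be deduced from its symmetry group.
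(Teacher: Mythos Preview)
The paper does not actually prove this lemma; it is stated as the outcome of a direct computation, with no argument given beyond the sentence ``Again, we compute the Taylor expansion at the origin as a way to understand the degenerate critical point there for $h=\sqrt{2}R$.'' So there is no proof in the paper to compare against, and your proposal is effectively supplying what the authors omitted.

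Your approach is correct and is the natural one: the multipole/Legendre expansion reduces the Taylor coefficients to the power sums $\sum_P\langle x,P\rangle^j$, and the dihedral/antiprismatic symmetries then do most of the work. Your observation that the triangular anti-prism is centrally symmetric (explicitly, $-A_\ell=B_{1-\ell\bmod 3}$) is a clean way to kill the odd-degree terms, and is in fact stronger than what one gets from the paper's generic $\beta$-rotated-prism arguments; note that this central symmetry is special to $N=3$ (the solid is an octahedron up to affine stretch) and genuinely fails for $N=4$, as you correctly remark at the end. The vanishing of $V^{(2)}$ at $h=\sqrt2 R$ is exactly the content of the Hessian computation in Theorem~\ref{thm:center_of_prism}, so either route is fine. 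For $V^{(4)}$ your harmonic-order bookkeeping is right: over each equilateral triangle the angular frequencies $\pm1,\pm2,\pm4$ cancel and only orders $0$ and $\pm3$ survive, the latter appearing through the $u_\theta^3\cdot x_3$ cross-term in $\langle x,P\rangle^4$ and producing the distinctive $x_2x_3(x_2^2-x_1^2)$ piece that is absent in the square case. One small caveat: the displayed formula in the statement appears to contain a typo (the $x_3^2/3$ inside the bracket should presumably be $x_3^4/3$ for the bracket to be homogeneous of degree~$4$, consistent with your proof that $V^{(2)}=0$), so when you carry out the final collection you should trust your own computation and flag the discrepancy rather than try to reproduce the printed expression literally.
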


	\section{Proofs of Propositions~\ref{prop:no_maxima} and \ref{prop:octahedron}}
	\label{app:D}
	
	We first prove that, when $p \geq 1$, the function $V_p \colon \R^3 \setminus \{A_1,A_2,\ldots,A_n\} \to \R$ defined in \eqref{eqn:Vp} has no local maxima if all charges are positive.
	
	\begin{proof}[Proof of Proposition~\ref{prop:no_maxima}]
		The case $p=1$ follows immediately from the fact that $V_1=V$ is harmonic. 
		Hence, it suffices to consider the case $p>1$. 
		For this, we show that the Laplacian satisfies $\Delta V_p > 0$.
		Since it is the trace of the Hessian and thus the sum the diagonal eigenvalues, there must be at least one positive eigenvalue, which is incompatible with the existence of a maximum.
		To carry out the computation of $\Delta V_p$, we write $r_i(x) = \dist{x}{A_i}$ and thus $V_p(x) = \sum_{i=1}^n \zeta_i^p / r_i(x)^p$.
		Let $x_1, x_2, x_3$ be the coordinates in $\R^3$ and $A_i = (a_{i1}, a_{i2}, a_{i3})$.
		Then, using $\frac{\partial r_i}{\partial x_j} (x) = \frac{x_j-a_{ij}}{r_i}$, we compute 
		\begin{align}
			\frac{\partial V_p}{\partial x_j} 
			&= -p\sum\nolimits_{i=1}^n \frac{\zeta_i^{p}}{r_i^{p+1}} \frac{\partial r_i}{\partial x_j} 
			= -p\sum\nolimits_{i=1}^n \frac{\zeta_i^{p}}{r_i^{p+2}} (x_j-a_{ij}); \\
			\frac{\partial^2 V_p}{\partial x_j^2} 
			&= -p\sum\nolimits_{i=1}^n \zeta_i^{p}\left( \frac{1}{r_i^{p+2}} - (p+2) \frac{x_j-a_{ij}}{r_i^{p+3}} \frac{\partial r_i}{\partial x_j} \right) \\
			&= -p\sum\nolimits_{i=1}^n \zeta_i^{p}\left(  \frac{1}{r_i^{p+2}} - (p+2) \frac{(x_j-a_{ij})^2}{r_i^{p+4}}  \right) \\
			&= -p\sum\nolimits_{i=1}^n \frac{\zeta_i^{p}}{r_i^{p+2}}\left( 1 - (p+2) \frac{(x_j-a_{ij})^2}{r_i^{2}}  \right).
		\end{align}
		Finally, summing over the three coordinate directions, we find that
		\begin{align}
			\Delta V_p &= \sum\nolimits_{j=1}^3 \frac{\partial^2 V_p}{\partial x_j^2}
			= -p \sum\nolimits_{i=1}^n \frac{\zeta_i^{p}}{r_i^{p+2}} \left( 3 - (p+2) \sum\nolimits_{j=1}^3 \frac{(x_j-a_{ij})^2}{r_i^{2}}  \right) \\
			&= -p \sum\nolimits_{i=1}^n \frac{\zeta_i^{p}}{r_i^{p+2}} \left( 3 - (p+2) \right)
			= p (p-1) \sum\nolimits_{i=1}^n \frac{\zeta_i^{p}}{r_i^{p+2}} ,
		\end{align}
		which is positive if all $\zeta_i$ are positive.
	\end{proof}
	
	\begin{remark}
		\label{rem:no_maxima_no_minima}
		If one trades the assumption that $p \geq 1$ for $p \leq 1$ we can actually conclude, from the same proof, that there are no maxima.
		On the other hand, if one continues to assume that $p \geq 1$, but that instead that all $\zeta_i^p$ are negative, then the same proof shows that there are no minima when $p \geq 1$.
		These conclusions are particular cases of the following more general statement which can be inferred from the same proof. Suppose that $p>0$, then:
		\begin{itemize}
			\item if $(p-1)\zeta_i^p >0$ for all $i \in \{ 1, 2, \ldots , n \}$, then there are no local maxima;
			\item if $p=1$, then there are neither local minima nor minima;
			\item if $(p-1)\zeta_i^p <0$ for all $i \in \{ 1, 2, \ldots , n \}$, then there are no local minima.
		\end{itemize}
	\end{remark}
	
	Next, we shall prove that for $p>1$ the potentials $V_p$ can have minima when all the charges are positive. 
	This will be done by exploring the example given by placing the charges at the vertices of an octahedron. 
	\begin{proof}[Proof of Proposition~\ref{prop:octahedron}]
		With no loss of generality, we can choose the vertices of the octahedron at the points $(\pm 1, 0, 0)$, $(0,\pm 1,0)$, $(0,0,\pm 1)$, in which case the center of the octahedron coincides with the origin, $(0,0,0)$. 
		Then, using the computations carried out in the proof of Proposition~\ref{prop:no_maxima}, we find that
		\begin{align}
			\frac{\partial^2 V_p}{\partial x_i \partial x_j} (0,0,0) &= 0, \\
			\frac{\partial^2 V_p}{\partial x_i^2}(0,0,0) &= 2p (p-1),
		\end{align}
		for $i \neq j$ and $i \in \{ 1,2,3 \}$, respectively.
		Hence, the Hessian of $V_p$ at the origin is positive definite whenever $p>1$ and the result follows.
	\end{proof}


\Skip{
	\section{Proof of Theorem~\ref{thm:1D_slices}}
	\label{app:E}
	
	We start by recollecting some basic definitions from complex analysis. 
	Denote the complex plane by $\C$, an open subset by $\Omega \subseteq \C$, and write $\ii = \sqrt{-1}$. 
	Recall that a complex function $f \colon \Omega \to \C$ is \emph{holomorphic} if for all $z_0 \in \C$, there is an open neighborhood of $z_0$ on which the function $f$ can be written as a convergent Taylor series centered at $z_0$, i.e.
	\begin{align}
		f(z) &= \sum\nolimits_{n=0}^{\infty} a_n (z-z_0)^n.
	\end{align}
	Furthermore, if $\Omega=\C$, i.e.\ $f \colon \C \to \C$ is holomorphic, then it can be written as a globally convergent Taylor series centered at any $z_0 \in \C$. In particular, if $f$ has a zero at $z_0$, then $a_0=0$ and there is $d \in \N$ such that
	\begin{align}
		f(z) &= \sum\nolimits_{n=d}^{\infty} a_n (z-z_0)^n,
	\end{align}
	with $a_d \neq 0$, in which case $d$ is called the \emph{order} or the \emph{degree} of the zero.
	A complex-valued function is said to be \emph{meromorphic} if it is holomorphic on the complement of a set of isolated points, $\{ z_i \}_{i \in I}$, in an open subset $\Omega \subseteq \C$. Then, in neighborhoods around points $z_0 \in \Omega \setminus \{ z_i \}_{i \in I}$, we can write $f$ as a Taylor series. On the other hand, in a punctured neighbourhood around $z \in \lbrace z_i \rbrace_{i \in I}$, we can write $f$ as a Laurent series, i.e.
	\begin{align}
		f(z) &= \sum\nolimits_{n=-\infty}^{\infty} a_n (z-z_0)^n.
	\end{align}
	Such a $z_0$ is called a \emph{pole} if the portion of the sum with negative index, $n \in \Z$, is finite and an essential singularity otherwise. 
	In the case of a pole, we can write 
	\begin{align}
		f(z) &= \sum\nolimits_{n=-d}^{\infty} a_n (z-z_0)^n,
	\end{align}
	with $a_{-d} \neq 0$ and in this case, $d$ is called the order (or degree) of the pole.
	A particularly interesting result that we shall need is a consequence of the Residue Theorem, called \emph{Cauchy's argument principle} (Theorem 18, Page 152 in \cite{Ahl66}). 
	For $\gamma \subseteq \C$ a Jordan curve and $f$ a meromorphic function on the complex plane, it states that
	\begin{align}
		\label{eq:Cauchy argument}
		2\pi \ii (Z - P) &= \int_{\gamma} \frac{f'(z)}{f(z)} \diff z ,
	\end{align}
	where $Z$ and $P$ respectively denote the total number of zeros and poles of $f$, weighted by their order, within the bounded region enclosed by $\gamma$.
	We will now recall the statement of Theorem~\ref{thm:1D_slices}.
	Let $p > 0$ be an even integer, and
	\begin{align}
		V_p(x) &= \sum\nolimits_{i=1}^n \frac{\zeta_i^p}{\| x-A_i \|^p},
	\end{align}
	as defined in Section~\ref{sec:3}, Equation~\eqref{eqn:Vp}.
	The claim is that the number of equilibria of the restriction of $V_p$ to any line in $\R^3$ is at most $p(n-1)+2n-1$.
	In lieu of requiring that all $\xi_i$ are positive, we will work with the weaker assumption that the sum of the $\zeta_i^p$ does not vanish.
	\begin{proof}
		It suffices to prove the claim for the $z$-axis.
		Then, we write $x=(w,z) \in \R^2 \times \R$ and $A_i=(w_i,z_i)\in \R^2 \times \R$ for $1 \leq i \leq n$. 
		Along the $z$-axis, where $w=0$, we find
		\begin{align}
			\frac{\partial V_p}{\partial z} &= - 2q \sum\nolimits_{i=1}^n  \frac{\zeta_i^{2q}(z-z_i)}{\left( |w_i|^2 + (z-z_i)^2 \right)^{q+1}} ,
		\end{align}
		in which $q=\sfrac{p}{2}$ is a positive integer since $p > 0$ is even. 
		The analytic continuation of the function on the right-hand side extends to a meromorphic function on the complex plane, $\C$, denoted
		\begin{align}
			F_q(z) &= - 2q \sum\nolimits_{i=1}^n  \frac{\zeta_i^{2q}(z-z_i)}{\left( |w_i|^2 + (z-z_i)^2 \right)^{q+1}} .
		\end{align}
		Its zeros can be counted, with multiplicity, by making use of Cauchy's argument principle which we described in \eqref{eq:Cauchy argument}:
		\begin{align}
			2\pi \ii (Z_{q} - P_{q}) = \lim_{r \to \infty} \int_{C_r} \frac{F_q'(z)}{F_q(z)} \diff z ,
		\end{align}
		where $Z_{q}$ and $P_{q}$ denote the total number of zeros and poles of $F_q$, respectively, both counted with multiplicity, and $C_r$ denotes a circle of radius $r$. 
		Hence, we need the total number of poles, counted with multiplicity, as well as the contour integral on the right-hand side.
		
		\smallskip
		We start with the poles, which occur precisely at the locations along which the denominators vanish. Furthermore, the denominators can be factored as
		\begin{align}
			\left( |w_i|^2 + (z-z_i)^2 \right)^{q+1} &= \left( z- (z_i + \ii\, |w_i| )  \right)^{q+1} \cdot \left( z- (z_i -\ii\, |w_i| )  \right)^{q+1} .
		\end{align}
		Hence, the $i$-th term in the sum contributes two poles of order $q+1$ each, located at $z = z_i \pm \ii\, |w_i|$. This implies $P_{q} = 2n(q+1)$. 
		To compute the contour integral, we first obtain asymptotic expressions in $|z|$ for both $F_q'(z)$ and $F_q(z)$. 
		We start by computing
		\begin{align}
			F_q'(z) &= -2q \sum\nolimits_{i=1}^n \left[ \frac{\zeta_i^{2q}}{\left( |w_i|^2 + (z-z_i)^2 \right)^{q+1}} - (2q+2) \frac{\zeta_i^{2q} (z-z_i)^2}{\left( |w_i|^2 + (z-z_i)^2 \right)^{q+2}} \right] \\
			&= -2q \sum\nolimits_{i=1}^n  \frac{\zeta_i^{2q}}{\left( |w_i|^2 + (z-z_i)^2 \right)^{q+1}} \left[ 1 - (2q+2) \frac{ (z-z_i)^2}{ |w_i|^2 + (z-z_i)^2 } \right] \\
			&= -2q \sum\nolimits_{i=1}^n  \frac{\zeta_i^{2q}}{\left( |w_i|^2 + (z-z_i)^2 \right)^{q+2}} \left[ |w_i|^2 - (1+2q)  (z-z_i)^2 \right] .
		\end{align}
		Then, in terms of $z-z_i$, we can write
		\begin{align}
			F_q(z) &= -2q \sum\nolimits_{i=1}^n  \frac{\zeta_i^{2q}}{(z-z_i)^{2q+1}} \left[ 1 + \frac{|w_i|^2}{(z-z_i)^2} \right]^{-(1+q)} , \\
			F_q'(z) &= -2q \sum\nolimits_{i=1}^n  \frac{\zeta_i^{2q}}{(z-z_i)^{2q+2}} \frac{ - (1+2q) + \frac{|w_i|^2}{(z-z_i)^2}}{\left( 1 + \frac{|w_i|^2}{(z-z_i)^2} \right)^{q+2}} .
		\end{align}
		Using $z = re^{2\pi \ii \theta}$, this leads to the following asymptotic expansions in terms of $r=|z|$:
		\begin{align}
			F_q(z) &= -2q \sum\nolimits_{i=1}^n  \frac{\zeta_i^{2q}}{r^{2q+1}e^{(2q+1) \ii \theta}} \left[ 1 + O(r^{-1}) \right] ,\\
			F_q'(z) &= -2q \sum\nolimits_{i=1}^n  \frac{\zeta_i^{2q}}{r^{2q+2} e^{ (2q+2) \ii \theta}}  \left[ - (1+2q) + O(r^{-2}) +\right] .
		\end{align}
		Hence, if $\sum_{i=1}^n \zeta_i^{2q} \neq 0$, we find that
		\begin{align}
			\int_{C_r} \frac{F_q'(z)}{F_q(z)} \diff z  = - \int_0^{2\pi} \frac{ (1+2q)  \sum_{i=1}^n  \frac{\zeta_i^{2q}}{r^{2q+2}} }{ \sum_{i=1}^n  \frac{\zeta_i^{2q}}{r^{2q+1}} } \frac{1}{e^{\ii \theta}}  r e^{ \ii \theta}  \ii \diff \theta  = - 2\pi \ii (1+2q) .
		\end{align}
		Inserting this and $P_{q}=2n(q+1)$ into Cauchy's argument principle \eqref{eq:Cauchy argument}, leads to $- 2\pi \ii (1+2q) = 2\pi \ii \left( Z_{q} - 2n(q+1) \right)$, which can be rearranged to read as
		\begin{align}
			Z_{q} &= 2n(q+1)- (1+2q) = 2q (n-1) + (2n-1).
		\end{align}
		In terms of $p=2q$, this is the upper bound in the statement. The statement follows by noticing that the number of zeros of $\partial_z V_p$ for $z \in \R$ is at most those of its analytic continuation to the complex plane.
	\end{proof}
	
	\begin{remark}
		\label{rem:equilibria_on_line}
		The number of equilibria of $V_p$ that lie on a line is at most that of the number of equilibria of the restriction of $V_p$ to said line. 
		Hence, Theorem~\ref{thm:1D_slices} also gives an upper bound on the total number of equilibria of $V_p$ that lie on a line.
	\end{remark}
	
	\begin{remark}
		\label{rem:vanishing_sum_of_charges}
		If $\sum_{i=1}^n \zeta_i^p =0$ but $\sum_{i=1}^n z_i \zeta_i^p \neq 0$, then a similar but more involved computation of the asymptotic expansions for $F_q$ and $F_q'$ leads to the conclusion that
		\begin{align}
			\int_{C_r} \frac{F_q'(z)}{F_q(z)} &= - 2 \pi \ii (2q+2),
		\end{align}
		where $p=2q$ as in the proof of Theorem~\ref{thm:1D_slices}. 
		Then, the same argument leads to the upper bound $(2q+2)(n-1)=(p+2)(n-1)$.
	\end{remark}
	
	\begin{remark}
		\label{rem:one_point_charge} 
		For $n=1$, the above bound on the number of equilibria of $V_p$ restricted to a line is $1$, which is tight.
		Indeed, given any line that avoids $A_1$, the restriction of $V_p$ has a critical point at the point closest to $A_1$.
	\end{remark}
} 

\end{document}